%
%
%
%
%
%
%
\documentclass[%
 twocolumn,
 superscriptaddress,
 nofootinbib,
 amsmath,amssymb,
 aps,
 showkeys,
 a4paper,
]{revtex4-1}

\usepackage[utf8]{inputenc}
\usepackage[english]{babel} 
\usepackage{graphicx}
\usepackage{dcolumn}
\usepackage{bm}



\usepackage{mathtools}

\usepackage{fancybox}
\usepackage{enumitem}
\usepackage{natbib}
\usepackage{footmisc}
\usepackage[bookmarks, colorlinks=true, urlcolor=blue, linkcolor=black, citecolor=blue]{hyperref}
\setlength\parindent{0pt}
\usepackage{xcolor}
\usepackage{float}
\usepackage{physics}
\interfootnotelinepenalty=10000
\usepackage[paperwidth=210mm,paperheight=297mm,centering,hmargin=1.5cm,vmargin=3cm]{geometry}
\usepackage{bbm}
\usepackage{textcomp}
\usepackage[caption=false]{subfig}
\usepackage[export]{adjustbox}
\usepackage{ulem}

\usepackage{algorithm}
\usepackage{algorithmicx}
\usepackage{algpseudocode}
\usepackage{amsthm}
\newtheorem{theorem}{Theorem}

\newtheorem{lemma}{Lemma}
\allowdisplaybreaks

\begin{document}

\normalem


\title{Quantum enhancements for deep reinforcement learning in large spaces}

\author{Sofiene Jerbi}
\affiliation{Institute for Theoretical Physics, University of Innsbruck, Technikerstr. 21a, A-6020 Innsbruck, Austria}
\author{Lea M. Trenkwalder}
\affiliation{Institute for Theoretical Physics, University of Innsbruck, Technikerstr. 21a, A-6020 Innsbruck, Austria}
\author{Hendrik Poulsen Nautrup}
\affiliation{Institute for Theoretical Physics, University of Innsbruck, Technikerstr. 21a, A-6020 Innsbruck, Austria}
\author{Hans J. Briegel}
\affiliation{Institute for Theoretical Physics, University of Innsbruck, Technikerstr. 21a, A-6020 Innsbruck, Austria}
\affiliation{Department of Philosophy, University of Konstanz, Fach 17, 78457 Konstanz, Germany}
\author{Vedran Dunjko}
\affiliation{LIACS, Leiden University, Niels Bohrweg 1, 2333 CA Leiden, The Netherlands}

 
\date{\today}

\begin{abstract}
\noindent In the past decade, the field of quantum machine learning has drawn significant attention due to the prospect of bringing genuine computational advantages to now widespread algorithmic methods. However, not all domains of machine learning have benefited equally from quantum enhancements. Notably, deep learning and reinforcement learning, despite their tremendous success in the classical domain, both individually and combined, remain relatively unaddressed by the quantum community. Arguably, one reason behind this is the systematic use in these domains of models and methods without prominent computational bottlenecks, leaving little room for quantum improvements. In this work, we study the state-of-the-art neural-network approaches for reinforcement learning with quantum enhancements in mind. We demonstrate the substantial learning advantage that models with a sampling bottleneck can provide over conventional neural network architectures in complex learning environments. These so-called energy-based models, like deep energy-based reinforcement learning, and deep projective simulation that we also introduce in this work, effectively allow to trade off learning performance for efficiency of computation. To alleviate the additional computational costs, we propose to leverage future and near-term quantum algorithms, resulting in overall more advantageous learning algorithms. This is achieved using cutting-edge and new quantum computing machinery to speed-up classical sampling methods and by employing generalized models to gain an additional quantum advantage.
\end{abstract}

\maketitle


\section{\label{sec:intro}Introduction}

Over the course of the last years, machine learning (ML) has emerged not just as a promising domain of application for quantum computing \cite{biamonte17,dunjko18,ciliberto18,schuld15,adcock15}, but is often listed as one of its potential ``killer applications" \cite{schuld18}.
For instance, in the contexts of classification and big data problems, genuine quantum effects have been exploited to speed-up computational bottlenecks stemming from either genuinely hard computational tasks \cite{neven08,lloyd16} or from data volume and dimensionality \cite{rebentrost14,lloyd14,wiebe12}.
More recently, new types of machine learning algorithms for classification and generative modeling have been put forward to leverage the potential of noisy intermediate-scale quantum (NISQ) \cite{preskill18} devices.
In this flavor of QML, both quantum generalizations of classical methods \cite{amin18}, and genuinely quantum models \cite{schuld20} have also been proposed.

\begin{figure}
	\centering
	\includegraphics[width=1.0\columnwidth]{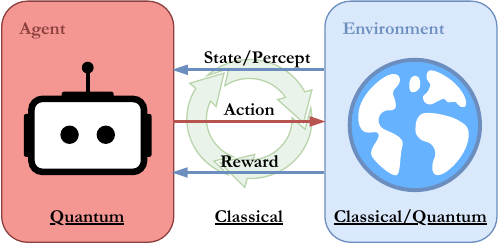}
	\caption{\textbf{Quantum-enhanced reinforcement learning.} Agent and environment maintain a cyclic interaction consisting of \emph{states} (or percepts) that the agent uses to deliberate on its \emph{actions} according to its policy, and perceives feedback on its behavior in the form of \emph{rewards}. In the quantum-enhanced framework, the agent is granted access to a, generally universal, quantum computer to enhance its deliberation process.}\vspace{-1.25\baselineskip}
	\label{fig:QC-RL_framework}
\end{figure}

However, the promises of quantum enhancements have not had as profound an impact in all ML domains. In particular, perhaps the most celebrated area of ML based on deep feed-forward neural networks (NNs) has essentially remained unaddressed from a quantum perspective, besides a few exceptions \cite{allcock18,kerenidis19}. Similarly, the field of reinforcement learning (RL), which focuses on an interactive mode of learning (see Fig.\ \ref{fig:QC-RL_framework}) and has recently demonstrated its potential with the AlphaGo achievements \cite{silver17,silver18}, has only seen speed-ups for special models \cite{paparo14,levit17,neukart18} and in settings where learning environments are quantum in nature \cite{dunjko17b,dunjko16,dunjko15b,cornelissen18,ronagh19,dunjko17}. One of the main reasons for this oversight is simple. Since RL and deep NNs are predominantly studied as methods for real-world applications, emphasis has consistently been placed on variants that favor general applicability and can be run efficiently on classical computers, even with large datasets. Consequently, the standardized algorithms are already very efficient in a complexity-theoretic sense, leaving little room for quantum improvements. For example, the evaluation of feed-forward NNs and their gradients has a negligible contribution to their overall training cost. An exception applies to training algorithms, where quantum methods need to rely on special settings including QRAM architectures \cite{giovannetti08,allcock18}, or quantum environments \cite{dunjko17b}. However, these settings significantly limit real-world applicability and are incompatible with NISQ constraints.\\

\textit{Contributions. }In this work, we study the cutting-edge NN-based approaches for RL in the context of learning environments with large state and action spaces. We demonstrate that the conventional paradigm, where the computations of an agent's decisions using a NN are simple and efficient, may have substantial learning disadvantages in complex environments. Further, we show that rather different models, which additionally require approximate sampling from distributions specified by the NN, can do significantly better. These so-called energy-based models \cite{lecun06} are however computationally more challenging, introducing a trade-off between learning performance and efficiency. We show how to mitigate these computational costs using existing and new quantum methods. In particular, we apply a number of known quantum algorithmic methods to ameliorate computationally expensive aspects of training our models. Additionally, by combining results for efficient classical Gibbs sampling on a quantum computer with previous methods for preparing Gibbs states of quantum Hamiltonians, we obtain the currently most efficient method for the quantum Gibbs sampling problem. This is critical in our RL algorithms since we also describe how to use quantum generalizations of energy-based models, e.g., so-called quantum Boltzmann machines \cite{amin18,kappen20}, for state-of-the-art RL methods. Our work provides a new set of results, further highlighting the substantial benefits that may arise from combining machine learning with quantum computation, especially in the near term.\\

The paper is organized as follows: Sec.\ \ref{sec:background} serves as an introduction to the well-known problem of large state and action spaces in reinforcement learning and the (in)adequacy of different methods to deal with this issue. Sec.\ \ref{sec:DEB-RL} presents a step-by-step construction of the deep energy-based networks we consider. We evaluate their performance on benchmarking tasks and compare them to standard NN architectures used in deep RL in Sec.\ \ref{sec:numerical-simulations}. Sec.\ \ref{sec:hybrid-EBRL} introduces a hybrid quantum-classical algorithm to train energy-based models for RL and describes the quantum subroutines it relies on. In the last Sec.\ \ref{sec:conclusion} we discuss our results and present an outlook.\break

\vspace{-1.5\baselineskip}
\section{Background \& preliminaries\label{sec:background}}

In this section, we introduce the well-studied case of reinforcement learning environments with large state and action spaces and discuss modern approaches \cite{mnih15,sallans04} that attempt to deal with such scenarios by contrasting them with traditional RL methods \cite{sutton98,briegel12}.

\subsection{Tabular methods}

Reinforcement learning (RL) is a framework for the design of learning agents in interactive environments (see Fig.\ \ref{fig:QC-RL_framework}). Typically, the goal of RL is to learn a policy $\pi(\bm{a}|\bm{s})$ that specifies an agent's actions $\bm{a}$ given a perceived state $\bm{s}$ towards maximizing its future rewards in a certain environment. There exist various approaches to learn optimal policies. However, in this work, we focus on two particular methods, namely value-based methods and projective simulation, detailed shortly, that we treat within a unifying framework. This framework allows us to combine and take advantage of their respective techniques and previous work. We give here a short description of the two methods of interest, further detailed in Appendix \ref{sec:intro-RL}:
\begin{itemize}[leftmargin=3mm]
	\item Methods deriving from dynamic programming focus on a so-called \emph{action-value function} $Q(\bm{s},\bm{a})$ to link policy and expected rewards. These \emph{value-based methods} (VBMs) \cite{sutton98} include, for instance, $Q$-learning and SARSA.
	\item Projective simulation (PS) \cite{briegel12} is a physics-inspired model for agency\footnote{Note that, in standard VBMs, states and actions are part of a mathematical object describing the agent's task environment: a Markov Decision Process. In PS, although similar to VBMs in its two-layer form, the perspective is different and more agent-centered: representations of states and actions have a separate status as entities in the agent's memory, they can be created, erased, or modified as part of the agent's dynamics of memory processing and deliberation.} that relies on weighted connections in a directed graph to reflect experienced rewards and derive a policy. These weights are referred to as $h$-values. In its simplest (i.e., two-layered) form, the underlying weighted graph only contains directed connections from states to actions, resulting in state-action $h$-values $h(\bm{s},\bm{a})$.
\end{itemize}

To unify the treatment of these two approaches, we define the notion of \emph{merit functions} $M$, of which both auxiliary functions $Q$ and $h$ are instances. These are real-valued functions defined on the state-action space that assign a certain value of merit to all actions in any given state. A standard procedure to learn merit functions and their associated policies is the so-called \emph{tabular} method, which is shared by both VBMs and two-layered PS: scalar values $\Hat{M}(\bm{s},\bm{a})$ are stored for each state-action pair, initialized such as to produce a random behavior. These can be normalized into a policy, e.g., by using a Gibbs/Boltzmann distribution, also called ``softmax" distribution in ML:
\begin{equation}\label{eq:energy-based_policy}
	\pi(\bm{a}|\bm{s}) = \frac{e^{\beta \Hat{M}(\bm{s},\bm{a})}}{\sum_{\bm{a'}} e^{\beta \Hat{M}(\bm{s},\bm{a'})}}
\end{equation}
where $\beta>0$ is an inverse temperature hyperparameter. \linebreak An update rule given by VBMs or PS allows one to iteratively update the values $\Hat{M}(\bm{s},\bm{a})$ associated with \emph{experienced} state-action pairs. This update depends on the rewards received from the environment after following this policy and increases the estimated merit value of rewarded state-action pairs such that the agent is more likely to encounter them in the future. For instance, in the case of $Q$-learning (which is a VBM):
\begin{multline}
	\Hat{Q}(\bm{s^{(t)}},\bm{a^{(t)}}) \leftarrow (1-\alpha) \Hat{Q}(\bm{s^{(t)}},\bm{a^{(t)}}) \\+ \alpha [r^{(t+1)} + \gamma\max_{\bm{a}} \Hat{Q}(\bm{s^{(t+1)}},\bm{a})]
\end{multline}
where $\bm{a^{(t)}}$ is the action performed by the agent in the state $\bm{s^{(t)}}$ at timestep $t$, which lead it to the state $\bm{s^{(t+1)}}$ and associated reward $r^{(t+1)}$; $\alpha$ is a learning rate taking value in $[0,1]$ and used to deal with statistical fluctuations.\\

The fundamental problem with tabular approaches is their lack of generalization capabilities: there is no mechanism enabling an agent to generalize learned values to unobserved states and actions. This puts a strong lower bound on the number of interactions with the environment required to learn an optimal merit function: $\Omega(\abs{\mathcal{S}}\times\abs{\mathcal{A}})$. That is, the number of interactions scales at least linearly in the number of state-action pairs \cite{ronagh19}. Hence, learning through tabular methods becomes intractable for environments with large (potentially continuous) state and action spaces. However, this is a rather common aspect of practical RL tasks:
\begin{itemize}[leftmargin=3mm]
	\item Large (and continuous) state spaces appear, for instance, in vision problems, where the dimension of the state space grows linearly with the number of pixels in their input images.
	\item Large action spaces are common for recommender systems such as YouTube or Netflix, where the actions are the possible recommendations of contents that can be made to a user. They also appear in many control problems of large processes, including, for instance, quantum control and quantum experiments \cite{august18,fosel18,bukov18,bukov18b,melnikov18b,nautrup19,sweke18,palittapongarnpim16,vedaie18,porotti19,wallnofer19,xu2019,dalgaard20}. For tasks such as these, actions are characterized by continuous parameters that are commonly discretized to efficiently search the action space. The drawback of this approach is that the action space grows polynomially with the inverse precision of discretization.
\end{itemize}
While generalization mechanisms do not overcome lower bound complexities, in practice, RL environments are not arbitrary and we do not necessarily care about learning the optimal policy exactly, just ``very good" policies (in a practical sense). In this case, much faster learning is possible through generalization heuristics, as we explain next.

\subsection{Generalization with neural networks\label{sec:nn-generalization}}

A common way to provide RL agents with generalization mechanisms is through \emph{function approximation} methods, closely related to regression problems in supervised learning. Such approaches introduce a parametrized representation $M^{\bm{\theta}}(\bm{s},\bm{a})$ of the merit function, replacing an otherwise sparsely-filled merit table by a function that is defined on the entire state-action space. Updates of the merit function act no longer on individual values $\Hat{M}(\bm{s},\bm{a})$ but on a set of parameters $\bm{\theta}$, allowing one to indirectly update the values $M^{\bm{\theta}}(\bm{s},\bm{a})$ of many -- including unobserved -- state-action pairs simultaneously. Using the terminology of supervised learning, $\{M^{\bm{\theta}}\}_{\bm{\theta}}$ constitutes the set of all possible hypothesis functions and learning consists in updating the parameters $\bm{\theta}$ as to find the setting that best approximates the true merit function $M$. Due to their success in supervised and unsupervised learning, neural networks, parametrized by their weights, have gained popularity as function approximators in RL. As we clarify next, while generalization over the state space is shared by all function approximation methods based on neural networks, generalization over the action space, i.e., learning a useful similarity measure between actions, is less common.\\

\begin{figure}
	\centering
	\includegraphics[width=1.0\columnwidth]{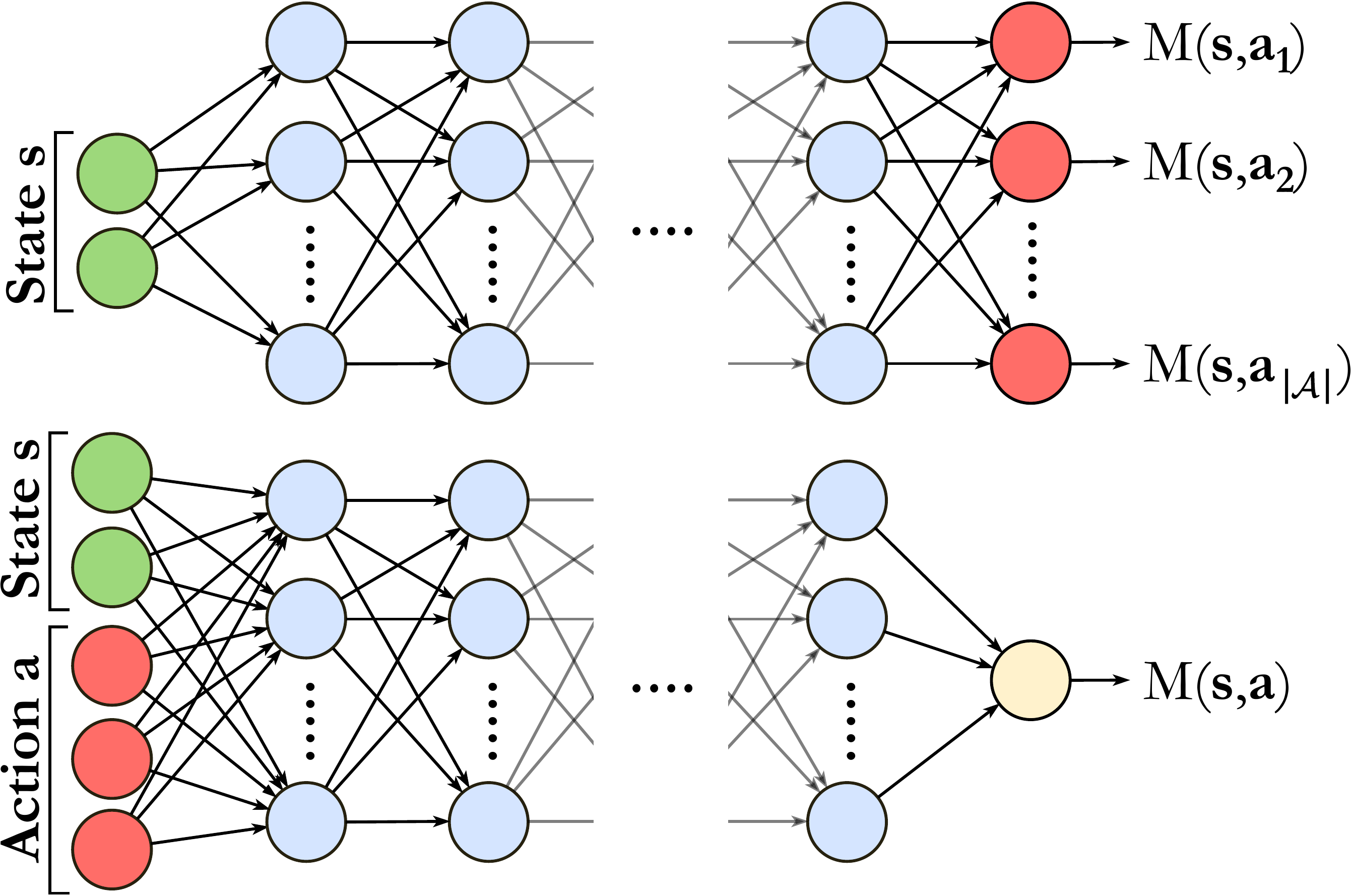}
	\caption{\textbf{A sketch of comparison between the layouts of DQN-like networks \cite{mnih15} and the DEBN we consider.} The DQN-like model (top) estimates the merit function for all actions simultaneously while the DEBN (bottom) estimates the merit function for a single state-action pair. The first hidden layers may be convolutional.}
	\label{fig:DQN-DEM}
\end{figure}

When using feedforward neural networks as function approximators, one can choose between two different approaches to represent the merit function (see Fig.\ \ref{fig:DQN-DEM}). The first representation (\emph{i}) takes as input a state $\bm{s}$ and outputs \emph{all values} $\left( M^{\bm{\theta}}(\bm{s},\bm{a_1}),\ \ldots,\ M^{\bm{\theta}}(\bm{s},\bm{a_{\abs{\mathcal{A}}}}) \right)$ associated to that state; while the second (\emph{ii}) takes as input a single state-action pair $(\bm{s},\bm{a})$ and outputs its corresponding merit value $M^{\bm{\theta}}(\bm{s},\bm{a})$. Each of these representations has its advantages in different contexts, that we clarify below.\\

In their seminal work \cite{mnih15}, Mnih et al.\ introduce a so-called deep $Q$-network (DQN) of type (\emph{i}) that quickly became the standard in value-based deep RL \cite{hessel18}. DQNs spiked interest notably due to their success in vision problems, e.g., playing Atari games, as their so-called convolutional hidden layers made them perfectly suited to deal with the high-dimensional space of the images they receive as input. Importantly for our later analyses, utilizing a network of type (\emph{i}) makes it particularly efficient to evaluate the Boltzmann policy of the agent (Eq.\ (\ref{eq:energy-based_policy})), since it requires only a single evaluation of the neural network followed by the application of a single additional softmax layer that normalizes all the merit values of its output.\\
However, this representation suffers from shortcomings when it comes to tasks with large action spaces, as we elaborate. During the first stages of learning, it is important for the agent to keep an explorative behavior to not get stuck in a possibly sub-optimal policy. For this reason, the function approximator should be able to capture several of the local maxima of the merit function, which may originate from complex state-action correlations. In other words, the representation should be \emph{multimodal}. However, it is reasonable to expect a neural network that can only represent merit values that are linear combinations of (non-linear) state features to be unlikely to learn highly multimodal merit functions. This is why several works have chosen to avoid DQNs in environments with large action spaces \cite{haarnoja17,dulac15}.\\

We refer to representations of type (\emph{ii}) as \emph{deep energy-based networks} (DEBN) due to their interesting connection to so-called energy-based models, as explored in machine learning \cite{lecun06}. Inspired by statistical physics, energy-based models represent a target function using the energy function of a statistical system (that can be in one of combinatorially many configurations, each having a certain energy). When normalized, this energy function gives rise to a probability distribution over all possible configurations. An example of energy-based model is the Boltzmann machine, introduced by Hinton \& Sejnowski \cite{hinton83} and which is inspired by the Ising model for spin systems \cite{ising25}. Restricted Boltzmann machines (RBMs), i.e., Boltzmann machines with a further constrained set of energy functions, were the first energy-based models reportedly used in RL with the work of Sallans \& Hinton \cite{sallans04}, followed by various extensions \cite{heess12,otsuka10,elfwing16,crawford18,haarnoja17}. Interestingly, although RBMs are commonly associated to a class of so-called stochastic recurrent neural networks, it turns out that they can be expressed as feedforward neural networks of type (\emph{ii}) \cite{martens13} with a single hidden layer. The DEBNs we consider here are then straightforwardly obtained by adding additional hidden layers to this feedforward neural network \cite{ngiam11}. Energy-based models are also known to be good at representing multimodal distributions \cite{haarnoja17,du19}. Hence, since DEBNs are able to encode deep and multimodal representations, they are well suited for tasks with both large state and action spaces.\\
A major drawback of DEBNs is the high cost of sampling from the policy (\ref{eq:energy-based_policy}): $|\mathcal{A}|$ evaluations of the network are needed to evaluate the policy exactly. \emph{Approximate} sampling methods based on random walks, such as Gibbs sampling, can speed-up the sampling process. However, Long et al.\ \cite{long10} showed that approximate sampling is still an NP-hard problem in the case of RBMs, an argument that can be trivially generalized to DEBNs.

In this work, we experimentally explore these intuitively motivated advantages of deep energy-based models in complex RL environments and show how quantum algorithms can help mitigate the computational bottleneck that they introduce.

\section{Deep energy-based RL\label{sec:DEB-RL}}

In this section, we give a short introduction to energy-based models, which we build upon to design our deep energy-based learning agents. We first consider RBM function approximators for RL \cite{sallans04} and present an established connection between RBMs and shallow feedforward networks \cite{martens13} which allows us to consider a natural extension to a deep architecture -- our so-called DEBNs. We show that this function approximation method is compatible with both VBM and PS update rules. This construction thereby also provides an extension of PS to NN function approximation, which we call \emph{deep PS}, in analogy to deep Q-learning.

\subsection{Energy-based RL\label{sec:EB-RL}}

\begin{figure}
	\centering
	\includegraphics[width=1.0\columnwidth]{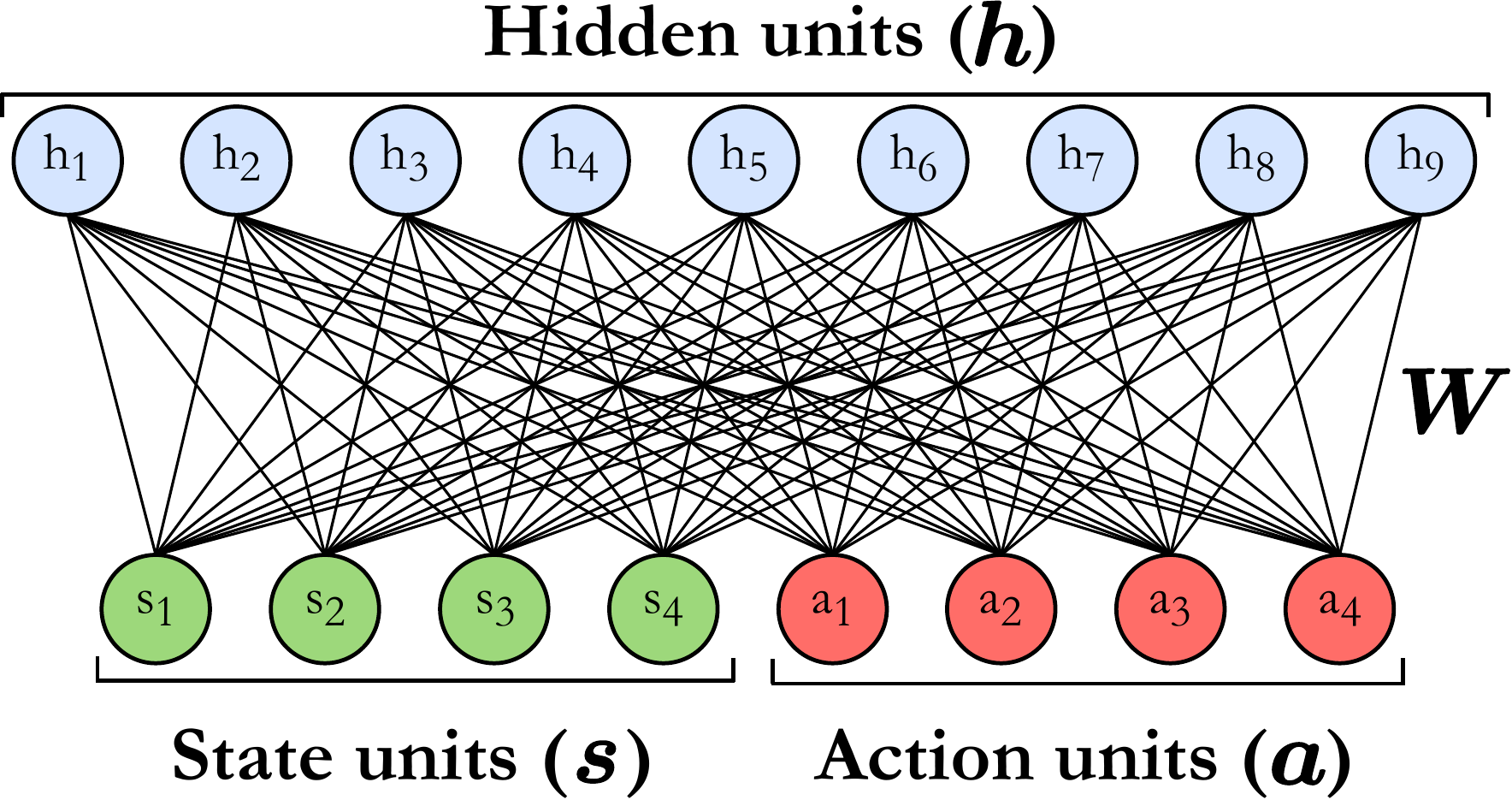}
	\caption{\textbf{A restricted Boltzmann machine used for reinforcement learning.} The visible units are divided between state and action units. Their connections to hidden units are weighted by a weight matrix $W$. Biases are not represented for compactness.}
	\label{fig:RBM-RL}
\end{figure}

\textbf{Energy-based models (EBMs). }EBMs are a popular class of probabilistic models that find applications in both supervised and unsupervised learning \cite{lecun06}. They are often used as generative models, that is, to model the empirical probability distribution of observed data vectors $\{ \bm{v} \}_{\bm{v} \in \mathcal{D}}$, e.g., strings of binary variables. EBMs assign a scalar energy value to each configuration of these variables, e.g., all possible binary strings of length $\abs{\bm{v}}$, whose structure should reflect dependencies (i.e., correlations) between the variables. This energy function is usually parametrized by a vector $\bm{\theta}$ that determines the strengths of these dependencies. In the case of a Boltzmann machine, for instance, the energy function takes the form:
\begin{equation}
	- E^{\bm{\theta}}(\bm{v}) = \sum_{i<j} w_{ij} v_i v_j + \sum_{i} b_i v_i
\end{equation}
where $\bm{\theta} = \left( (w_{ij})_{1\leq i < j \leq \abs{\bm{v}}},\ (b_{i})_{1\leq i \leq \abs{\bm{v}}} \right)$. Note that this particular Boltzmann machine is said to be fully-visible, since its energy only depends on the visible variables $v_i$ that encode the data (further explained shortly).
The topology of the model specifies which variables are assumed (in)dependent. Combined, topology and parametrization constrain the family of available energy functions $\{E^{\bm{\theta}}\}_{\bm{\theta}}$ that an EBM can represent.\\
A probability distribution over the configuration space can be derived from the energy function by a certain normalization over all possible configurations (example below). Normalization assigns a high probability to configurations with low energy and vice versa. This generative probability distribution can be trained, i.e., iteratively modified, to best fit an (empirical) probability distribution through updates of the parameters $\bm{\theta}$. These updates effectively select new energy functions from $\{E^{\bm{\theta}}\}_{\bm{\theta}}$, and, equivalently, new generative distributions.\\
In order to capture more complex dependencies between the variables, an additional set of auxiliary variables $\bm{h}$, called \emph{hidden} or \emph{latent} variables, can be added to the set of so-called visible variables $\bm{v}$. The model has now an energy $E^{\bm{\theta}}(\bm{v},\bm{h})$ that includes terms characterized by the parameters between visible and hidden variables. The probability distribution over the visible variables is then specified by a function $F^{\bm{\theta}}(\bm{v})$ called the \emph{free energy} of the model due to its connection to the equilibrium free energy in statistical physics. It is obtained by tracing out (or marginalizing) the hidden variables from the normalizing distribution. For instance, in the case of a softmax normalization, we have:
\begin{equation}\label{eq:energy-distribution}
	P^{\bm{\theta}}(\bm{v}) = \frac{\sum_{\bm{h}}e^{-E^{\bm{\theta}}(\bm{v},\bm{h})}}{\sum_{\bm{v'},\bm{h'}}e^{-E^{\bm{\theta}}(\bm{v'},\bm{h'})}} = \frac{e^{-F^{\bm{\theta}}(\bm{v})}}{\sum_{\bm{v'}}e^{-F^{\bm{\theta}}(\bm{v'})}}
\end{equation}
and
\begin{equation}\label{eq:free-energy}
	F^{\bm{\theta}}(\bm{v}) = E^{\bm{\theta}}(\bm{v},\langle \bm{h}\rangle_{P(\bm{h} | \bm{v})}) + H(\bm{h} | \bm{v}) 
\end{equation}
where $\langle \bm{h}\rangle_{P(\bm{h} | \bm{v})}$ is the expectation value of the hidden variables under $P(\bm{h} | \bm{v})$ (defined similarly to distribution (\ref{eq:energy-distribution})) and $H(\bm{h} | \bm{v})$ is their Shannon conditional entropy under this same probability distribution. A derivation of Eq.\ (\ref{eq:free-energy}) is provided in Appendix \ref{sec:bm_energy-based}.\\

\textbf{Energy-based function approximation. }Following the idea of Sallans \& Hinton \cite{sallans04}, we use the free energies of energy-based models as \emph{parametrized} approximations of merit functions (such as the $Q$-function in VBMs or the $h$-values of PS), that is, we take $M^{\bm{\theta}}$ to be defined by the model's $-F^{\bm{\theta}}$. We start with a simple energy-based model: a restricted Boltzmann machine (RBM), further described in Appendix \ref{sec:bm_energy-based}. To represent a policy using an RBM, we divide its visible units into state and action units (see Fig.\ \ref{fig:RBM-RL}). This allows one to define a conditional probability distribution:
 \begin{equation}\label{eq:DEBN-policy}
	\pi^{\bm{\theta}}(\bm{a}|\bm{s}) = \frac{e^{-F^{\bm{\theta}}(\bm{s},\bm{a})}}{\sum_{\bm{a'}}e^{-F^{\bm{\theta}}(\bm{s},\bm{a'})}} = \frac{e^{M^{\bm{\theta}}(\bm{s},\bm{a})}}{\sum_{\bm{a'}}e^{M^{\bm{\theta}}(\bm{s},\bm{a'})}}
\end{equation}
effectively encoding the policy of an agent as prescribed by Eq.\ (\ref{eq:energy-based_policy}) (the inverse temperature $\beta$ can be usually be absorbed by the parameters $\bm{\theta}$).\\
This approximation of the merit function by the RBM free energy, as opposed to a tabular approach, has the advantage of being defined on the entire state-action space, which allows the agent to act non-randomly on previously unseen states. Moreover, (implicitly) stored values $M^{\bm{\theta}}(\bm{s},\bm{a})$ are not updated independently from one another, as updates on $\bm{\theta}$ act on many values at a time, which allows one to generalize learned merit values to unobserved states and actions.
\begin{figure}
	\centering
	\includegraphics[width=1.0\columnwidth]{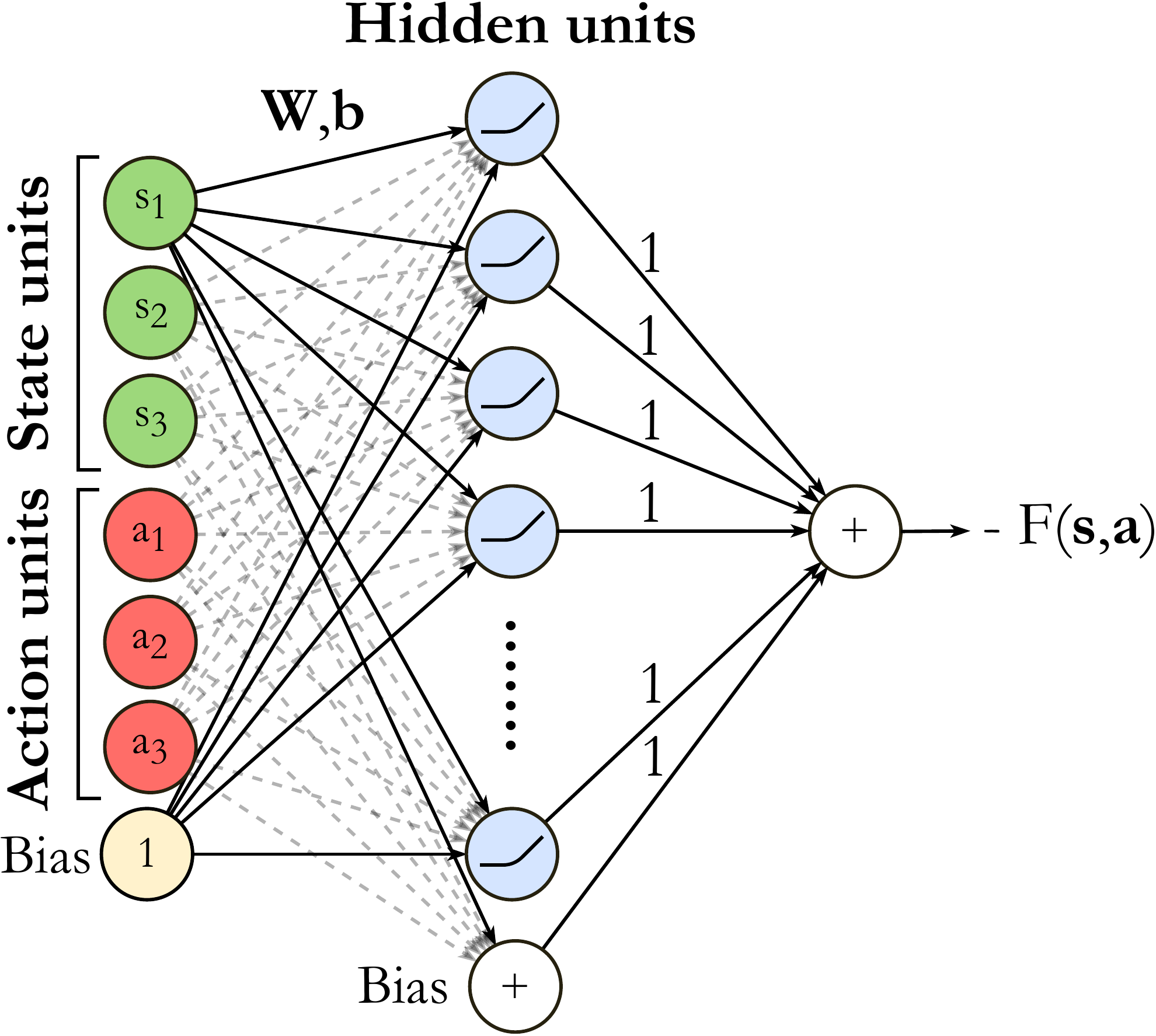}
	\caption{\textbf{A feedforward neural network computing the exact free energy $F$ of an RBM.} $W$ and $b$ are the vectors of weights and biases parametrizing the RBM. The activations of the hidden units are softplus functions (see Appendix \ref{sec:bm_energy-based}). These activations and the input bias are summed up by the output unit.}
	\label{fig:RBM-FF}
\end{figure}

At first sight, this form of energy-based function approximation may look completely unrelated to an approach relying on feedforward neural networks, such as the DQN of Mnih et al., making their structure hard to compare. Indeed, the expression of the RBM free energy given in Eq.\ (\ref{eq:free-energy}) does not seem to be expressible as a neural-network function, i.e., linear combinations of (input) variables interlaid by non-linear activation functions. But a quick derivation, given in Appendix \ref{sec:bm_energy-based}, shows that the free energy of an RBM can be evaluated \emph{exactly} by a shallow feedforward neural network taking the form of Fig.\ \ref{fig:RBM-FF}. We come back to this interesting connection in the next subsection when we consider deep neural networks.\\

\textbf{Updating the merit function. }Updates on an RBM are typically performed through gradient descent applied on the weights $\bm{\theta}$ parametrizing its (free) energy. When used in an unsupervised learning scenario, the descent direction is given by the gradient of the Kullback-Leibler divergence  (also known as relative entropy) between the RBM distribution $P^{\bm{\theta}}(\bm{v})$ and the empirical data distribution $\Hat{p}(\bm{v})$ \cite{hinton12}\footnote{Under maximum likelihood learning. In practice, a different gradient is used to train RBMs, derived by the \emph{contrastive divergence} method \cite{welling02,carreira05}. Even though this gradient is computed more efficiently, it can lead to very large approximation errors in the learned distributions \cite{fischer11}}. In a reinforcement learning scenario, however, we do not have direct access to (samples from) the optimal policy. Instead, the agent receives feedback from the environment in the form of delayed rewards. That is, feedback is given only after long sequences of actions, preventing the direct evaluation of any measure of distance to the optimal policy. For this reason, the update rule of the RBM used in RL is derived from the update rule of the merit function that the model approximates.
To explain what we mean, we detail how this is done in the case of $Q$-learning, which has a tabular update rule \cite{sutton98} given by:
\begin{multline}\label{eq:q-learning}	
	-F^{\bm{\theta}}(\bm{s^{(t)}},\bm{a^{(t)}}) \leftarrow -(1-\alpha) F^{\bm{\theta}}(\bm{s^{(t)}},\bm{a^{(t)}}) \\+ \alpha [r^{(t+1)} - \gamma\max_{\bm{a}} F^{\bm{\theta}}(\bm{s^{(t+1)}},\bm{a})]
\end{multline}
where $- F^{\bm{\theta}}$ corresponds to a $Q$-function approximation in this case and $\alpha \in [0,1]$ is the so-called learning rate. Notice that this expression can be viewed of the form
\begin{equation}\label{eq:current-target}
\textrm{current} \leftarrow (1-\alpha)\ \textrm{current} + \alpha\ \textrm{target}
\end{equation}
where the target value is an approximation of the optimal $Q$-value (expressed in its bootstrapped expansion in terms of the current reward and the discounted $Q$-value of the next timestep) that becomes exact upon convergence to the optimal policy.\\
Stated differently, this update aims at decreasing the so-called \emph{temporal difference} error, i.e., the difference between target and current value:
\begin{equation*}
	\mathcal{E}^{\bm{\theta}}_\textrm{TD} (t) = r^{(t+1)} - \gamma\max_{\bm{a}} F^{\bm{\theta}}(\bm{s^{(t+1)}},\bm{a}) + F^{\bm{\theta}}(\bm{s^{(t)}},\bm{a^{(t)}})
\end{equation*}
Analogously, we can derive similar expressions for SARSA:
\begin{equation*}
	\mathcal{E}^{\bm{\theta}}_\textrm{TD'} (t) = r^{(t+1)} - \gamma F^{\bm{\theta}}(\bm{s^{(t+1)}},\bm{a^{(t+1)}}) + F^{\bm{\theta}}(\bm{s^{(t)}},\bm{a^{(t)}})
\end{equation*}
and PS (see Appendix \ref{sec:bm_energy-based}):
\begin{equation*}
	\mathcal{E}^{\bm{\theta}}_\textrm{PS} (t) = \widetilde{r}^{(t+1)} + \gamma_\textrm{ps}F^{\bm{\theta}}(\bm{s^{(t)}},\bm{a^{(t)}})
\end{equation*}

It is then straightforward to derive an update rule for the weights of the RBM ($\bm{\theta^{(t+1)}} = \bm{\theta^{(t)}} + \Delta\bm{\theta}$) by performing gradient descent on the squared error $\left(\mathcal{E}^{\bm{\theta}} (t)\right)^2$:\footnotetext{The approximation comes in SARSA/Q-learning from considering $Q(\bm{s}^{(t+1)},\bm{a})$ constant with respect to $\bm{\theta}$, which is an assumption made in temporal difference learning.}\addtocounter{footnote}{-1}
\begin{align}\label{eq:update-rule_shallow}
	\begin{split}
    \Delta \bm{\theta} &= - \frac{\alpha}{2}\nabla_{\bm{\theta}}\left(\mathcal{E}^{\bm{\theta}} (t)\right)^2\\
    &= - \alpha\mathcal{E}^{\bm{\theta}} (t)\nabla_{\bm{\theta}}\mathcal{E}^{\bm{\theta}} (t)\\
    &\approx\footnotemark - \alpha\mathcal{E}^{\bm{\theta}} (t)\nabla_{\bm{\theta}} F^{\bm{\theta}}(\bm{s^{(t)}},\bm{a^{(t)}})
	\end{split}
\end{align}
It turns out that, in the case of RBMs, these update values can be computed efficiently since the gradient of their free energy $\nabla_{\bm{\theta}} F^{\bm{\theta}}(\bm{v})$ has a tractable expression:
\begin{equation*}
    \begin{cases}
		\frac{\partial F^{\bm{\theta}}(\bm{v})}{\partial w_{ik}} = -v_i\langle h_k\rangle_{P(\bm{h}|\bm{v})}\\
		\frac{\partial F^{\bm{\theta}}(\bm{v})}{\partial b_{i}} = -v_i\ ;\ \frac{\partial F^{\bm{\theta}}(\bm{v})}{\partial b_{k}} = -\langle h_k\rangle_{P(\bm{h}|\bm{v})}
    \end{cases}
\end{equation*}
where $\langle h_k\rangle_{P(\bm{h}|\bm{v})} = 1/\left(1+\exp\left(-\sum_{i} w_{ik}v_i - b_k\right)\right)$.\\

Hence, these formulas allow us to adapt RL update rules to train the free energies of RBMs as merit function approximators.

\subsection{From shallow to deep: the deep energy models for RL\label{sec:going-deep}}

\begin{figure*}
	\center
	\includegraphics[width=1\textwidth]{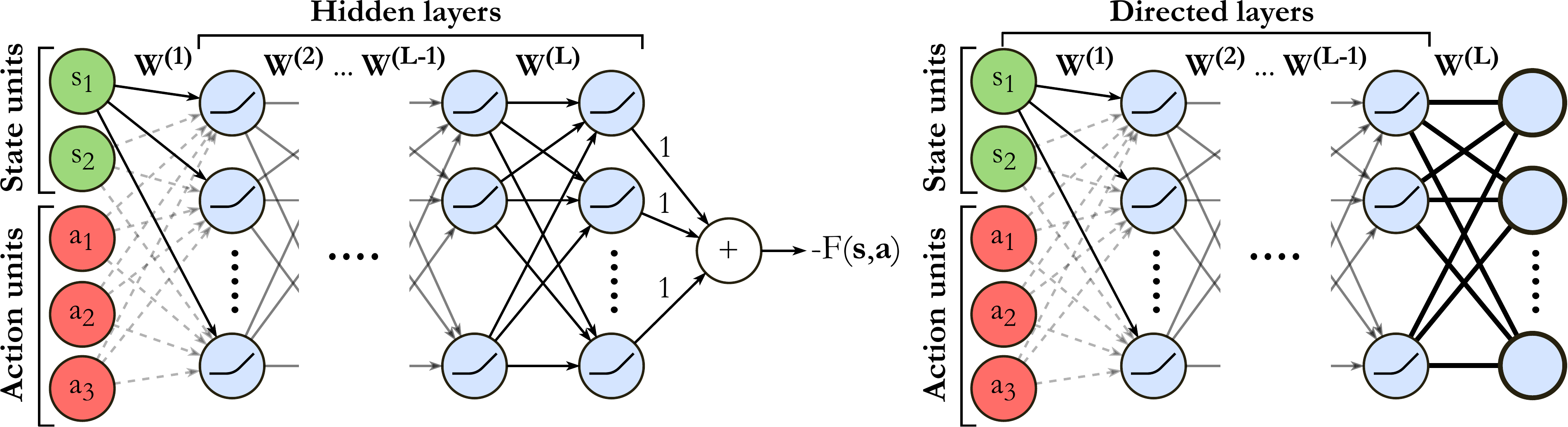}
	\caption{\textbf{The DEBN we consider (left) and its corresponding DEM (right).} The DEM is obtained by removing the summation unit at the output of its corresponding DEBN and by replacing its last hidden layer with a layer of stochastic units of the same size. Just as in an RBM, these stochastic units can take binary values depending on the input given by their undirected connections. Biases are not represented for compactness.}
	\label{fig:DEBN-DEM}
\end{figure*}

\textbf{Universal function approximation. }The RBM-based model presented in the previous subsection fulfills one of the goals behind generalization in RL: having a parametrized representation of the merit function that allows us to generalize learned behavior to previously unobserved states and actions. A natural consideration that arises from this approximation method is a possible compromise in representational power: a tabular method can represent arbitrary real-valued functions over the state-action space (assuming it is finite), but there is no guarantee, a priori, that the RBM free energy can represent these functions with finitely many hidden units. Fortunately, a result from Le Roux \& Bengio \cite{leroux08} shows that RBMs are universal function approximators: an RBM can represent arbitrary probability distributions over its visible units with any given precision, using a finite number of hidden units that depends on this distribution and the desired precision. This same result can be extended to prove that RBMs can represent arbitrary conditional probability distributions (i.e., policies) \cite{montufar15}. Unfortunately, these universality results also suggest that certain policies may need a number of hidden units that is linear in the number of state-action pairs to be represented precisely \cite{montufar11}\footnote{Representing arbitrary merit functions using an RBM free energy (Fig.\ \ref{fig:RBM-FF}) has a similar cost in the number of hidden units \cite{martens13}.}. This leads to a neural network parametrized by a \emph{very large} number of weights, rendering it difficult to train and likely to get stuck on suboptimal policies for general environments.\\

\textbf{Deep energy-based network (DEBN). }In contrast, deep neural networks are strictly more expressive: the Vapnik–Chervonenkis dimension \cite{vapnik71}, a measure of function approximators' expressive power\footnote{Originally used to quantify the expressive power of classifiers, its definition can also be extended to regression models.}, of neural networks with ReLU activation functions (similar to the softplus function we consider here, see Appendix \ref{sec:bm_energy-based}) grows linearly with their number of hidden layers when keeping their total number of weights fixed \cite{harvey17}. This added expressive power can give some intuition on how come deep neural networks require only logarithmically many artificial neurons compared to their shallow counterparts to represent a certain general class of functions \cite{liang16}. Given the greater representational power of deep neural networks, we come back to the connection between the RBM-based approach and feedforward neural networks made in the previous section. To enable the representation of a richer class of merit functions, we extend the free energy network in Fig.\ \ref{fig:RBM-FF} to a deep architecture by duplicating its hidden layer. The resulting neural network represents the free energy (i.e., the energy after tracing out the \emph{stochastic} hidden units) of a so-called deep energy model (DEM) (see Fig.\ \ref{fig:DEBN-DEM}), first introduced by Ngiam et al.\ \cite{ngiam11}. As pointed out by the authors, this model is different from both deep Boltzmann machines (DBMs) \cite{salakhutdinov09} and deep belief networks (DBNs) \cite{hinton06} due to its \emph{directed deterministic} connections between its first layers, followed by a \emph{single} undirected stochastic layer. Conceptually, this model transforms state-action pairs given as input into non-linear features that are then treated as the visible units of an RBM. This construction preserves the tractability of the free energy inherited from the RBM (as opposed to DBMs and DBNs) while representing high-level abstractions of state-action pairs. Hence, DEBNs constitute models with great representational power without the associated drawback of having a large number of weights to train.\\

\textbf{Updating the merit function. }Same as for the RBM, updates of the merit function are performed through gradient descent on the weights of the DEBN. However, due to the deep architecture of the neural network, we have to resort to backpropagation \cite{rumelhart88} to propagate derivatives of the temporal difference error through the multiple layers of the DEBN. It can be easily shown that backpropagation of the temporal difference error on the shallow DEBN considered in the previous subsection leads to the same update rule as in Eq.\ (\ref{eq:update-rule_shallow}).

\textbf{Stable learning. }Prior to the DQN of Mnih et al.\ \cite{mnih15}, it was widely believed that non-linear function approximation using (deep) neural networks was not suitable for learning value functions due to its high instability, which is likely to lead to divergence from the optimal value function \cite{tsitsiklis97}. This instability originates from the approximation of the target value function (see Eq.\ (\ref{eq:current-target})) that leads to a non-stationary temporal difference error and from the important correlations between the training samples generated by the agent. These two features, arising in RL yet not in supervised learning, both violate the assumptions behind convergence proofs in (un)supervised learning. Mainly two additions enabled stable learning with DQNs: 1.\ an \emph{experience replay memory} storing (recent) samples of learning experience, used to generate decorrelated training data for the neural network; 2.\ a \emph{target network}, i.e., a copy of the trained DQN updated at a slower rate, used to supply consistent targets for the temporal difference error during updates of the primary DQN. Naturally, we also rely on these tools to stabilize learning with our DEBNs.\\

A description of the full learning algorithm is presented in Algorithm \ref{alg:hybrid-EBRL} of Sec.\ \ref{sec:hybrid-EBRL}.

\section{Performance comparison between various models\label{sec:numerical-simulations}}

Having introduced how energy-based models can be used for function approximation, we now justify their use in RL. To that end, we study the differences in performance of both standard and energy-based function approximation methods for RL when facing environments with large state and action spaces. In a similar fashion to the analysis of Fu et al.\ \cite{fu19}, we establish several hypotheses regarding the performance of these models with respect to a number of learning aspects appearing in RL. These hypotheses are based on structural considerations along with previous experimental evidence. We then describe and carry out experiments to test our hypotheses.\\

Let us list our hypotheses, numbered (1) to (4) below.\\

We start by demonstrating that, similarly to a standard RL scenario based on feed-forward NNs, the use of function approximation and especially deep models is also beneficial when considering an energy-based approach.\\
(1) We expect very simple (shallow) energy-based networks to quickly learn in environments with large state spaces but simple target merit functions, while tabular methods would fail to learn as fast. This is due to the ability of (even shallow) neural networks to capture the simple correlations of the merit function, enabling it to generalize on the large input space.\\
(2) We expect shallow neural networks with a restricted number of hidden units to be unable to learn complex merit functions. A separation in learning performance between shallow and deep energy-based networks should then be noticeable when dealing with more complex task environments.\\
We test hypotheses (1) and (2) using common benchmarking tasks, introduced in Sec.\ \ref{sec:tabular-shallow-deep}.\\

In the second part of this section, we investigate the advantage of energy-based function approximators when facing large action spaces. Our hypotheses stem from previous experimental results showing the improved exploration performance of energy-based policies when facing multimodal reward functions in RL \cite{haarnoja17,dulac15}, as well as results in generative modeling showing the ability of energy-based models to produce robust multimodal representations of large image datasets \cite{du19}.\\
(3) We postulate that DEBNs are better suited than standard DQNs (see Section \ref{sec:nn-generalization}) to learn in RL environments with large action spaces: given the same number of parameters, we expect DEBNs to learn richer representations that allow for better generalization performance when learning merit functions. This should be due to the ability of the DEBN architecture to easily learn complex state-action correlations by having non-linear hidden layers that act on both states and actions. Since in RL an agent has to rely on its own approximation of the merit function to explore the environment and update itself, we expect the RL learning setting to amplify poor generalization performance in DQN-type architectures. We isolate two learning aspects that are specific to a RL scenario and that we believe characterize these amplifying effects:
\begin{itemize}[leftmargin=4mm]
	\item \emph{Policy-sampling. }We refer to the first aspect as the policy-sampling property of the training process.
It comes from the fact that an agent's policy influences the experienced data that constitutes its training set. Since this policy is derived from the agent's approximation of the merit function, a model that lacks to generalize properly its learned merit function will then render the training set suboptimal and hence limit the agent's learning performance.
	\item \emph{Reward-discounting. }The second aspect comes from the update rules of RL. As opposed to a supervised learning scenario, the target values used to train the merit function approximators are not fixed but themselves computed using an approximation of the merit function. For instance, for SARSA: $$\mathcal{E}^{\bm{\theta}}_\textrm{TD'} = \left(r + \gamma M^{\bm{\theta}}(\bm{s'},\bm{a'})\right) - M^{\bm{\theta}}(\bm{s},\bm{a}),$$ and for PS: $$\mathcal{E}^{\bm{\theta}}_\textrm{PS}= \widetilde{r} - \gamma_\textrm{ps}M^{\bm{\theta}}(\bm{s},\bm{a}).$$ The trained model has then to cope with a noisy and inaccurate target function\footnote{Note that, combined with the policy-sampling effect, we also have that the target function is moving, as the policy of the agent changes over training.}. These properties are a consequence of statistical fluctuations of sampled trajectories and their associated rewards (main source of error in PS), as well as an imperfect approximation of the merit function associated to the current policy (main source of error in SARSA/Q-learning).
\end{itemize}
To test our hypothesis (3), we design two experiments that isolate each of these learning aspects. In each experiment, we test the generalization performance of DEBN and DQN agents for increasing dimensions of the action space, for which we expect to see a growing separation in performance between the two models. Our experiments have the specificity that the target merit function to be learned by the agents is fixed and clearly defined, which allows us to measure the error of the agents' approximations, that we refer to as generalization error. We also run a control experiment verifying that the learning aspect is indeed responsible for the observed separation.

(4) Our last hypothesis posits that these gaps in generalization abilities manifest in a separation in learning performance for a sufficiently complex RL environment, e.g., with a high number of good sub-optimal policies. We give evidence of this claim in such an RL environment where both effects highlighted above contribute to the learning performance.\\

The experiments that test hypotheses (1) and (2) are presented in Sec.\ \ref{sec:tabular-shallow-deep}, while the experiments that test hypotheses (3) and (4) are presented in Sec.\ \ref{sec:DEBN-DQN}. The hyperparameters used in these experiments are listed in Appendix \ref{sec:app_num_sim}.

\subsection{Shallow v.s.\ tabular and deep v.s.\ shallow\label{sec:tabular-shallow-deep}}

\begin{figure*}
	\subfloat[][]{\label{fig:grid}\includegraphics[width=0.375\textwidth, valign=c]{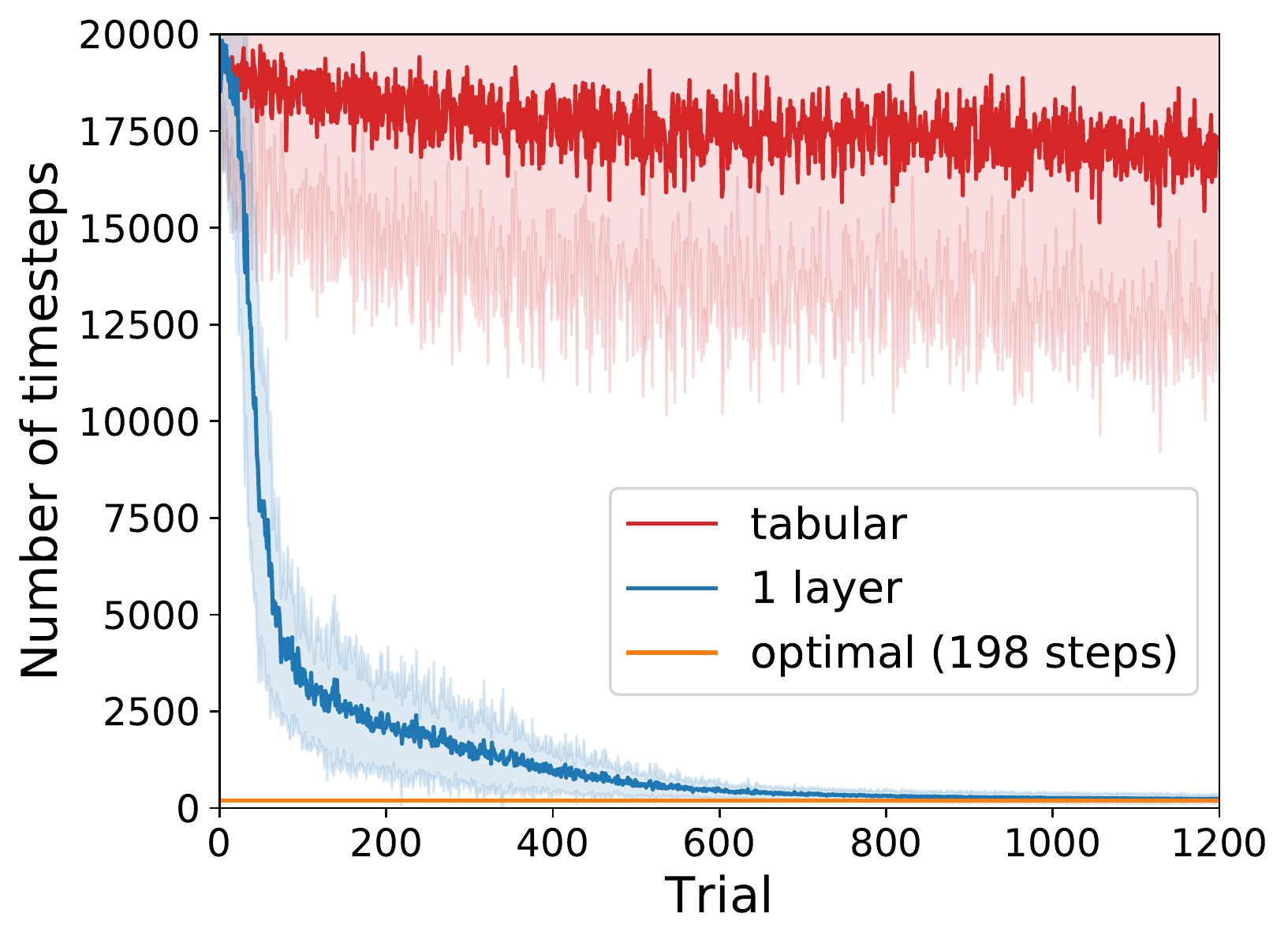}}\hspace{0em}%
	\subfloat[][]{\label{fig:cart}\includegraphics[width=0.375\textwidth, valign=c]{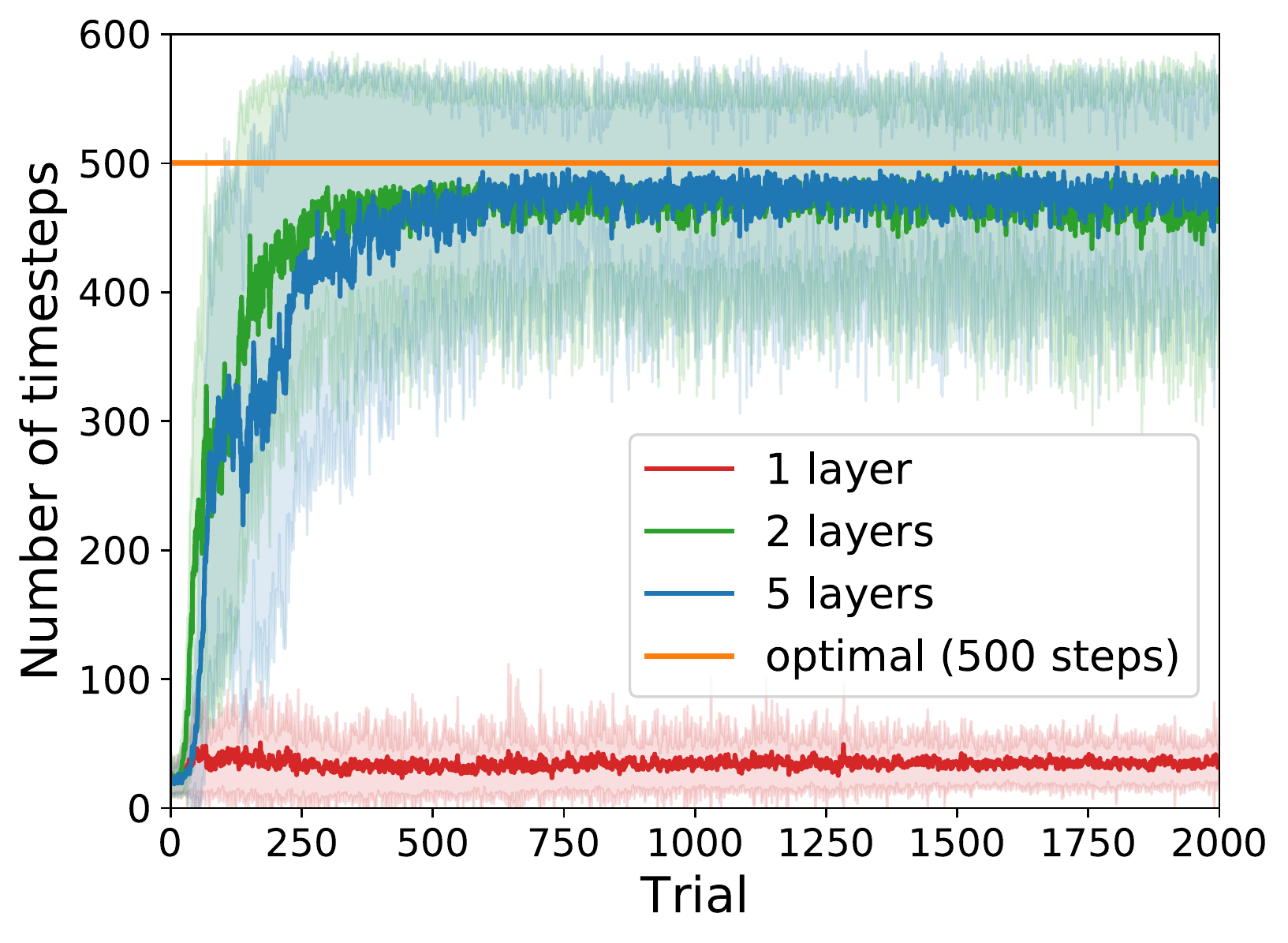}}%
	\subfloat{\includegraphics[width=0.25\textwidth, height=0.25\textwidth, valign=c]{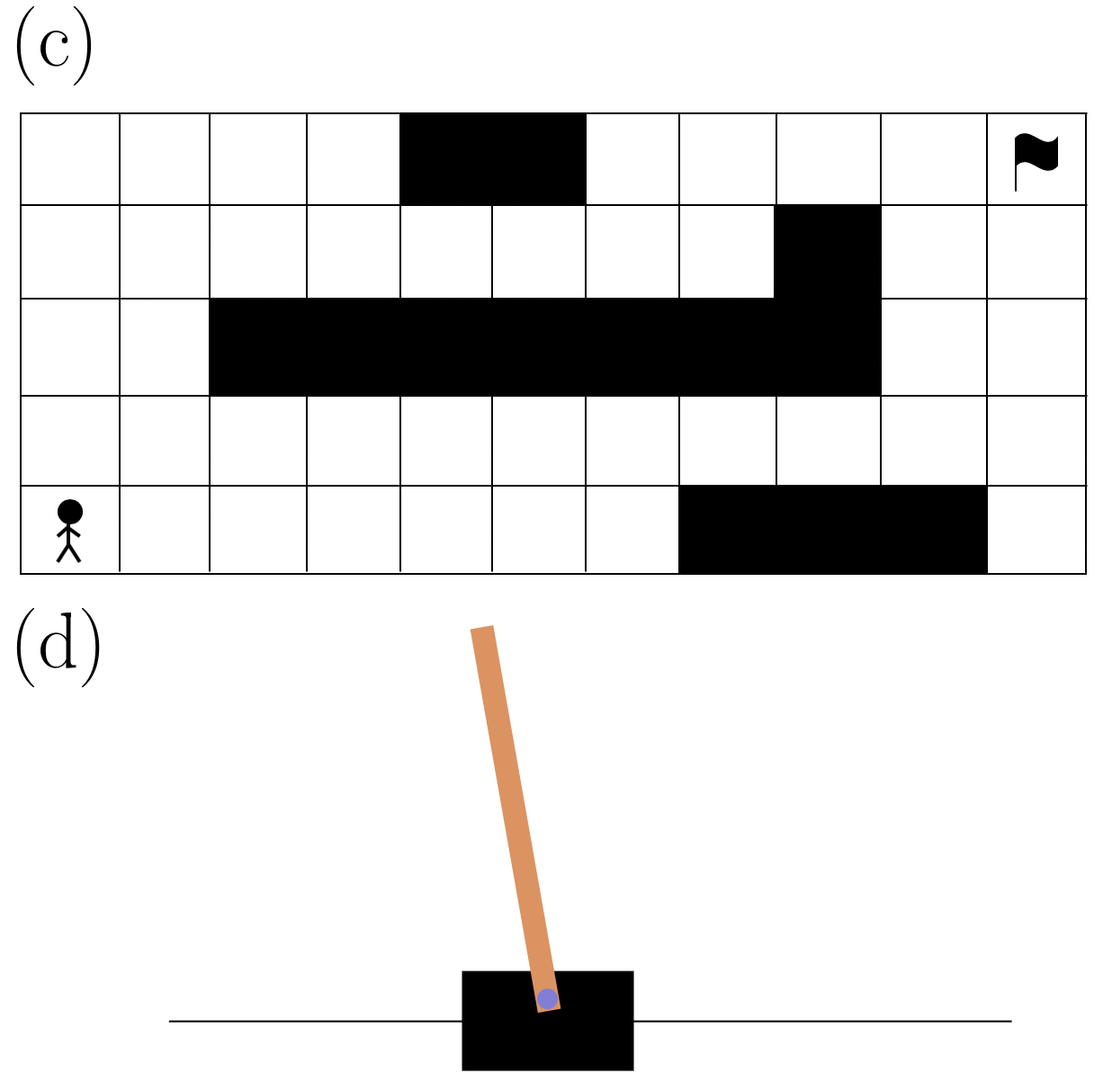}}
  \caption{\textbf{Numerical evidence of the advantage of (deep) function approximation models in a $100\times 100$ GridWorld task and the CartPole-v1 environment.} (a) and (b) In both plots, the optimal performance ($198$ and $500$ steps respectively) is indicated by the orange line. The remaining curves indicate the average performance and standard deviations of $50$ agents trained using a PS update rule. (a) In the $100\times 100$ GridWorld task, the performance of tabular agents (red curve) remains close to that of a random walk on the grid while agents with a shallow neural network (blue curve) reach close to optimal performance. (b) In the Cartpole-v1 environment, agents with a shallow neural network (red curve) exhibit close to random behavior while neural networks with at least $2$ hidden layers achieve close to optimal performance. (c) An illustrative depiction of the GridWorld environment. Note that this image is purely illustrative as it does not reflect the actual size and layout of the grid used in the simulations. (d) A game-screen image of the CartPole-v1 environment.}
	\label{fig:fa-demonstration}
\end{figure*}

In this subsection, we numerically demonstrate the importance of function approximation and deep models in RL environments with large and continuous state spaces. The results are presented in Fig.\ \ref{fig:fa-demonstration}.\\

\textbf{GridWorld simulations. }In order to demonstrate the advantage of shallow energy-based networks over tabular methods, we compare their performance in the GridWorld benchmarking task \cite{melnikov18} (see Fig.\ \ref{fig:fa-demonstration}c).
In this task, the agent navigates through a two-dimensional array of cells by choosing, at each timestep of interaction, one of the four main cardinal directions to move from one cell to the next. The training is divided into trials. At the beginning of each trial, the position of the agent is initialized at a fixed starting cell $(0,0)$ and a reward is placed in a fixed goal cell $(n,n)$, where $n$ is the grid size.
A trial terminates if the agent reaches the goal cell or if the length of the trial exceeds $20000$ timesteps. 
In the former case, the agent receives a reward of $1$. In the latter, the agent receives a reward of $-1$. 
For the simulations presented here, we chose a $100\times100$-grid with closed boundary conditions and no obstacles such that the state space is made large while maintaining a simple optimal policy, close to being state-independent (going up or right with equal probability, except at the boundaries). This last property brings a very helpful but yet very simple way to generalize the learned merit function / policy on the large state space.
In Fig.\ \ref{fig:grid}, we demonstrate that agents using a shallow neural network architecture significantly outperform their tabular counterparts in this task. This improved performance attests the generalization capabilities provided by the neural network. The tabular methods are unable to detect and exploit the similarities between different cells.\\

\textbf{CartPole simulations. }We now show, by means of an example, that DEBNs may provide advantages over shallow architectures in more complex task environments. More specifically, we consider the OpenAI Gym CartPole-v1 environment \cite{brockman16} which features a continuous state space.
In CartPole, the goal is to balance an inverted pole-pendulum mounted on a movable cart by applying to the latter, at each timestep of interaction, a unit of force to the left or right (see Fig.\ \ref{fig:fa-demonstration}d).
A trial terminates in one of three situations: a) the angle between the vertical line above the pivot point and the pole exceeds $12$\textdegree, b) the center of mass of the cart is at a horizontal position outside the range $[-2.4,2.4]$, c) the pendulum was successfully balanced for $500$ consecutive timesteps. 
For every timestep the agent manages to balance the pole, it receives a reward of $1$. 
The state space of this task can be described by four continuous parameters: the position of the cart, the velocity of the cart, the angle between the vertical line above the pivot point and the pole, and the angular velocity of the pole. 
In Fig.\ \ref{fig:cart}, we compare the performance of architectures with one, two and five hidden layers but the same total number of weights. 
We observe that the network with a single hidden layer does not manage to learn the complex optimal policy of this task.
However, already 2 hidden layers are sufficient to achieve near-optimal performance, demonstrating the importance of \emph{deep} energy-based models in RL tasks.

\begin{figure*}
	\center
	\includegraphics[width=\linewidth]{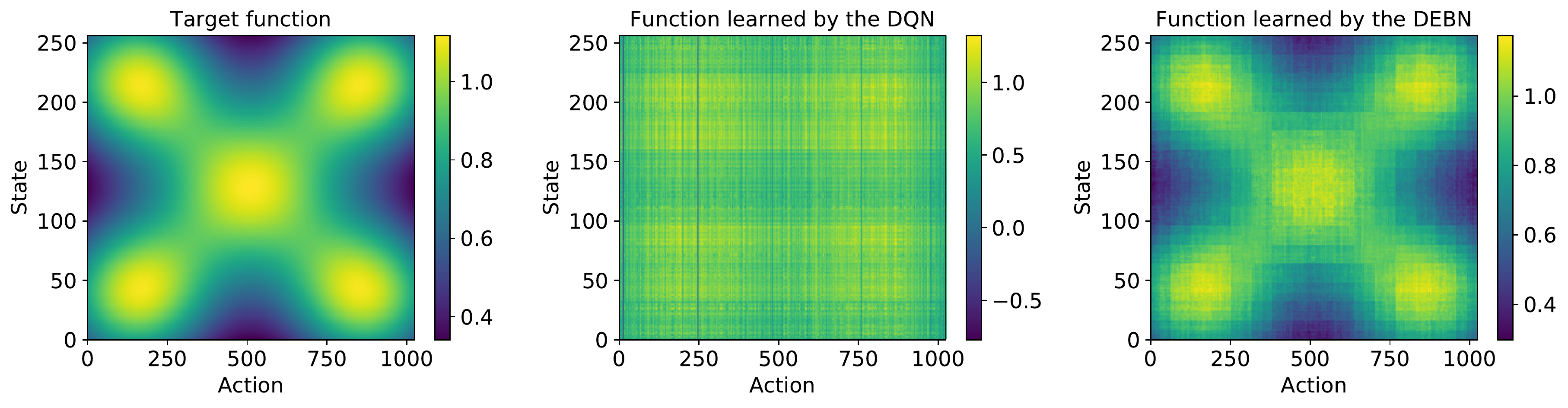}
	\caption{\textbf{The policy-sampling experiment. }At each learning step, a state is uniformly sampled from the state space (of size 256). Its corresponding binary encoding is perceived by the agent, which samples an action from a Boltzmann policy computed using its current approximation of the merit function (initially random). In the policy-sampling experiment, the target value associated to this sampled state-action pair is compared to the approximation of the neural network, resulting in the loss used to train its weights. In the analogue multidimensional regression problem, the entire row of target values associated to that state is compared to the values approximated by the network, each contributing to the loss. We additionally use mini-batches of states of size $10$ to compute average losses and improve convergence rates. The DEBNs also receive binary encodings of the actions.}
	\label{fig:on-policy}
\end{figure*}

\subsection{DEBN v.s.\ DQN\label{sec:DEBN-DQN}}

After considering RL environments with large state spaces, we now focus on environments with large action spaces, for which we expect an advantage of the DEBN architecture over DQN-like networks (see Fig.\ \ref{fig:DQN-DEM}). In order to keep a fair comparison between the two models, we constrain our neural networks in two ways. First, we set their total number of parameters to be equal whenever their performance is compared. Moreover, for these models to be practically relevant, we restricted their total number of parameters to a \emph{reasonable} bound (constant in our experiments, but in practice simply significantly smaller than the state-action size). However, these constraints combined lead to a particular regime where the separation between the two architectures is straightforward: as the size of the action space grows, the restricted parameters of the DQN tend to accumulate in its output layer (which is linear in the number of actions) until the model becomes unusable, while the DEBN architecture is significantly less affected (since its input layer can grow as slowly as logarithmically with the action space). Since this \emph{bottleneck effect} would cause DEBNs to trivially outperform DQNs, we choose our bounds on the number of parameters such that our instances of DQNs stay out of the bottleneck regime.\\

Let us now introduce in more detail the RL-specific aspects that we expect to cause a separation between the two architectures and the experiments we performed to test this claim.\\

\begin{figure*}
	\subfloat[][]{\includegraphics[width=0.5\linewidth]{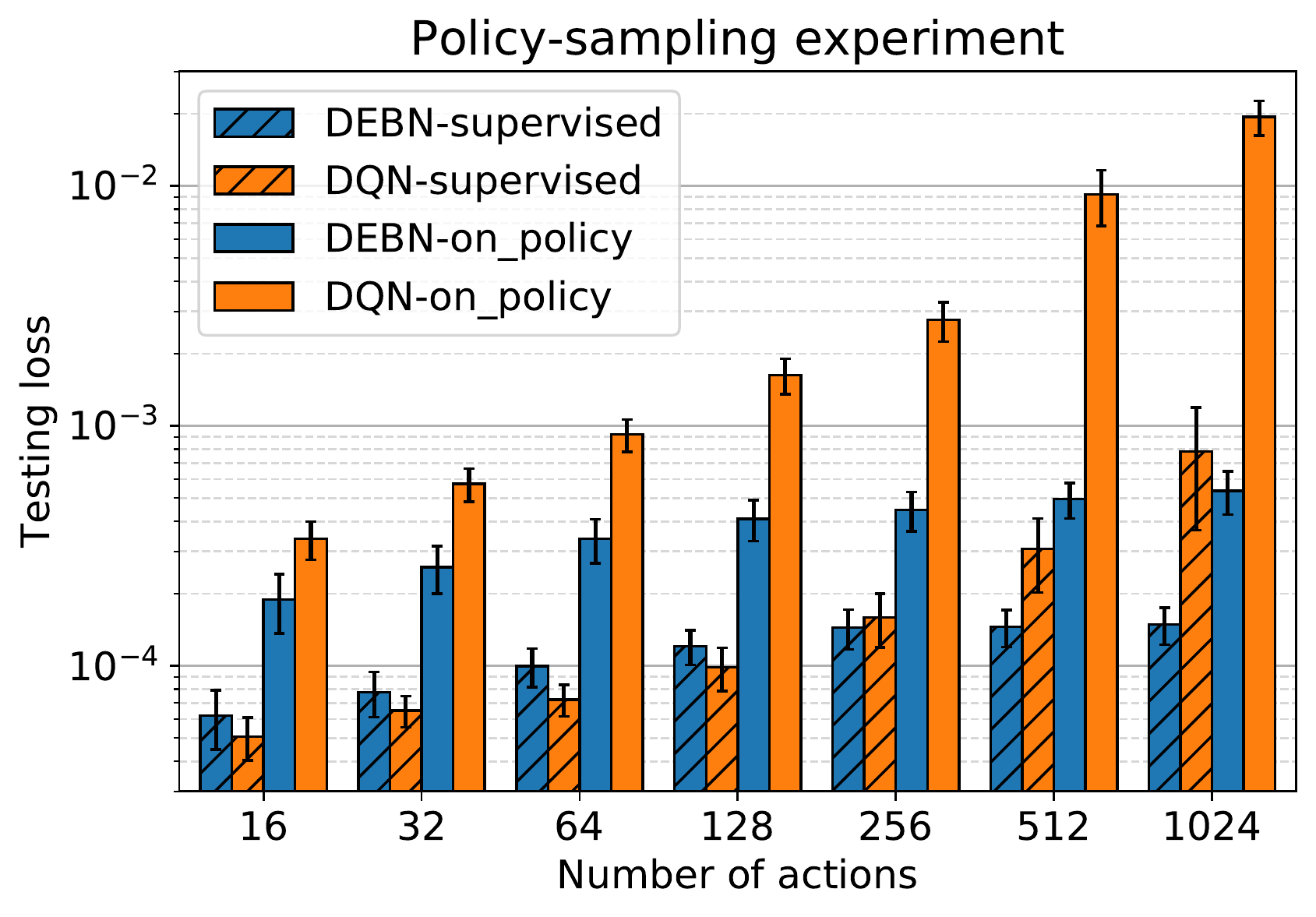}\label{fig:plot_on-policy}}\hspace{0em}%
	\subfloat[][]{\includegraphics[width=0.5\linewidth]{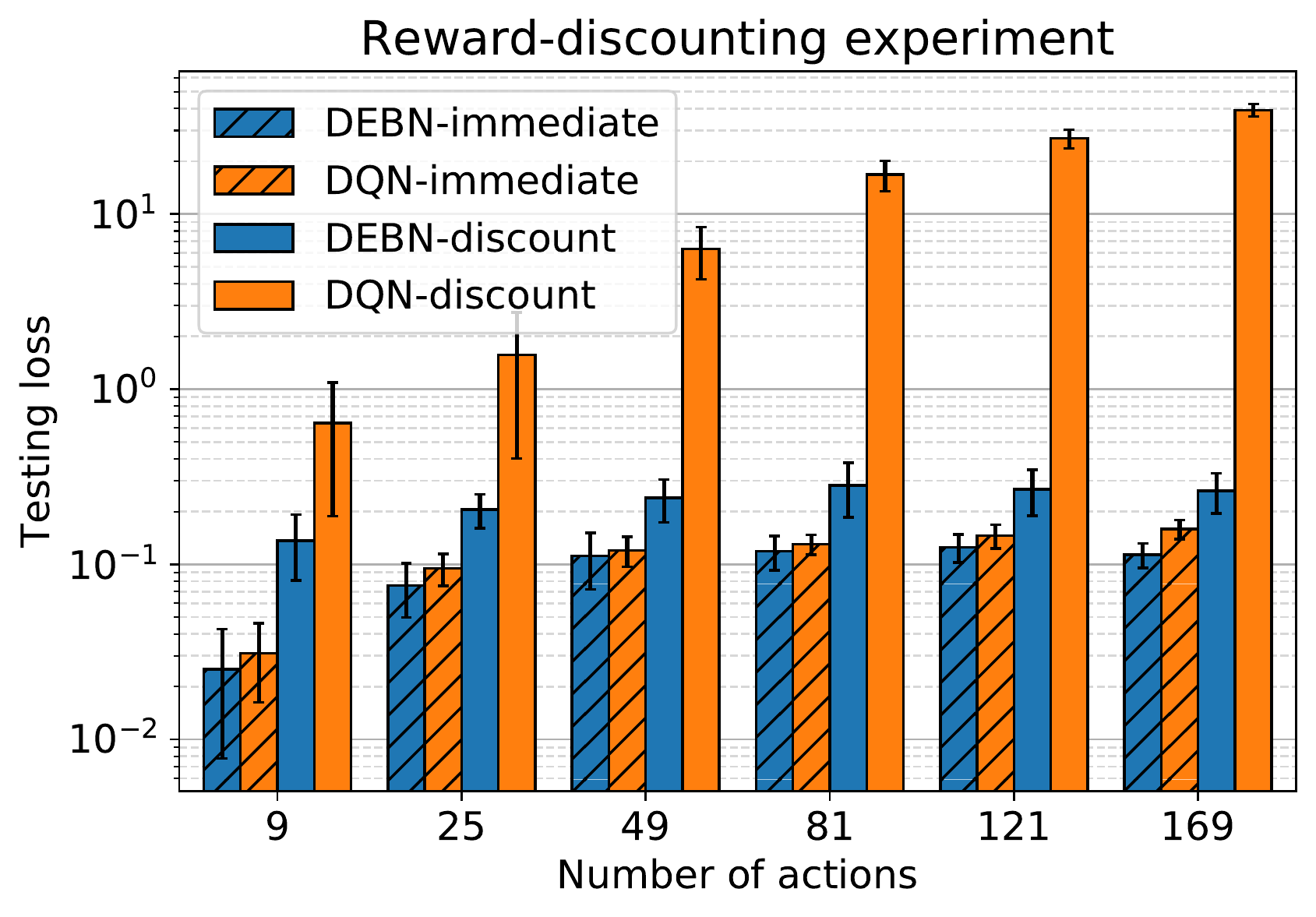}\label{fig:plot_glow}}
	\caption{\textbf{Separation in testing loss between DEBN and DQN agents in the policy-sampling and reward-discounting experiments. }(a) and (b) In both plots, each bar indicates the average testing loss (i.e., generalization error) of $20$ agents in their respective tasks. The number of weights of both neural networks is kept constant and equal in each experiment. Hatched bars are associated to the control experiments where the learning setting doesn't have the tested property. (a) In the policy-sampling experiment, we test the ability of both architectures to faithfully represent a multimodal merit function (depicted in Fig.\ \ref{fig:on-policy}) when training samples are generated by the current approximation of said function. The testing loss corresponds to the average smoothed $l_1$-loss over all merit values associated to $1000$ randomly sampled states after $5000$ learning steps. (b) In the reward-discounting experiment, we test the ability of both architectures to cope with the statistical fluctuation on their target merit values brought by a SARSA update rule. The testing loss corresponds to the average $l_2$-loss over all merit values associated to $100$ randomly sampled states after $2000$ learning steps ($20000$ total sampled transitions and an update period of $10$).}
\end{figure*}

\textbf{Policy-sampling. }The first aspect comes from the policy-sampling property of the training process: at each interaction step, an agent samples one action $\bm{a}$ from a policy specified by its current approximation $M^\theta(\bm{s},\ldots)$ of the merit function, and updates this approximation according to a loss that is (in general) independent of the values $M^\theta(\bm{s},\bm{a'}),M(\bm{s},\bm{a'})$ for different actions $\bm{a'}$. As opposed to a supervised (multidimensional) regression task, where, for each sampled state $\bm{s}$, the agent could compute its loss on the entire vector $\left(M^\theta(\bm{s},\bm{a}_1), \ldots, M^\theta(\bm{s},\bm{a}_{\abs{\mathcal{A}}})\right)$, the policy-sampling aspect can lead to \emph{masking} effects on sampled target values. In particular, when the current approximation incorrectly assigns large values to certain actions, this approximation error skews the sampling distribution in an inopportune way and prevents learning the merit function faithfully on the entire state-action space. Since sampling from the correct data distribution is already an imperative to get good generalization error in regression tasks, it should as well be important in the more general setting of policy-sampling. We expect these masking effects to be particularly relevant when the target merit function is multimodal, as this multimodality (in general) adds complexity to the state-action correlations to be learned. Intuitively, since DEBNs take some (compact) representation of individual actions as input, they are able to construct non-linear features of both states and actions. Hence, this makes DEBNs more likely to lean complex state-action correlations/masks, in a way that is not possible for a network that can only construct state features in its hidden layers. To investigate the magnitude of this effect, we compare the learning performance of these two models in the policy-sampling learning task described in Fig.\ \ref{fig:on-policy}. Interestingly, we observe that the DQNs learn a very noisy approximation of the target function, and even completely fail to learn one of the five modes of the target function. In order to give more quantitative results, we test the generalization performance of the two models on the same task with different levels of discretization of the action space. As we can see in Fig.\ \ref{fig:plot_on-policy}, the separation in testing loss appears to increase linearly with the size of the action space. Here, one could argue that this separation is solely caused by the bottleneck effect, as previously discussed. However, we show that this separation indeed originates from the policy-sampling aspect of training by running a control experiment on the analogue supervised learning task, where we indeed observe that both architectures have the same performance up to $256$ actions (after which the bottleneck effect might be responsible for part of the separation).
\\

\begin{figure*}
\centering
\includegraphics[width=\linewidth]{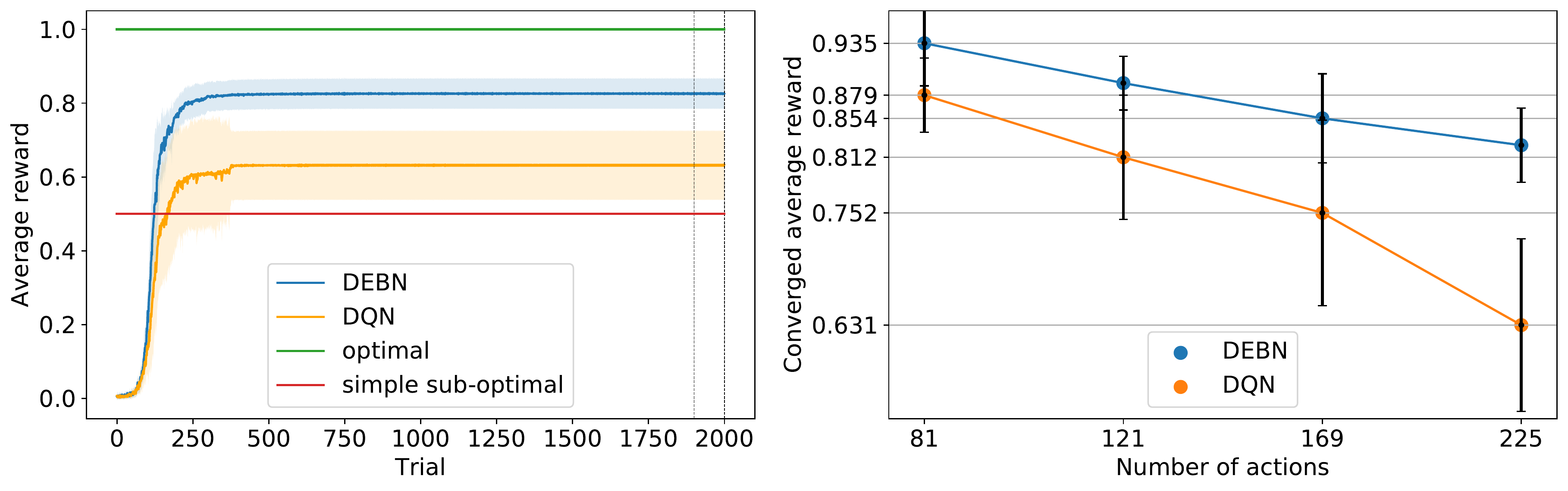}
	\caption{\textbf{Separation in average rewards between DEBN and DQN agents in a RL task. }Each dot on the right plot indicates the average reward of 20 agents over $100$ trials after training for $1900$ trials of length $100$ interactions. The left plot shows the evolution of this average performance over the course of training for the toy circular-GridWorld task with $225$ actions. Very simple sub-optimal policies can achieve an average reward of $0.5$, represented by the red line. The optimal policy achieves an average reward of $1$, indicated by the green line.}
	\label{fig:full-RL}
\end{figure*}

\textbf{Discounted rewards. }The second aspect of RL for which we expect a separation comes from the reward discounting introduced by RL update rules. For both projective simulation (PS) and value-based methods (VBMs), merit values do not correspond to the immediate rewards of state-action pairs but also include estimates of future rewards discounted by a discount (or glow) factor (see bootstrapped expression of Eq.\ (\ref{eq:current-target})). This is an intrinsic property of RL update rules as they all aim at optimizing long-term rewards. For VBMs, these estimates of future rewards are given by a sample of the immediate reward and the merit value of the next experienced state as approximated by the model; while, for PS, these correspond to the rewards collected during one sampled trajectory in the environment. For \emph{stochastic} policies and/or environments, the sampled terms lead in general to statistical fluctuations in the target merit values on which the agents compute their loss. As for the approximated terms, errors in the approximation of the merit function lead to noise in the target values. We refer to both these effects as fluctuations of the merit values. To study the influence of this RL aspect in isolation from other sources of fluctuations, we consider the following task: an agent is given access to sample rewards experienced by a fixed stochastic policy (also called \emph{off}-policy) in a toy-environment where states are randomly sampled at each interaction step, and needs to learn its \emph{discounted} merit function. The constraints of this task effectively cancel the policy-sampling aspect considered in the previous experiment as well as temporal fluctuations induced by the stochastic policy at hand. The setting for which we observed the highest separation between the two models was for an off-policy very close to optimal (performing a rewarded action with probability $0.99$ and a random action otherwise) and with a discount factor of $0.9$. Again, to assess the influence of reward discounting on this separation, we run a control experiment where the discount factor is set to $0$. The results of these simulations are presented in Fig.\ \ref{fig:plot_glow}. Note that we observed this separation despite the use of the learning mechanisms that help counter the statistical fluctuation of target values: notably, the experience replay memory and, more importantly, the target network (see Sec.\ \ref{sec:going-deep}).\\

\textbf{A complete RL environment. }In the previous experiments, we have studied the separations between DEBNs and DQNs in terms of their generalization error when learning merit functions for restricted RL environments. We now give evidence of a separation in learning performance (i.e., average collected rewards) in a complete RL environment. Here, agents learn to solve a toy GridWorld task with a special transition function by directly interacting with it (policy-sampling) and are subject to the statistical fluctuations induced by a discount factor. This task environment has the same description in terms of states and actions as the one used for the discounted-rewards experiment, but differs in its transition function such that it adds more structure to the task. The description goes as follows: an agent is situated in a one-dimensional GridWorld with $N$ cells, enumerated $0$ to $N-1$, circular boundary conditions and a goal state at position $\lfloor N/2 \rfloor$. At each interaction step, the agent is at a given position $n$ and can do one of $(\lfloor N/2 \rfloor+1)^2$ actions. Each action is described by two sub-actions: moving $i$ steps to the left and $j$ steps to the right for $i,j$ in $0, \ldots, \lfloor N/2 \rfloor$. Call $n'$ the position $n+j-i \text{ mod } \lfloor N/2 \rfloor$. From the actions for which $n'$ corresponds to the goal position $\lfloor N/2 \rfloor$, only the shortest (i.e., smallest $i+j$) is rewarded with a reward of $1$. All other actions are not rewarded. Now for the transition function: if $n'=\lfloor N/2 \rfloor$, the agent is moved to the position $n+1$ (or $n+2$ if that happens to be the goal position), otherwise it is moved to the position $n'$. This choice of transition function, while quite artificial, leads to the interesting property that this task allows for several sub-optimal policies, e.g., the agent runs into a loop of $l\in\{2,\ldots,N-2\}$ positions with only one non-rewarded transition, achieving an average reward of $1-1/l \geq 0.5$. The only optimal policy on the other hand, runs through a loop of length $N-1$ and achieves an average reward of $1$. This makes it a very challenging task if one seeks the optimal policy, and in part explains why the separation in performance between the two architectures appears small. We do however observe in the results of our experiment (see Fig.\ \ref{fig:full-RL}) an increasing separation in average rewards with respect to the number of actions in the environment. This illustrates how the gaps in generalization performance observed for the RL-specific learning aspects discussed above translate into a separation in learning performance when these aspects act in concert.\\

\textbf{Discussion. }Due to the simplicity of the chosen tasks, the separations observed in our experiments with DEBNs and DQNs may get less significant in different learning regimes. Similarly to tabular methods being able to learn optimal policies given arbitrarily long learning time, we expect the observed separations to get smaller when looking at longer learning times, or, for instance, when increasing the total number of weights of the networks and hyperparameters such as the update period of the target networks. However, throughout our exploration of relevant regimes, we did not observe any inversion of the performance of both networks, indicating that a large separation might as well be present in broader regimes for more complex environments. We used a PS update rule for the simulations on the benchmarking tasks and a SARSA update rule for the separation results in the reward-discounting and full-RL experiment. We however believe that the same experiments with interchanged update rules exhibit a similar behavior, possibly with different separations. This claim is supported by an extensive numerical study that showed that the tabular versions of PS and VBMs behave similarly in this type of benchmarking tasks \cite{melnikov18} and additional numerical simulations that were not included in this paper for conciseness.
In order not to penalize the DEBN agents with approximation errors in sampling, we used \emph{exact} sampling (see next section) from the policy of the agents in the experiments above. Note that the quantum speed-up presented in the next section applies to problems with large action spaces and hence can not be numerically demonstrated as classical simulations are intractable.

\section{Hybrid deep energy-based RL with quantum enhancements\label{sec:hybrid-EBRL}}

So far, we have shown that one instance of energy-based models, the DEBNs, can provide a learning advantage over standard DQNs in complex RL environments. However, this learning advantage comes at a cost: the inefficiency of sampling from energy-based policies, needed to act on the environment and train the RL agent. It is easy to see that \emph{exact} sampling is intractable, since it requires computing the merit values $M^{\bm{\theta}}(\bm{s},\bm{a})$ for all possible actions $\bm{a}$ given a state $\bm{s}$ to evaluate the expression of Eq.\ (\ref{eq:energy-based_policy}), which costs $\mathcal{O}(\abs{\mathcal{A}})$ feed-forward evaluations in the case of DEBNs. Moreover, this hardness of sampling even extends to \emph{approximate} sampling: it is actually an NP-hard problem to generate samples from a distribution that is \emph{provably} close in total variation distance from a general RBM distribution \cite{long10}. We address this computational bottleneck by applying quantum algorithms for approximate sampling from Gibbs distributions, that is algorithms that can offer quantum speed-ups to \emph{heuristic} sampling methods. We investigate the potential of using both (long-term) fault-tolerant algorithms with provable quadratic speed-ups \cite{harrow20,yung12, van20}, as well as near-term approximate optimization approaches \cite{chowdhury20,wu19,wang20}. 

We have focused in our simulations on DEBNs rather than other energy-based models due to their interesting properties: they offer deep NN-based representations of the merit function while allowing efficient computation of this function and its gradient (through a forward and backward evaluation of the DEBN). A drawback of DEBNs is that, in order to be evaluated in quantum algorithms, one needs to carry out the feed-forward computation coherently, which is out of reach of near-term quantum devices. Hence, this encourages us to additionally consider different deep energy-based models such as deep Boltzmann machines (DBMs) as well as their quantum extensions (QBMs) that have the advantage of being easily simulatable. In view of their similar properties, we expect these models to be as robust to the learning aspects that hinder DQN-type models. With DBMs and QBMs however, we face the issue that the evaluation of the merit function (i.e., the negative free energy of the BM after tracing out its hidden units), along with its gradient, is intractable for a large number of hidden layers/units. The reason behind this intractability is similar to the one behind the hardness of sampling from RBMs or DEBN-policies: in both cases, one needs to compute the free energy of the model after tracing out mutually dependent (i.e., connected) units, which is a hard task. However, similarly to the task of sampling from RBMs/DEBN-policies, approximate sampling algorithms can be used to estimate the free energy of DBMs and its gradient.\\

In this section, we introduce a hybrid quantum-classical scheme for training deep energy-based models (Algorithm \ref{alg:hybrid-EBRL}). We list several classical and quantum models that can be used in this hybrid scheme, and present quantum subroutines to speed-up sampling and the evaluation of the approximate merit function and its gradient. 

\subsection{Description of the algorithm}

\begin{figure*}
\begin{minipage}{\linewidth}
\begin{algorithm}[H]
\begin{algorithmic}
\State Initialize replay memory $\mathcal{D}$ to capacity $N$
\State Initialize main and target merit function approximators $M^{\bm{\theta}}$, $\widetilde{M}^{\bm{\theta'}}$ with the same random parameters
\State Initialize update rule $R$ in \{PS, Q-learning, SARSA\} 
\For{episode $=1, \ldots, E$} 
\State Initialize state $s_1$
\For {$t=1, \ldots, T$}
	\State Sample an action $\bm{a_t}$ with approximate probability $\pi(\bm{a_t}|\bm{s_t}) = e^{\beta M^{\bm{\theta}}  (\bm{s_t}, \bm{a_t})}/Z_\beta(\bm{s_t})$ using \textbf{Theorem \ref{thm:gibbs-state-preparation}}
	\State Execute action $\bm{a_t}$ on the environment and observe reward $r_t$ and state $\bm{s_{t+1}}$
	\If {R $\in$ \{SARSA, Q-learning\}}
	\State Store transition $\left(\bm{s_t},\bm{a_t},r_t,\bm{s_{t+1}}\right)$ in $\mathcal{D}$
	\EndIf
	\State Sample random minibatch $\mathcal{B}$ of transitions $\left(\bm{s_j},\bm{a_j},r_j,\bm{s_{j+1}}\right)$ from $\mathcal{D}$
	\For {transition $\in \mathcal{B}$}
	\State Evaluate $M^{\bm{\theta}}(\bm{s_{j}}, \bm{a_{j}})$ (and  $\widetilde{M}^{\bm{\theta'}}(\bm{s_{j}}, \bm{a_{j}})$ for PS) using \textbf{Theorem \ref{thm:merit-function-evaluation}}
	\If {R $\in$ \{SARSA, Q-learning\}}
	\State Set $\beta' \gg 1$ for Q-learning, otherwise $\beta'=\beta$
	\State Sample an action $a'$ with approximate probability $\pi(\bm{a'}|\bm{s_{j+1}}) = e^{\beta' \widetilde{M}^{\bm{\theta'}}(\bm{s_{j+1}}, \bm{a'})}/Z_{\beta'} (\bm{s_{j+1}})$ using \textbf{Theorem \ref{thm:gibbs-state-preparation}}
	\State Evaluate $\widetilde{M}^{\bm{\theta'}}(\bm{s_{j+1}}, \bm{a'})$ using \textbf{Theorem \ref{thm:merit-function-evaluation}}
	\EndIf
	\State Set
	$y_j =
    \left\{
    \begin{array}{l l}
      r_j + \widetilde{M}^{\bm{\theta'}}(\bm{s_{j}}, \bm{a_{j}}) - \gamma_\text{ps} M^{\bm{\theta}}(\bm{s_{j}}, \bm{a_{j}}) \quad & \text{if } R= \text{PS}\\
      r_j + \gamma \widetilde{M}^{\bm{\theta'}}(\bm{s_{j+1}}, \bm{a'}) \quad & \text{if } R= \text{SARSA}\\
      r_j + \gamma \max_{a'} \widetilde{M}^{\bm{\theta'}}(\bm{s_{j+1}}, \bm{a'}) \quad & \text{if } R= \text{Q-learning}
    \end{array} \right.$
	\State Evaluate $\nabla_{\bm{\theta}} M^{\bm{\theta}}(\bm{s_j},\bm{a_j})$ \textbf{Theorem \ref{thm:gradient-merit-function}}
	\State Accumulate $\Delta\bm{\theta}\ +\!=  \alpha \left(y_j - M^{\bm{\theta}}(\bm{s_{j}}, \bm{a_{j}}) \right) \nabla_{\bm{\theta}} M^{\bm{\theta}}(\bm{s_j},\bm{a_j})$
	\EndFor
	\State Perform a gradient descent step $\bm{\theta} + \Delta\bm{\theta}$ and reset $\Delta\bm{\theta}=0$
	\State Every C steps, set $\widetilde{M}^{\bm{\theta'}} = M^{\bm{\theta}}$ 
\EndFor
\If {R $=$ PS}
	\State Compute discounted rewards $\widetilde{r_t}$ using the edge-glow mechanism \cite{mautner15}
	\State Store experienced transitions $\left(\bm{s_t},\bm{a_t},\widetilde{r_t}\right)$ in $\mathcal{D}$
	\EndIf
\EndFor
\end{algorithmic}
\caption{Hybrid deep energy-based RL}
\label{alg:hybrid-EBRL}
\end{algorithm}
\vspace{-1.75\baselineskip}
\caption{\textbf{Pseudo-code of the algorithm used to train energy-based agents in RL environments. }Possible quantum enhancements are highlighted by Theorems \ref{thm:gibbs-state-preparation}, \ref{thm:merit-function-evaluation} and \ref{thm:gradient-merit-function}. }\vspace{-1.25\baselineskip}
\end{minipage}
\end{figure*}

In Algorithm \ref{alg:hybrid-EBRL}, we give a general procedure to train RL agents with energy-based merit function approximators. This procedure applies to both VBM and PS update rules and takes into account common mechanisms to stabilize learning, notably a replay memory and a target network (see Sec.\ \ref{sec:going-deep}). These mechanisms, along with the interaction with the environment, are kept classical. Throughout training though, the agent needs to perform three subroutines that are amenable to quantum speed-ups:
\begin{enumerate}[leftmargin=4mm]
	\item sampling from a distribution specified by the current approximation of the merit function
	\item estimating the approximate merit values associated to experienced states and sampled actions
	\item estimating the gradient of the approximate merit function for experienced state-action pairs 
\end{enumerate} 
So far, we have been considering DEBNs, for which subroutines 2 and 3 can be efficiently performed classically and hence do not require quantum speed-ups. For different energy-based models however, this can no longer be the case as they generally involve computing expectations values with respect to intractable probability distributions. Nonetheless, we show in Sec.\ \ref{sec:function+gradient-evaluation} that these two subroutines can be performed using the same sampling algorithms of subroutine 1, presented in Sec.\ \ref{sec:quantum-gibbs-preparation}. To see this, note that the free energy of an energy-based model takes the form $$F(\bm{v}) = \langle E(\bm{v,h}) \rangle_{P(\bm{h} | \bm{v})} + S(\bm{h} | \bm{v}),$$ that is the sum of the expected energy and Shannon entropy (or von Neumann entropy in the quantum case) of the model under the conditional Gibbs distribution $P(\bm{h} | \bm{v})$. Both these terms can be estimated by sampling from this same distribution. As for the gradient of the free energy, we show that for the models we consider, these gradients take the form of expected values of the model parameters under the distribution $P(\bm{h} | \bm{v})$.\\

Essentially, at the core of these three subroutines is a sampling problem. As explained in more detail in Sec.\ \ref{sec:quantum-gibbs-preparation}, classical algorithms that solve this problem should generate samples from a distribution that is close to a target Gibbs distribution of a classical or quantum Hamiltonian, while, in the quantum case, the analogue goal is to prepare quantum Gibbs states that encode such distributions. We focus on the complexities of such algorithms to quantify the runtime of our subroutines. 

\subsection{Examples of suitable energy-based models}

\begin{table*}
	\centering
	\includegraphics[width=1.0\textwidth]{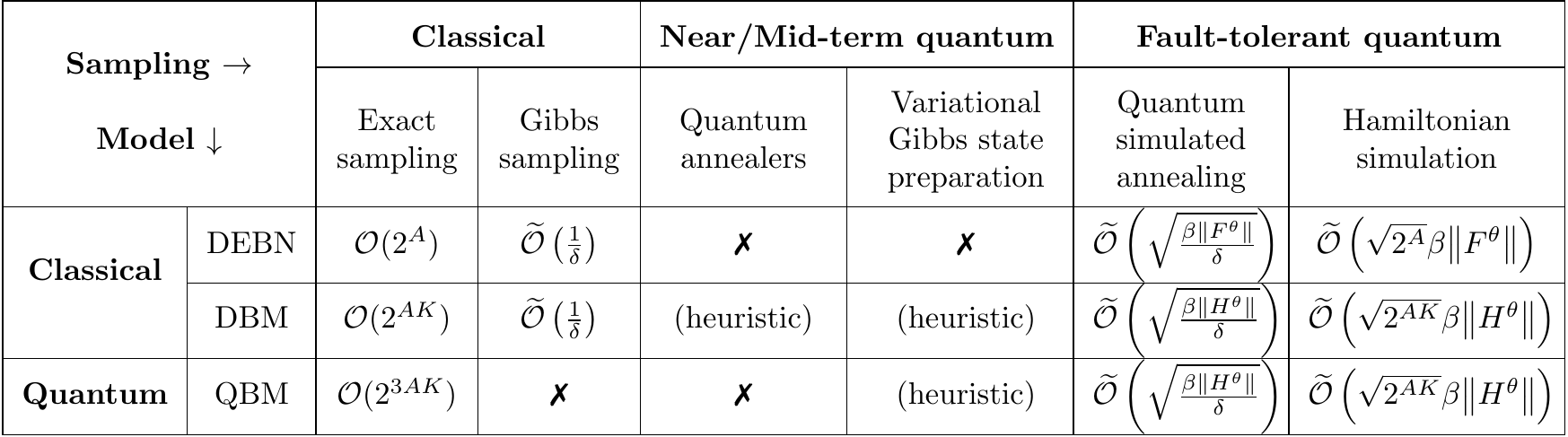}
	\caption{\textbf{Applicability of different sampling algorithms to several energy-based models.} $\norm{F^{\theta}}$ corresponds to the maximum free energy of the model, while $\norm{H^{\theta}}$ is its maximum clamped energy, i.e., the energy after tracing out the hidden units or simply fixing the input units respectively. $K$ counts the number of hidden units of the model while $A$ counts its number of action units. $\beta$ is the inverse temperature of the sampled Gibbs distribution and $\delta$ the spectral gap of the Gibbs sampling Markov chain using for sampling (when applicable). The complexities of exact sampling come from basic linear algebra techniques, while the complexities of classical Gibbs sampling are due to lower bound results on mixing times of Markov chains \cite{aldous82} (see also Appendix \ref{sec:walks}). The complexities for the quantum algorithms are proven in Theorem \ref{thm:gibbs-state-preparation}.}\vspace{-1.25\baselineskip}
	\label{table}
\end{table*}

Before discussing our quantum subroutines, we start by describing the energy-based models studied in this section. Having already introduced in Sec.\ \ref{sec:DEB-RL} restricted Boltzmann machines (RBMs), deep energy models (DEMs) and their associated free-energy DEBNs, we now focus on DBMs and QBMs. As is common in statistical and quantum physics, we switch from energy functions to a description in terms of Hamiltonians, involving here Pauli $\sigma^{x}$ and $\sigma^{z}$ matrices. Similarly to Sec.\ \ref{sec:DEB-RL}, we define our merit function approximators $M^{\bm{\theta}}(\bm{s,a})$ using the negative free energy $-F^{\bm{\theta}}(\bm{s,a})$, i.e., the energy after tracing out hidden units of these models.\\

\textbf{Deep Boltzmann machines.} DBMs \cite{salakhutdinov09,srivastava12} are generalizations of RBMs with additional hidden layers. They are described by a diagonal Hamiltonian of the form:
		\begin{align}\label{eq:DBM-Hamiltonian}
			H^{\bm{\theta}}_\text{DBM} &= - \sum_{i,k_{1}} w_{i,k_{1}} \sigma^z_i \sigma^z_{k_{1}} - \sum_{j,k_{1}} w_{j,k_{1}} \sigma^z_j \sigma^z_{k_1}\\
			&\quad - \sum_{\eta,k_{\eta},k_{\eta+1}} w_{k_{\eta},k_{\eta+1}} \sigma^z_{k_{\eta}} \sigma^z_{k_{\eta+1}} - \sum_{l} b_{l} \sigma^z_l \nonumber
		\end{align}
		parametrized by a real-valued vector $\bm{\theta} = (w_{i,k_{1}}, w_{j,k_{1}}, w_{k_{\eta},k_{\eta+1}}, b_l)_{i,j,k_\eta,l}$ and where we adopt the notation that assigns the indices $i,j,l$ to state, action and all units respectively and the indices $k_{\eta}$ to hidden units of the $\eta$-th hidden layer. Note that general Boltzmann machines with unrestricted connections (UBM) can also be mapped to Hamiltoninans of the form $H^{\bm{\theta}}_\text{DBM}$. One way of doing this is by simulating the intra-layer connections of a UBM using new connections to additional hidden units (that can be placed in the next hidden layer) \cite{gao17}.

		The DBM free energy however does not have a neural network formulation, and is instead expressed as:
		\begin{align}\label{eq:DBM-free-energy}
			F^{\bm{\theta}}_\text{DBM}(&\bm{s},\bm{a}) =  -\mathrm{log}\left(\sum_{\bm{h}}{e^{-H^{\bm{\theta}}(\bm{s,a},\bm{h})}}\right)\\
			&= - \sum_{i,k_{1}} w_{i,k_{1}} s_i \langle h_{k_{1}}\rangle - \sum_{j,k_{1}} w_{j,k_{1}} a_j \langle h_{k_1}\rangle \nonumber\\
			& - \sum_{\eta,k_{\eta},k_{\eta+1}} w_{k_{\eta},k_{\eta+1}} \langle h_{k_{\eta}} h_{k_{\eta+1}}\rangle - \sum_{k} b_{k} \langle h_k\rangle \nonumber\\
			& - \sum_{i} b_{i}s_i - \sum_{j} b_{j}a_j + \sum_{\bm{h}} P(\bm{h}|\bm{s,a})\mathrm{log}(P(\bm{h}|\bm{s,a})) \nonumber
		\end{align}
		where $\langle h_k \rangle$ is the expected value of the hidden unit $k$ under the conditional distribution $P(\bm{h}|\bm{s,a})$.\\

\textbf{Semi-transverse quantum Boltzmann machines.} QBMs \cite{amin18,kappen20} are a generalization of Boltzmann machines with additional transverse-field terms $\sigma^{x}$ in their Hamiltonians that allow for eigenvectors that are superpositions of computational basis states. The resulting non-diagonal Hamiltonians are of the form:
		\begin{align}\label{eq:QBM-Hamiltonian}
			H^{\bm{\theta}}_\text{QBM} &= - \sum_{i,k_{1}} w_{i,k_{1}} \sigma^z_i \sigma^z_{k_{1}} - \sum_{j,k_{1}} w_{j,k_{1}} \sigma^z_j \sigma^z_{k_1} \\
			& - \sum_{\eta,k_{\eta},k_{\eta+1}} w_{k_{\eta},k_{\eta+1}} \sigma^z_{k_{\eta}} \sigma^z_{k_{\eta+1}} - \sum_{l} b_{l} \sigma^z_l - \sum_{k} \Gamma_{k} \sigma^x_k \nonumber
		\end{align}
		parametrized by a real-valued vector $\bm{\theta} = (w_{i,k_{1}}, w_{j,k_{1}}, w_{k_{\eta},k_{\eta+1}}, b_l, \Gamma_k)_{i,j,k_\eta,l,k}$ and where we adopt the notation that assigns the indices $k$ to all hidden units. We choose to only apply the transverse-field terms $\Gamma_k \sigma^x_k$ on hidden units (hence the qualifier \emph{semi-transverse}). This choice allows to express the Hamiltonian in the form $H^{\bm{\theta}}_\text{QBM} = \sum_{\bm{v},\bm{h_{v}}} \lambda_{\bm{v},\bm{h_{v}}} \ket{\bm{v,h_v}}\bra{\bm{v,h_v}}$, where $\ket{\bm{v}} = \ket{\bm{s,a}}$  is a given configuration of the state and action units (i.e., a computational basis state) but $\ket{\bm{h_v}}$ is allowed to be an arbitratry superposition of computational basis states (see Appendix \ref{sec:bm_energy-based}). In turn, this leads to a straightforward derivation of $\Lambda_{\bm{v}}H^{\bm{\theta}}_\text{QBM}\Lambda_{\bm{v}} = \sum_{\bm{h_v}} \lambda_{\bm{v},\bm{h_{v}}} \ket{\bm{v,h_v}}\bra{\bm{v,h_v}}$ for  $\Lambda_{\bm{v}} = \Lambda_{\bm{s,a}} = \ket{\bm{s,a}}\bra{\bm{s,a}} \otimes I_{\bm{h}}$ a projector on a given configuration $\bm{v} = (\bm{s,a})$, which helps evaluate the QBM's free energy and its gradient via sampling (see Sec.\ \ref{sec:function+gradient-evaluation}). These considerations are similar to the ones of Wiebe et al.\ \cite{wiebe19} (albeit visible units are the ones allowed to be in superposition in their case), who used QBMs for generative training of quantum states.

		This free energy again does not take a neural-network formulation and has the following expression:
		\begin{equation}\label{eq:QBM-free-energy}
			F^{\bm{\theta}}_\text{QBM}(\bm{s},\bm{a}) = \text{Tr}[\rho_{\bm{s,a}} H^{\bm{\theta}}_\text{QBM}] + \text{Tr}[\rho_{\bm{s,a}} \mathrm{log}(\rho_{\bm{s,a}})]
		\end{equation}
		where $\rho_{\bm{s,a}} = \frac{\Lambda_{\bm{s,a}}e^{-H^{\bm{\theta}}_\text{QBM}}\Lambda_{\bm{s,a}}}{\text{Tr}\left[\Lambda_{\bm{s,a}}e^{-H^{\bm{\theta}}_\text{QBM}}\right]}$ is the QBM Gibbs state after a projective measurement $\Lambda_{\bm{s,a}}$.\\

\textbf{Access models. }In the case of RBMs and DEMs, we take advantage of the computability of their free energies to avoid using their full Hamiltonian, and consider instead the Hamiltonians defined by their associated DEBNs.

All the Hamiltonians considered here allow sparse access \cite{berry09}, meaning that we can efficiently (i.e., in polynomial time) construct the two unitary oracles:
\begin{equation*}
	\hat{O}_H \ket{j}\ket{k}\ket{z} = \ket{j}\ket{k}\ket{z \oplus H_{jk}}, \quad \hat{O}_F \ket{j}\ket{l} = \ket{j}\ket{f(j,l)}
\end{equation*}
For a $d$-sparse Hamiltonian $H$, $\hat{O}_H$ accepts a row index $j$ and column index $k$ and returns $H_{jk}$ in a binary format, while $\hat{O}_F$ accepts a row index $j$ and a number $l \in [d]$ and computes in-place the column index $f(j,l)$ of the $l$-th non-zero element in row $j$.

While sparse access is sufficient to run all the quantum subroutines described below, its implementation requires resources far beyond those NISQ devices can offer. This is the case of DEBN (and DEM) Hamiltonians for instance, where the implementation of $\hat{O}_H$ requires the computation of the entire neural network function coherently, which is out of reach of near-term implementations. DBMs and QBMs however also naturally satisfy access in the form of linear combinations of (local) unitaries $H = \sum_{n=1}^N \alpha_n U_n$, where $N$ scales at most quadratically with the number of units of the model, which eases up significantly their implementation.

\subsection{Quantum algorithms for Gibbs state preparation\label{sec:quantum-gibbs-preparation}}

At the crux of training energy-based models is the ability to sample from Gibbs distributions of the form
\begin{equation}\label{eq:gibbs-distribution}
p_\beta(\bm{x}) = \frac{e^{- \beta H(\bm{x})}}{\text{Tr}_{\bm{x}}[e^{-\beta H}]}
\end{equation}
of a Hamiltonian $H$ specifying our learning model, where we assume $\left\{\bm{x}\right\}$ to be a subset of the eigenvectors of $H$, for simplicity\footnote{For QBMs, we are as well interested in Gibbs distributions for configurations $\left\{\bm{x}\right\}$ that are not part of the eigenbasis of $H$, leading to a different expression of (\ref{eq:gibbs-distribution}) that involves the eigendecomposition of $H$. For classical Hamiltonians (e.g., DBMs), these eigenvectors are simply computational basis states.}. While the energy $H(\bm{x})$ of a single configuration can be efficiently computed, the partition function $\text{Tr}_{\bm{x}}[e^{-\beta H}] = \sum_{\bm{x}} e^{-\beta H(\bm{x})}$ involves the energies of combinatorially many configurations, making it hard to approximate in general \cite{long10}.

Classically, this distribution can be computed explicitly by evaluating the energy of all the configurations $\bm{x}$ we want to sample over and normalizing the resulting vector of energies. This is the approach we took in the numerical simulations of Sec.\ \ref{sec:numerical-simulations}.
Alternatively, one can use random walk algorithms such as Gibbs sampling to generate samples from an approximation of this distribution (for completeness, we describe these random walk algorithms in Appendix \ref{sec:walks}).

In a quantum setting, one is interested in the similar goal of preparing approximately a state of the form\footnote{Actually, the general goal of purified Gibbs state preparation is to prepare a state $\ket{\phi}_{AB}$ such that $\text{Tr}_{B}[\ket{\phi}\bra{\phi}_{AB}] \approx \ket{\psi}\bra{\psi} = e^{- \beta H}/\text{Tr}[e^{-\beta H}]$. The state of Eq.\ (\ref{eq:gibbs-state}) is a particular instance, called a \emph{coherent encoding}, which can be more useful than a general purified Gibbs state, e.g., when estimating the partition function of the associated Gibbs distribution (\ref{eq:gibbs-distribution}) \cite{wocjan08b}.}
\begin{equation}\label{eq:gibbs-state}
\ket{\psi_\beta} = \frac{1}{\sqrt{\text{Tr}_{\bm{x}}[e^{-\beta H}]}}\sum_{\bm{x}} \sqrt{e^{- \beta H(\bm{x})}} \ket{\bm{x}},
\end{equation}
called a purified Gibbs state of our model Hamiltonian.\break This is because measuring this state in the basis $\{\bm{x}\}_{\bm{x}}$ effectively returns a sample from the Gibbs distribution (\ref{eq:gibbs-distribution}). Interestingly, the exact and Gibbs sampling approaches mentioned above both have their quantum analogue, each providing a quadratic speed-up in sampling over the classical algorithms (up to overheads in some parameters). However, these quantum algorithms utilize many gates and qubits and will likely require fault-tolerant implementations to be used. To consider approaches that may be more amenable to NISQ constraints, we also study a variational approach to Gibbs state preparation that trades off guarantees of speed-up and convergence with less resource-consuming primitives.\\

We summarize the complexities of these sampling algorithms in the following informal theorem, before providing their short descriptions.

\begin{theorem}[Purified Gibbs state preparation (informal)]\label{thm:gibbs-state-preparation}
Given sparse access to one of the model Hamiltoninans listed above and a bound $\norm{M^{\theta}}$ on its free energy or clamped energy (i.e., the energy after tracing out the hidden units or simply fixing the input units respectively), we can prepare a purified Gibbs state at inverse temperature $\beta$ in either $$T_\text{Gibbs} = \widetilde{\mathcal{O}} \left( \sqrt{\frac{\beta \norm{M^{\bm{\theta}}}}{\delta}} \right)$$ quantum walk steps, where $\delta$ is a lower bound on the spectral gaps of the Markov chains associated to this quantum walk; or in $$T_\text{Gibbs}' =\widetilde{\mathcal{O}} \left(\sqrt{2^{n}} \beta\norm{M^{\bm{\theta}}}\right)$$ gates and queries to the Hamiltonian, where $n$ is the number of action (and hidden) units of the model.\\ Alternatively, for DBMs and QBMs, one can opt for a heuristic variational approach, where the cost of evaluating the objective function is in $$\widetilde{\mathcal{O}}\left(\norm{M^{\theta}}/\varepsilon + 1/(p_{\min}^2\beta\varepsilon)\right)$$ gates and preparations of the parametrized quantum circuit, where $\varepsilon$ is the precision of the estimate and $p_\text{min}$ an approximation parameter.
\end{theorem}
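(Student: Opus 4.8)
# Proof Proposal for Theorem \ref{thm:gibbs-state-preparation}

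\textbf{Overall strategy.} The plan is to treat the three claimed complexities as three essentially independent reductions to existing primitives, and to spend most of the work on interfacing the sparse-access (or LCU-access) model of our Hamiltonians with those primitives. For the first bound $T_\text{Gibbs}$, I would invoke the quantum-walk-based Gibbs sampler in the spirit of Yung--Aspuru-Guzik and the refined analysis of Harrow--Wei / van Apeldoorn et al., which prepares a coherent encoding of the Gibbs state in $\widetilde{\mathcal{O}}(\sqrt{\beta\lVert H\rVert/\delta})$ steps of a Szegedy-type walk, where $\delta$ lower-bounds the relevant spectral gaps. For the second bound $T_\text{Gibbs}'$, I would use the more recent ``fixed-point / amplitude-amplification'' style Gibbs preparation that avoids the gap dependence at the cost of a $\sqrt{\dim}$ overhead, giving $\widetilde{\mathcal{O}}(\sqrt{2^n}\,\beta\lVert H\rVert)$ queries. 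For the variational bound, I would bound the cost of estimating the free-energy objective $F(\bm v) = \langle E\rangle_{P(\bm h|\bm v)} + S(\bm h|\bm v)$ to precision $\varepsilon$ using amplitude estimation for the energy term and a von Neumann entropy estimator for the entropy term.

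\textbf{Key steps, in order.} First, I would establish the access-model lemma: each listed Hamiltonian ($H_\text{DEBN}$, $H_\text{DBM}$, $H_\text{QBM}$) is $d$-sparse with $d = \mathrm{poly}(\text{units})$, the oracles $\hat O_H,\hat O_F$ are implementable in $\mathrm{poly}$ time (for DEBNs this is where the coherent forward pass is spent; for DBMs/QBMs the LCU form $H=\sum_n\alpha_n U_n$ with $N=\mathcal{O}(\text{units}^2)$ is immediate from Eqs.~(\ref{eq:DBM-Hamiltonian}) and (\ref{eq:QBM-Hamiltonian})). Second, for $T_\text{Gibbs}$: I would construct the sequence of ``interpolating'' Gibbs states at inverse temperatures $0 = \beta_0 < \beta_1 < \dots < \beta_r = \beta$ with $r = \widetilde{\mathcal{O}}(\sqrt{\beta\lVert H\rVert})$ (the standard annealing schedule, the $\lVert H\rVert$ being replaced by the tighter free-energy/clamped-energy bound $\lVert M^{\bm\theta}\rVert$ since only the projected block matters), realize each interpolation step by one reflection built from the Szegedy walk operator whose cost is $\widetilde{\mathcal{O}}(1/\sqrt{\delta})$ by phase estimation / amplitude amplification, and multiply. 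For the QBM case I would note the projector identity $\Lambda_{\bm v}H_\text{QBM}\Lambda_{\bm v} = \sum_{\bm h_v}\lambda_{\bm v,\bm h_v}\lvert\bm v,\bm h_v\rangle\langle\bm v,\bm h_v\rvert$ established in the text, so that the walk is performed inside the clamped block and the relevant norm is the clamped energy. Third, for $T_\text{Gibbs}'$: I would substitute the gap-free preparation routine (prepare the maximally mixed / uniform state, apply $e^{-\beta H/2}$ as a block-encoded operator via quantum singular value transformation, and amplitude-amplify onto the success branch), whose success amplitude is $\sqrt{\mathrm{Tr}[e^{-\beta H}]/2^n} \geq 2^{-n/2}e^{-\beta\lVert H\rVert/2}$, giving the stated $\widetilde{\mathcal{O}}(\sqrt{2^n}\beta\lVert M^{\bm\theta}\rVert)$ after expanding the QSVT polynomial degree $\widetilde{\mathcal{O}}(\beta\lVert H\rVert)$. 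Fourth, for the variational bound: given the parametrized circuit preparing $\rho_{\bm s,\bm a}$, I would estimate $\mathrm{Tr}[\rho H]$ to additive error $\varepsilon$ with $\widetilde{\mathcal{O}}(\lVert H\rVert/\varepsilon)$ amplitude-estimation rounds, and estimate $S(\rho)$ via the ``quantum entropy from samples / purified-state'' estimator whose cost scales as $\widetilde{\mathcal{O}}(1/(p_{\min}^2\beta\varepsilon))$ with $p_{\min}$ the minimal nonzero eigenweight that appears in the convergence bound; summing gives the claimed objective-evaluation cost.

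\textbf{Main obstacle.} The hard part will be item three of the ordering made rigorous for \emph{all} listed models simultaneously, i.e.\ arguing that $\lVert H\rVert$ in the generic Gibbs-preparation complexity can legitimately be replaced by the tighter quantity $\lVert M^{\bm\theta}\rVert$ (the free energy after tracing out the hidden units, or the clamped energy). For DBMs this requires showing that the annealing path can be restricted to the clamped subspace $\Lambda_{\bm v}\mathcal{H}$ so that only the clamped-energy spectrum enters the schedule length and the amplification overhead; for QBMs it requires the same but now using the eigendecomposition-respecting projector identity, and checking that the von Neumann entropy term in the free energy is handled consistently (since $\rho_{\bm s,\bm a}$ is mixed). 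A secondary subtlety is the $\delta$ in the first bound: I must be careful that it is a lower bound on the gaps of the \emph{entire family} of interpolating walk operators, not just the final one, and state that this is an assumed (heuristic) input exactly as the ``informal'' qualifier of the theorem allows. The entropy-estimation cost $1/(p_{\min}^2\beta\varepsilon)$ in the variational part likewise inherits $p_{\min}$ as an uncontrolled approximation parameter, which I would flag rather than bound.
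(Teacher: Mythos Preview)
Your proposal is correct and follows essentially the same three-pronged decomposition as the paper: quantum simulated annealing (Harrow--Wei combined with Yung--Aspuru-Guzik) for $T_\text{Gibbs}$, the block-encoding / amplitude-amplification route of van Apeldoorn et al.\ for $T_\text{Gibbs}'$, and the Chowdhury et al.\ energy-plus-entropy estimator for the variational objective. The one recalibration worth making is that your ``main obstacle'' is slightly misplaced: the replacement of $\lVert H\rVert$ by $\lVert M^{\bm\theta}\rVert$ is handled by a simple affine shift of the Hamiltonian before invoking the primitives (see the paper's Appendix~\ref{sec:deferred-thms}), whereas the actual new technical content is the paper's Lemma~1, namely showing that the \emph{adaptive} annealing schedule and amplitude-amplification machinery of Wocjan et al.\ and Harrow--Wei---stated only for classical (diagonal) Hamiltonians---carry over to the non-diagonal QBM Hamiltonian via Yung--Aspuru-Guzik's generalized Szegedy walk operators, which is what brings the QBM complexity down from $\widetilde{\mathcal{O}}(\beta^2\lVert M^{\bm\theta}\rVert^2/\sqrt{\delta})$ to the claimed $\widetilde{\mathcal{O}}(\sqrt{\beta\lVert M^{\bm\theta}\rVert/\delta})$.
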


A recurring quantity in the subroutines of Theorem \ref{thm:gibbs-state-preparation} is a bound on the model Hamiltonian's norm $\norm{M^{\theta}}$, that one needs to estimate to be able to apply them. In order to do so, one has several possibilities:
\begin{itemize}[leftmargin=2mm]
	\item Use $\norm{M^{\theta}} = \norm{\bm{\theta}}_1$, the $l_1$-norm of all the weights that parametrize a chosen model, as this quantity also bounds the Hamiltonian's maximum eigenvalue
	\item Use the maximum finding algorithm to get an approximate value \cite{durr96,van20} of $\norm{M^{\theta}}$ (even though this algorithm has a complexity in $\widetilde{\mathcal{O}}\left(\sqrt{2^n}\right)$ for $n = A \text{ or } AK$)
	\item Assume that the approximated merit function is close to the true function and use a bound on the latter, e.g., $\frac{\abs{R}_\text{max}}{1-\gamma}$ in the case of VBMs, where $\abs{R_\text{max}}$ is the largest obtainable reward and $\gamma$ the environment's discount factor.
\end{itemize}

The proof of Theorem \ref{thm:gibbs-state-preparation} is a combination of known results along with an additional Lemma derived from previous works. In the following, we summarize these results, namely fault-tolerant quantum algorithms based on Hamiltonian simulation and quantum simulated annealing to prepare purified Gibbs states, followed by variational approaches to Gibbs state preparation suitable for near-term quantum devices.\\

\textbf{Gibbs state preparation using simulation. }The approach of van Apeldoorn et al.\ \cite{van20} consists in using Hamiltonian simulation techniques \cite{low19} to implement an approximation of the operation $f(H) = e^{-H}$. Since this operation is not unitary, one needs to rely on so-called block-encodings to apply it, that is, construct a unitary $\tilde{U}$ such that $\norm{(\bra{0}\otimes I)\tilde{U}(\ket{0}\otimes I) - e^{-H}} \leq \varepsilon$. This construction has a complexity in $\widetilde{O}\left(\norm{H}d \right)$ for a $d$-sparse Hamiltonian. One can then apply $\tilde{U}$ on the A-register of the maximally entangled state $\sqrt{2^{-n}}\sum_{i} \ket{i}_A\ket{i}_B$ and use $\widetilde{\mathcal{O}}\left(\sqrt{2^n/\text{Tr}[e^{-H}]}\right)$ rounds of amplitude amplification \cite{yoder14}\break to amplify the relevant $(\ket{0}\otimes I)$-subspace. This results in a good approximation of the desired purified Gibbs state. To get rid of the $1/\text{Tr}[e^{-H}]$ term in the complexity, the authors additionally apply a linear transformation on $H$, that we defer, along with a detailed restatement of their result, to Appendix \ref{sec:deferred-thms}. Note that, since this approach is based on Hamiltonian simulation techniques, it is applicable to both classical and quantum (i.e., non-diagonal in the computational basis) Hamiltonians without the need for special considerations.\\

\textbf{Quantum simulated annealing (QSA). }In the case of quantum simulated annealing, classical and quantum Hamiltonian require a separate treatment, that we describe below. We start by an overview of the classical-Hamiltonian case and then explain how its methods can be generalized to the quantum-Hamiltonian case.

\textit{Classical Hamiltonians. }The idea behind the QSA approach is to quantize Gibbs sampling random walks to achieve a quadratic speed-up in runtime. Random walk algorithms have a runtime complexity that depends on the spectral gap $\delta$ of their associated Markov chains (MCs). This runtime, also called mixing time of the MC, is in $\tilde{\mathcal{O}}(\delta^{-1})$ (see Appendix \ref{sec:walks} for more details). An example of such MCs is specified by  the Gibbs sampling (or more generally, Metropolis-Hastings) algorithm. Pictorially, these MCs define the transition probabilities of a random walk over the eigenstates of the Hamiltonian that is governed by their respective eigenvalues, such that, after the MC mixing time,  eigenstates are sampled according to their Gibbs distribution. A quantum analogue of classical random walks are Szegedy-type quantum walks \cite{szegedy04,magniez11}. They generalize a MC $P$ to a unitary $W(P)$ called a Szegedy walk operator: by taking $P$ to be a Gibbs sampling MC whose stationary distribution is the Gibbs distribution $p_\beta$ of Eq.\ (\ref{eq:gibbs-distribution}), $W(P)$ has the quantum state $\ket{\psi_\beta}$ of Eq.\ (\ref{eq:gibbs-state}) as its eigenvalue-$1$ eigenvector. Using a combination of phase estimation \cite{brassard02} and amplitude amplification \cite{yoder14}, one can transform an initial state $\ket{\phi}$ into the desired state $\ket{\psi_\beta}$ using a total of $\tilde{\mathcal{O}}\big(\big(\abs{\braket{\psi_\beta}{\phi}}\sqrt{\delta}\big)^{-1}\big)$ applications of $W(P)$ \cite{wocjan08}.\break
Direct preparation of $\ket{\psi_\beta}$ from an arbitrary state $\ket{\phi}$, e.g., the uniform superposition $\ket{u}=\sqrt{2^{-n}}\sum_{i} \ket{i}$, can however be intractable since the overlap $\abs{\braket{u}{\pi_\beta}}^2$ can be as small as $2^{-n}$ (or even worse for a different $\ket{\phi}$) \cite{orsucci18}. To get around this issue, one can instead use these same tools to prepare several intermediary states $\ket{u} = \ket{\psi_{\beta_0}},\ \ket{\psi_{\beta_1}},\ \ldots,\ \ket{\psi_{\beta_l}}$ that follow an \emph{annealing schedule} on the inverse temperature $\beta$ of the Gibbs distribution. Intuitively, if the overlap between these intermediate states is made large by construction, then the whole transformation should be efficient. Formalizing this idea, Wocjan et al.\ have shown \cite{wocjan08} that given a schedule $\beta_0=0,\ \beta_1,\ \ldots,\ \beta_l=\beta$ and Gibbs sampling Markov chains $P_{\beta_0},\ P_{\beta_1},\ \ldots,\ P_{\beta_l}$ with stationary distributions $p_{\beta_0},\ p_{\beta_1},\ \ldots,\ p_{\beta_l}$ that fulfill the slow-varying condition $|\braket{\psi_{\beta_i}}{\psi_{\beta_{i+1}}}|^2 \geq c$, preparing an $\varepsilon$-approximation of $\ket{\psi_\beta}$ has complexity $\mathcal{O}(\frac{l}{c}\sqrt{\delta^{-1}}\log^2(\frac{l}{\varepsilon}))$. Using an \emph{adaptive} annealing schedule, Harrow \& Wei \cite{harrow20} were able to provide a construction with length $l=\tilde{\mathcal{O}}(\sqrt{\log(\text{Tr}[e^{-\beta H}]^{-1})})\leq\tilde{\mathcal{O}}(\sqrt{\beta \left\lVert M^{\bm{\theta}} \right\rVert})$, leading to a total preparation complexity in $\tilde{\mathcal{O}}(\sqrt{\delta^{-1}\beta \left\lVert M^{\bm{\theta}} \right\rVert})$.\\

\textit{Quantum Hamiltonians. }In the case of a quantum Hamiltonian (e.g., for QBMs), while we can still construct sparse access to this Hamiltonian, its eigenstates are not computational basis states in general but instead superpositions that are a priori unknown. Since random walk algorithms need access to the eigenstates of a problem Hamiltonian in order to navigate its energy landscape (and eventually sample from its associated Gibbs distribution), this prevents us from constructing the necessary Gibbs sampling Markov chains and their associated quantum walk operators\footnote{This is also the reason why we deem classical Gibbs sampling to be inapplicable to QBMs in Table \ref{table}, and why exact sampling has larger complexity for QBMs than DBMs: one should additionally diagonalize the QBM Hamiltonian to be able to exponentiate it.}. A way to circumvent this issue was introduced by Temme et al.\ \cite{temme11}, then further generalized to work with quantum walk operators and QSA by Yung \& Aspuru-Guzik \cite{yung12}. While these approaches use different constructions, both rely at their core on Hamiltonian simulation and quantum phase estimation to get access to the eigenvalues of a quantum Hamiltonian without knowledge of the structure of its eigenvectors. Because we're interested here in quantum walk approaches, we focus here on the method of Yung \& Aspuru-Guzik. This method, as presented, doesn't use \emph{adaptive} annealing schedules and relies on projective measurements instead of quantum amplitude amplification to adiabatically transform intermediary states, analogously to the approach of Somma et al.\ \cite{somma08}. For these reasons, the method has a complexity in $\widetilde{\mathcal{O}} \left( \beta^2 \norm{M^{\bm{\theta}}}^2 / \sqrt{\delta} \right)$, quartically worse in $\beta$ and $\norm{M^{\bm{\theta}}}$ than with diagonal Hamiltonians. 
We prove however, in the following Lemma, that the advances in QSA for classical Hamiltonians presented above are also applicable to quantum Hamiltonian, leading to the same complexities.

\begin{lemma}
	Given sparse access to a quantum Hamiltonian $H$ and a bound $\norm{M^{\theta}}$ on its free energy or clamped energy, we can prepare via adaptive quantum simulated annealing a quantum state $\varepsilon$-close to the coherent encoding of its Gibbs distribution at inverse temperature $\beta$
\begin{equation}\label{eq:coherent-encoding-quantum-Gibbs}
\frac{1}{\sqrt{\textnormal{Tr}[e^{-\beta H}]}}\sum_{\bm{\varphi}} \sqrt{e^{- \beta \bra{\bm{\varphi}}H\ket{\bm{\varphi}}}} \ket{\bm{\varphi}} \ket{\tilde{\bm{\varphi}}}
\end{equation}
where $\ket{\bm{\varphi}}$ are eigenstates of $H$ and $\ket{\tilde{\bm{\varphi}}}$ their complex conjugates; using $$\widetilde{\mathcal{O}} \left( \sqrt{\frac{\beta\norm{M^{\bm{\theta}}}}{\delta}} \right)$$ quantum walk steps, where $\delta$ is a lower bound on the spectral gaps of the Markov chains associated to this quantum walk.
\end{lemma}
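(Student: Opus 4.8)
The plan is to assemble the statement from three ingredients already surveyed in the text, the point being that none of them genuinely used the diagonality of $H$ in the computational basis. First, the construction of Yung \& Aspuru-Guzik \cite{yung12} promotes a sparse quantum Hamiltonian $H$ to a Szegedy-type walk operator $W$ acting on a doubled register together with a phase-estimation ancilla, using only Hamiltonian simulation and phase estimation to access the eigenvalues of $H$; this $W$ has the coherent encoding of Eq.~(\ref{eq:coherent-encoding-quantum-Gibbs}) (with the complex-conjugate ancilla $\ket{\tilde{\bm\varphi}}$ appearing for exactly the same reason it appears in the classical Szegedy fixed point $\sum_x\sqrt{\pi(x)}\ket{x}\ket{x}$) as its eigenvalue-$1$ eigenvector, and its relevant phase gap scales as $\Omega(\sqrt{\delta})$ where $\delta$ lower-bounds the spectral gaps of the associated Markov chains. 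Second, the amplitude-amplification-based adiabatic state transformation of Wocjan et al.\ \cite{wocjan08} turns $\ket{\psi_{\beta_i}}$ into $\ket{\psi_{\beta_{i+1}}}$ in $\widetilde{\mathcal{O}}(\delta^{-1/2})$ applications of $W$ whenever the slow-varying condition $|\braket{\psi_{\beta_i}}{\psi_{\beta_{i+1}}}|^2 \geq c$ holds, and it is this tool (rather than the projective-measurement-based transformation originally used by Yung \& Aspuru-Guzik) that avoids the quadratic-in-$\beta\norm{M^{\bm\theta}}$ loss. Third, the adaptive cooling schedule of Harrow \& Wei \cite{harrow20} produces $0=\beta_0<\beta_1<\dots<\beta_l=\beta$ of length $l=\widetilde{\mathcal{O}}(\sqrt{\log \operatorname{Tr}[e^{-\beta H}]^{-1}})$ satisfying precisely this condition.

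\textbf{Steps.} (1) Fix the walk operator $W$ from \cite{yung12} and record that its stationary state is the coherent encoding and that its phase gap is $\Omega(\sqrt{\delta})$; both facts are in \cite{temme11,yung12}, now read through the Szegedy formalism. (2) Verify that the Harrow--Wei schedule can be run using only $W$: their schedule is built by estimating, at temperature $\beta_i$, ratios of partition functions $\operatorname{Tr}[e^{-\beta_{i+1}H}]/\operatorname{Tr}[e^{-\beta_i H}]$ (equivalently the overlaps $|\braket{\psi_{\beta_i}}{\psi_{\beta_{i+1}}}|^2$), and these are obtained by running phase estimation of $H$ on the already-prepared $\ket{\psi_{\beta_i}}$, which outputs eigen-energy samples drawn with Boltzmann weights $e^{-\beta_i E}/\operatorname{Tr}[e^{-\beta_i H}]$ --- the quantum analogue of sampling a configuration and reading off its energy. (3) Bound the length: since the norm bound controls the free (resp.\ clamped) energy, one has $\log\operatorname{Tr}[e^{-\beta H}]^{-1}\leq \beta\norm{M^{\bm\theta}}$ up to a dimension prefactor absorbed into $\widetilde{\mathcal{O}}$, so $l=\widetilde{\mathcal{O}}(\sqrt{\beta\norm{M^{\bm\theta}}})$. (4) Chain the $l$ transformations, each costing $\widetilde{\mathcal{O}}(\delta^{-1/2})$ walk steps, for the claimed $\widetilde{\mathcal{O}}(\sqrt{\beta\norm{M^{\bm\theta}}/\delta})$. (5) Propagate the phase-estimation precision and finite-amplification errors through the $l$ steps, showing the accumulated error is controllable with only polylogarithmic overhead, exactly as in the classical analysis of \cite{harrow20,wocjan08}.

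\textbf{Main obstacle.} The delicate step is (2): showing that the Harrow--Wei adaptive-schedule machinery --- designed and analyzed for classical Gibbs samplers, where one literally samples a configuration and computes its energy --- survives when eigenstates are unknown superpositions accessible only through a phase-estimation register. One must check that phase estimation on $\ket{\psi_{\beta_i}}$ yields approximately unbiased energy samples with the correct Boltzmann weights; that the finite precision of phase estimation does not bias the ratio estimators enough to violate the slow-varying condition; and that the ancilla registers introduced by $W$ and by phase estimation factor out cleanly, so that the overlaps $|\braket{\psi_{\beta_i}}{\psi_{\beta_{i+1}}}|$ are unaffected and the "nondestructive" measurement used to choose the next temperature can be uncomputed and the state recycled. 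Each sub-claim has a classical counterpart, so I expect this to be careful bookkeeping rather than a new idea, but it is the one place where the quantum-Hamiltonian case genuinely departs from the diagonal case and therefore must be argued in full.
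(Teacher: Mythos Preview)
Your three-ingredient strategy matches the paper's proof exactly: combine Yung--Aspuru-Guzik's generalized Szegedy walk operators with Wocjan et al.'s amplitude-amplification-based adiabatic transformation (the paper's Step~1) and then with Harrow--Wei's adaptive schedule (the paper's Step~2). The one place you diverge is in your mechanism for running the adaptive schedule: you propose to estimate partition-function ratios by phase-estimating $H$ on $\ket{\psi_{\beta_i}}$ and sampling eigen-energies, which is precisely why you flag step~(2) as the main obstacle. The paper sidesteps this entirely by observing that Harrow--Wei's adaptive-schedule tools---nondestructive amplitude estimation and binary search for the next $\beta'$, their Theorems~9 and~10---are already formulated abstractly in terms of Szegedy-type walk operators and state overlaps, not energy samples; so once Yung--Aspuru-Guzik supply walk operators with the correct eigenvalue-$1$ eigenstate and phase gap $\Omega(\sqrt{\delta})$, those tools apply verbatim with the analysis read in the Hamiltonian eigenbasis rather than the computational basis. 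Your energy-sampling route could likely be made to work, but the paper's observation dissolves the careful bookkeeping you anticipate in your ``main obstacle'' paragraph. (A minor side note: the overlap $|\braket{\psi_{\beta_i}}{\psi_{\beta_{i+1}}}|^2$ equals $Z_{(\beta_i+\beta_{i+1})/2}^2/(Z_{\beta_i}Z_{\beta_{i+1}})$, not the simple ratio $Z_{\beta_{i+1}}/Z_{\beta_i}$ you wrote; this does not affect the argument but is worth correcting.)
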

\begin{proof}
The proof of this lemma goes in two steps, each extending the construction of Yung \& Aspuru-Guzik \cite{yung12} to the algorithmic improvements of Wocjan et al.\ \cite{wocjan08} and Harrow \& Wei \cite{harrow20} respectively:\\

\textit{Step 1. }We first note that the generalized Szegedy walk operators defined by Yung \& Aspuru-Guzik for quantum Hamiltonians satisfy the relevant properties that are required for the QSA approach of Wocjan et al.\ (Theorem 2 in Ref.\ \cite{wocjan08}) that is based on amplitude amplification instead of projective measurements. These properties are namely having a purified Gibbs state of the form (\ref{eq:coherent-encoding-quantum-Gibbs}) as eigenvalue-$1$ eigenstate and a phase gap $\Delta \geq 2\sqrt{\delta}$. Note that, as shown by Yung \& Aspuru-Guzik, the eigenbasis of the quantum Hamiltonian is not needed to construct the generalized walk operators and input states of these QSA algorithms. Additionally, by working in the eigenbasis of the quantum Hamiltonian instead of the computational basis, the error analysis in Lemma 3 and Corollary 3 of Ref.\ \cite{wocjan08} to derive the length and overlaps of the annealing sequence is straightforwardly applicable to quantum Hamiltonians, leading to a preparation complexity in $\widetilde{\mathcal{O}} \left( \beta \norm{M^{\bm{\theta}}} / \sqrt{\delta} \right)$.\\

\textit{Step 2. }By tracing the construction steps of Harrow \& Wei (Theorem 4 of Ref.\ \cite{harrow20}), we can see that the proof of existence of an annealing (or cooling) schedule which ensures the slow-varying condition (i.e., large overlaps between intermediary states) in their adiabatic preparation is indenpendent of the Hamiltonian eigenbasis. Hence the proof can be directly expressed in the eigenbasis of quantum Hamiltonians without any other change. Moreover, the tools they use for the adaptive construction of these schedules, namely nondestructive amplitude estimation for binary search of the next inverse temperature $\beta'$ (Theorems 9 and 10 of Ref.\ \cite{harrow20}), are also applicable to generalized Szegedy walk operators. Hence, by applying the same methods, we obtain an overall preparation complexity in $\widetilde{\mathcal{O}} \left( \sqrt{\beta \norm{M^{\bm{\theta}}} / \delta} \right)$.
\end{proof}

\textbf{Variational Gibbs state preparation. }Recent advances in the field of quantum computation have motivated the development of algorithmic approaches that are compatible with near/mid-term devices. One particularly promising approach for the preparation of Gibbs states is based on a variational optimization method. The general idea is to train a parametrized Ansatz $\ket{\psi_{\bm{\theta}}} = U(\bm{\theta}) \ket{\bm{0}}$ such that it approximates the state of Eq.\ (\ref{eq:gibbs-state}), or equivalently, that its density matrix $\rho = \ket{\psi_{\bm{\theta}}} \bra{\psi_{\bm{\theta}}}$ approximates the Gibbs state $e^{-\beta H}/\text{Tr}[e^{-\beta H}]$. Since this target Gibbs state  is the unique state that minimizes the free energy
\begin{equation}\label{eq:variational-free-energy}
	F = \text{Tr}[\rho H] + \frac{1}{\beta}\text{Tr}[\rho \mathrm{log}(\rho)]
\end{equation}
with respect to the Hamiltonian $H$, one can use this free energy as an objective function to be minimized by $\ket{\psi_{\bm{\theta}}}$. The existing methods in variational Gibbs state preparation \cite{wu19,chowdhury20,wang20} vary in three aspects:
\begin{itemize}[leftmargin=4mm]
	\item The choice of Ansatz: Chowdhury et al.\ \cite{chowdhury20} use an Ansatz derived from Trotterized adiabatic state preparation, while Wu \& Hsieh \cite{wu19} opt for an Ansatz motivated by QAOA, and Wang et al.\ \cite{wang20} choose a so-called hardware-efficient Ansatz.
	\item The subroutines to evaluate the Ansatz' free energy (Eq.\ (\ref{eq:variational-free-energy})): the objective function is composed of the average energy of $\rho$ with respect to $H$ and the von Neumann entropy of $\rho$. The latter is the more term challenging to evaluate. While Wu \& Hsieh consider a decomposition into several Renyi entropies, measured using swap tests \cite{subacsi19}; Wang et al.\ and Chowdhury et al.\ use a truncation of its Taylor series, where the terms are computed via (high-order) swap tests and amplitude estimation \cite{brassard02} respectively.
	\item The optimization method: here one can opt for a gradient-descent approach, where several evaluations of the objective function and finite-difference formulas can be used to estimate its gradient; or else use gradient-free methods such as Nelder-Mead or Powell's method \cite{nocedal06}.
\end{itemize} 

The approaches for free energy evaluation are especially useful for us since they can directly be used for the merit function evaluation subroutine of Theorem \ref{thm:merit-function-evaluation}. We choose to use the approach of Chowdhury et al., that we restate in Appendix \ref{sec:deferred-thms} for completeness. Note however that, depending on the RL application, a different approach might be more suitable, as different approximation methods correspond to different effective models. Further investigations should reveal which methods are more interesting in a RL context.

\subsection{Quantum algorithms for merit function and gradient evaluation\label{sec:function+gradient-evaluation}}

As mentioned above, the neural network formulation of the RBM and DEM free energies (as DEBNs) allows for efficient exact computation of the modeled merit functions and their gradients. As for DBMs and QBMs, it is easy to see from the formulation of their free energies (Eqs.\ \ref{eq:DBM-free-energy} and \ref{eq:QBM-free-energy}) that these should be estimated via sampling from their Gibbs distributions. More specifically, these are the Gibbs distributions of the \emph{clamped} Hamiltonians, where state and action units have been fixed to a given configuration.

Regarding the gradients of these free energies however, the argument becomes slightly more intricate in the case of QBMs. As derived in Appendix \ref{sec:deferred-thms}, derivatives of the QBM free energy take the form $\partial_\theta F(\bm{s,a}) = - \text{Tr}[\Lambda_{\bm{s,a}} \partial_\theta e^{-H_\text{QBM}}]/\text{Tr}[\Lambda_{\bm{s,a}} e^{-H_\text{QBM}}]$. Due to the non-diagonal terms in $H_\text{QBM}$, we have that $H_\text{QBM}$ and $\partial_\theta H_\text{QBM}$ do not commute, making it impossible to use the identity $\partial_\theta e^{-H} = -\partial_\theta H e^{-H}$ that would apply for instance to DBM Hamiltonians. Nonetheless, due to our restriction to \emph{semi-transverse} QBMs and by the use of Duhamel's formula  \cite{amin18}, we can show that the gradient of the QBM free energy takes the form of expectation values $\text{Tr}[(\partial_\theta H_\text{QBM}) \rho_{\bm{s,a}}]$ where $\partial_\theta H_\text{QBM}$ are simply (products of) Pauli $\sigma^z$ and $\sigma^x$ operators appearing in $H_\text{QBM}$.

Hence, merit function and gradient evaluation boil down to Gibbs state preparation again, but with fixed state and action units $\bm{s,a}$. We formalize the subroutines used for these evaluations in the following theorems.

\begin{theorem}[DBM/QBM merit function evaluation]\label{thm:merit-function-evaluation}
Let $H^{\bm{\theta}}_\text{DBM/QBM}$ be a Hamiltonian describing a deep Boltzmann machine as specified by Eq.\ (\ref{eq:DBM-Hamiltonian}) or a semi-transverse quantum Boltzmann machine as specified by Eq.\ (\ref{eq:QBM-Hamiltonian}) and $(\bm{s},\bm{a})$ a given assignment of the state and action units. The approximate free energy $F^{\bm{\theta}}_\text{DBM/QBM}(\bm{s},\bm{a}) = -M^{\bm{\theta}}(\bm{s},\bm{a})$ associated to this state-action pair can be estimated to precision $\varepsilon$ in $$\widetilde{\mathcal{O}}\left(T_\text{Gibbs}\cdot\left(\frac{\norm{\bm{\theta}}_1}{\varepsilon} + \frac{1}{p_{\min}^2\varepsilon}\right)\right)$$ where $T_\text{Gibbs}$ is the complexity of preparing the Gibbs state of the clamped Hamiltonian, $\norm{\bm{\theta}}_1 = \sum_{i,k} \abs{w_{i,k}}s_i + \sum_{j,k} \abs{w_{j,k}}a_j + \sum_{k,k'} \abs{w_{k,k'}}\ (+\sum_k \abs{\Gamma_k})$ and $p_{\min}$ is the cut-off used in the estimation of the entropy term of the free energy.
\end{theorem}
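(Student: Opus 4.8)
The plan is to reduce the evaluation of $F^{\bm{\theta}}_\text{DBM/QBM}(\bm{s},\bm{a})$ to two expectation-value estimation problems on the \emph{clamped} Gibbs state $\rho_{\bm{s,a}}$ --- the Gibbs state of the Hamiltonian $H_{\bm{s,a}}$ obtained from $H^{\bm{\theta}}_\text{DBM/QBM}$ by fixing the state and action units to $(\bm{s},\bm{a})$ --- each copy of which can be produced at cost $T_\text{Gibbs}$ by invoking Theorem~\ref{thm:gibbs-state-preparation} on $H_{\bm{s,a}}$ (for DBMs a diagonal Hamiltonian on the hidden register; for QBMs the genuinely quantum operator $\Lambda_{\bm{s,a}}H^{\bm{\theta}}_\text{QBM}\Lambda_{\bm{s,a}}$ restricted to the hidden register, which is covered by the preceding Lemma, the semi-transverse structure guaranteeing that $\rho_{\bm{s,a}}$ is a bona fide Gibbs state there). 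From Eqs.~(\ref{eq:DBM-free-energy}) and~(\ref{eq:QBM-free-energy}) the free energy splits uniformly as an average-energy contribution $\mathrm{Tr}[\rho_{\bm{s,a}}H_{\bm{s,a}}]$ plus an entropy contribution $\mathrm{Tr}[\rho_{\bm{s,a}}\log\rho_{\bm{s,a}}]$ (the negated Shannon conditional entropy $-H(\bm{h}|\bm{s,a})$ in the DBM case, the von Neumann analogue for QBMs); it therefore suffices to estimate each to precision $\varepsilon/2$ and add the results.

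For the energy term I would write the clamped Hamiltonian as a linear combination of unitaries, $H_{\bm{s,a}} = c_0 I + \sum_n \alpha_n P_n$, where the $P_n$ are the surviving Pauli strings on the hidden units (the $\sigma^z\sigma^z$ couplings between hidden layers, the $\sigma^z$ terms whose partners were clamped to $s_i$ or $a_j$, the hidden biases, and, for QBMs, the $\sigma^x$ transverse fields), and the coefficient $1$-norm $\sum_n\abs{\alpha_n}$ equals, up to the hidden-unit biases which can be absorbed by an always-on auxiliary unit, the quantity $\norm{\bm{\theta}}_1$ of the statement. Then $\mathrm{Tr}[\rho_{\bm{s,a}}H_{\bm{s,a}}] = c_0 + \sum_n\alpha_n\mathrm{Tr}[\rho_{\bm{s,a}}P_n]$, and one estimates the normalized quantity $\tfrac{1}{\norm{\bm{\theta}}_1}\sum_n\alpha_n\mathrm{Tr}[\rho_{\bm{s,a}}P_n]$ by amplitude estimation on a single coherent circuit that loads the coefficients into an ancilla (in the state $\propto\sum_n\sqrt{\abs{\alpha_n}}\,\ket{n}$), controlled-applies a Hadamard test for $\mathrm{sgn}(\alpha_n)P_n$ on the system register of the purified Gibbs state, and reads off the test qubit; this yields additive error $\varepsilon/(2\norm{\bm{\theta}}_1)$ on the normalized estimate, hence $\varepsilon/2$ on the energy, using $\widetilde{\mathcal{O}}(\norm{\bm{\theta}}_1/\varepsilon)$ rounds, each invoking the Gibbs-preparation unitary $\mathcal{O}(1)$ times.

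For the entropy term I would invoke the truncated-series subroutine of Chowdhury et al.~\cite{chowdhury20} (restated in Appendix~\ref{sec:deferred-thms}), the same primitive underlying the variational cost in Theorem~\ref{thm:gibbs-state-preparation}. Writing $\mathrm{Tr}[\rho\log\rho] = -\log K + \mathrm{Tr}[\rho\log(K\rho)]$ for a scale $K\le 1$ and expanding $\log(K\rho) = -\sum_{k\ge 1}\tfrac1k(I-K\rho)^k$, one truncates at order $\widetilde{\mathcal{O}}(1/p_{\min})$; the systematic error from eigenvalues of $\rho_{\bm{s,a}}$ below the cut-off $p_{\min}$ and from the truncation is $\widetilde{\mathcal{O}}(p_{\min})$, controllable because $x\log x\to 0$, and each term $\mathrm{Tr}[\rho(I-K\rho)^k]$ is estimated by amplitude estimation on a block-encoding of $I-K\rho$ assembled from the purified Gibbs state, giving in total $\widetilde{\mathcal{O}}(1/(p_{\min}^2\varepsilon))$ uses of that block-encoding, i.e.\ $\widetilde{\mathcal{O}}(T_\text{Gibbs}/(p_{\min}^2\varepsilon))$ gates. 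Adding the energy and entropy costs via the triangle inequality yields $\widetilde{\mathcal{O}}\!\left(T_\text{Gibbs}(\norm{\bm{\theta}}_1/\varepsilon + 1/(p_{\min}^2\varepsilon))\right)$, as claimed.

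The main obstacle is the entropy term: one must simultaneously bound the bias from discarding eigenvalues below $p_{\min}$ (which forces the $p_{\min}$-dependence) and show that the Taylor-truncation and amplitude-estimation errors compose additively into an $\varepsilon$-accurate estimate --- this is where the $1/p_{\min}^2$, rather than $1/p_{\min}$, scaling arises --- and one must argue that $\rho_{\bm{s,a}}$, accessed only through the purified Gibbs state of Theorem~\ref{thm:gibbs-state-preparation}, can be fed into the block-encoding and Hadamard-test constructions coherently, so that amplitude estimation (which needs the preparation unitary and its inverse) applies. For the QBM this last point relies essentially on the semi-transverse restriction, via the block decomposition $\Lambda_{\bm{v}}H^{\bm{\theta}}_\text{QBM}\Lambda_{\bm{v}} = \sum_{\bm{h_v}}\lambda_{\bm{v},\bm{h_v}}\ket{\bm{v,h_v}}\bra{\bm{v,h_v}}$ established around Eq.~(\ref{eq:QBM-free-energy}), which is what makes ``clamp, then prepare a Gibbs state on the hidden register'' well defined.
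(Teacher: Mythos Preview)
Your proposal is correct and follows essentially the same approach as the paper: split the free energy into the average-energy and entropy terms, prepare the clamped Gibbs state via Theorem~\ref{thm:gibbs-state-preparation}, and invoke the Chowdhury et al.\ subroutines (restated in Appendix~\ref{sec:deferred-thms} as Theorems~\ref{thm:chowdhury-et-al_est-avg-energy} and~\ref{thm:chowdhury-et-al_est-entropy}) to estimate each at the stated costs. The paper's proof simply cites those two theorems as black boxes and adds a remark on boosting the success probability via the powering lemma, whereas you unpack their internals (the LCU/Hadamard-test construction for the energy, the Taylor-truncation analysis for the entropy); but the route is the same.
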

\begin{proof}Using one of the subroutines in Theorem \ref{thm:gibbs-state-preparation}, we can prepare a purified Gibbs state of the clamped model Hamiltonian at inverse temperature $\beta = 1$ in time $T_\text{Gibbs}$. We then use the subroutines of Theorem \ref{thm:chowdhury-et-al_est-avg-energy} and Theorem \ref{thm:chowdhury-et-al_est-entropy} (Appendix \ref{sec:deferred-thms}) to evaluate the average energy and von Neumann entropy of this state using $\widetilde{\mathcal{O}}\left(\norm{\bm{\theta}}_1/\varepsilon\right)$ and $\widetilde{\mathcal{O}}\left(1/(p_{\min}^2\varepsilon)\right)$ calls to the Gibbs state preparation unitary respectively. Note that the success probability of these subroutines can be amplified to $1-\delta$ using $\log(\delta^{-1})$ repetitions and taking the median of the outcomes, according to the powering lemma \cite{jerrum86}.
\end{proof}

\begin{theorem}[Gradient evaluation of a DBM/QBM merit function]\label{thm:gradient-merit-function}
Let $H^{\bm{\theta}}_\text{DBM/QBM}$ be a Hamiltonian describing a deep Boltzmann machine as specified by Eq.\ (\ref{eq:DBM-Hamiltonian}) or a semi-transverse quantum Boltzmann machine as specified by Eq.\ (\ref{eq:QBM-Hamiltonian}) and $(\bm{s},\bm{a})$ a given assignment of the state and action units. The gradient of the free energy $\nabla_{\bm{\theta}} F^{\bm{\theta}}_\text{DBM/QBM}(\bm{s},\bm{a}) = -\nabla_{\bm{\theta}} M^{\bm{\theta}}(\bm{s},\bm{a})$ associated to this state-action pair can be estimated to precision $\varepsilon$ in $$\widetilde{\mathcal{O}}\left(T_\text{Gibbs}\cdot\min\left\{\frac{K + K_L^2(L-1)}{\varepsilon}, \frac{1}{\varepsilon^2}\right\}\right)$$ where $T_\text{Gibbs}$ is the complexity of preparing the Gibbs state of the clamped Hamiltonian, $K$ is the number of hidden units in the model and $K_L$ is the number of hidden units in each of the $L$ layers of the model.
\end{theorem}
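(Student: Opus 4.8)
The plan is to show that every component of $\nabla_{\bm{\theta}} F^{\bm{\theta}}_\text{DBM/QBM}(\bm{s},\bm{a})$ is, up to a known prefactor, the expectation value of a one- or two-body Pauli observable in the clamped Gibbs state $\rho_{\bm{s,a}}$, and then to estimate all of these expectations by one of two complementary strategies whose costs are respectively $\widetilde{\mathcal{O}}\big(T_\text{Gibbs}(K+K_L^2(L-1))/\varepsilon\big)$ and $\widetilde{\mathcal{O}}\big(T_\text{Gibbs}/\varepsilon^2\big)$; keeping whichever is cheaper yields the stated bound. First I would invoke the derivation of Sec.~\ref{sec:function+gradient-evaluation} (carried out in Appendix~\ref{sec:deferred-thms}), by which each component equals $-\text{Tr}[(\partial_\theta H^{\bm{\theta}})\,\rho_{\bm{s,a}}]$, where $\rho_{\bm{s,a}}$ is the Gibbs state of the Hamiltonian with visible units clamped to $(\bm{s},\bm{a})$. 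For a DBM this is immediate because $H^{\bm{\theta}}$ is diagonal, so $[H^{\bm{\theta}},\partial_\theta H^{\bm{\theta}}]=0$ and $\partial_\theta e^{-H}=-(\partial_\theta H)e^{-H}$; for a \emph{semi-transverse} QBM it follows from Duhamel's formula together with the fact that confining the $\sigma^x$ terms to the hidden units keeps the clamped Hamiltonian $\Lambda_{\bm{s,a}} H^{\bm{\theta}}_\text{QBM}\Lambda_{\bm{s,a}}$ in the block form already used in Sec.~\ref{sec:DEB-RL}, so that the $\partial_\theta H^{\bm{\theta}}_\text{QBM}$'s are exactly (products of) the Pauli operators appearing in Eq.~(\ref{eq:QBM-Hamiltonian}). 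Inspecting Eq.~(\ref{eq:DBM-free-energy}) and its QBM analogue, the entries fall into four types: (i) visible-bias derivatives $\partial F/\partial b_i=-s_i$, $\partial F/\partial b_j=-a_j$, constants requiring no quantum computation; (ii) hidden-bias and visible--hidden derivatives, all of the form $-(\text{known factor})\cdot\langle h_k\rangle=-(\text{known factor})\cdot\text{Tr}[\sigma^z_k\rho_{\bm{s,a}}]$ --- note $\partial F/\partial w_{i,k_1}=-s_i\langle h_{k_1}\rangle$ depends on $i$ only through the clamped value, so only the $K_1$ distinct numbers $\langle h_{k_1}\rangle$ are needed; (iii) hidden--hidden derivatives $\partial F/\partial w_{k_\eta,k_{\eta+1}}=-\langle h_{k_\eta}h_{k_{\eta+1}}\rangle=-\text{Tr}[\sigma^z_{k_\eta}\sigma^z_{k_{\eta+1}}\rho_{\bm{s,a}}]$; and (iv), for QBMs only, transverse-field derivatives $\partial F/\partial\Gamma_k=-\text{Tr}[\sigma^x_k\rho_{\bm{s,a}}]$. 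Every such observable has operator norm at most $1$, and the number of \emph{nontrivial} ones is $N=\widetilde{\mathcal{O}}(K+K_L^2(L-1))$: at most $K$ single-body $\sigma^z$ terms, at most $K$ more single-body $\sigma^x$ terms in the QBM case, and $\sum_{\eta=1}^{L-1}K_\eta K_{\eta+1}\le (L-1)K_L^2$ two-body $\sigma^z\sigma^z$ terms.

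\textbf{Step 2 --- two estimators, then take the minimum.} For the first estimator I would prepare the purified clamped Gibbs state at $\beta=1$ via Theorem~\ref{thm:gibbs-state-preparation} (cost $T_\text{Gibbs}$) and measure: since the $\sigma^z$-type observables pairwise commute, one computational-basis measurement of the hidden register yields a joint sample of every $\langle\sigma^z_k\rangle$ and every $\langle\sigma^z_{k_\eta}\sigma^z_{k_{\eta+1}}\rangle$, and (for QBMs) one $\sigma^x$-basis measurement does the same for every $\langle\sigma^x_k\rangle$. By Hoeffding's inequality and a union bound over the $N$ components, $\widetilde{\mathcal{O}}(\varepsilon^{-2})$ repetitions pin down every component to additive error $\varepsilon$ with high probability, for a total cost $\widetilde{\mathcal{O}}(T_\text{Gibbs}/\varepsilon^2)$. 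For the second estimator I would treat the $N$ nontrivial components one by one: each $\text{Tr}[O\rho_{\bm{s,a}}]$ with a one- or two-body Pauli $O$ is converted into a measurement probability by a Hadamard-test circuit built on the Gibbs-state-preparation unitary of Theorem~\ref{thm:gibbs-state-preparation}, and amplitude estimation estimates that probability --- hence the component --- to precision $\varepsilon$ using $\widetilde{\mathcal{O}}(\varepsilon^{-1})$ calls to that unitary; boosting each estimate's success probability to $1-\delta/N$ by the powering lemma~\cite{jerrum86} plus a union bound gives a total of $\widetilde{\mathcal{O}}(N\,T_\text{Gibbs}/\varepsilon)=\widetilde{\mathcal{O}}\big(T_\text{Gibbs}(K+K_L^2(L-1))/\varepsilon\big)$. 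Running the cheaper of the two in the given parameter regime yields $\widetilde{\mathcal{O}}\big(T_\text{Gibbs}\cdot\min\{(K+K_L^2(L-1))/\varepsilon,\ \varepsilon^{-2}\}\big)$, and propagating a componentwise error $\varepsilon$ into the update $\Delta\bm{\theta}$ of Algorithm~\ref{alg:hybrid-EBRL} is routine.

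\textbf{Main obstacle.} I expect the only genuinely delicate point to be the QBM part of Step~1: establishing that, even though $H^{\bm{\theta}}_\text{QBM}$ and $\partial_\theta H^{\bm{\theta}}_\text{QBM}$ do not commute, the free-energy gradient still collapses to bare Pauli expectations in $\rho_{\bm{s,a}}$ --- this is exactly where the semi-transverse restriction and the Duhamel-formula manipulation of Appendix~\ref{sec:deferred-thms} are indispensable, and where one must keep careful track of which weights give single- versus two-body observables so that the count lands on $K+K_L^2(L-1)$ (and verify that the entropy-term contribution to $\partial_\theta F$ cancels against the variation of $\langle h_k\rangle$, so that the elementary Boltzmann-machine gradient formulas still hold for the clamped model). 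The remaining ingredients --- commuting-observable grouping, Hoeffding and union bounds, Hadamard tests, amplitude estimation, and the powering lemma --- are standard.
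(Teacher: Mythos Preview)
Your proposal is correct and follows essentially the same route as the paper's proof: reduce every gradient component to a one- or two-body Pauli expectation in the clamped Gibbs state $\rho_{\bm{s,a}}$ (via the Duhamel argument of Appendix~\ref{sec:deferred-thms} for the QBM case), count these as $\mathcal{O}(K+K_L^2(L-1))$, and then offer two estimators --- joint sampling by computational-/$\sigma^x$-basis measurements giving $\widetilde{\mathcal{O}}(T_\text{Gibbs}/\varepsilon^2)$, and per-component amplitude estimation giving $\widetilde{\mathcal{O}}(T_\text{Gibbs}(K+K_L^2(L-1))/\varepsilon)$ --- taking the minimum. The only cosmetic differences are that the paper invokes Chebyshev plus the powering lemma where you use Hoeffding plus a union bound, and that the paper explicitly flags that the amplitude-estimation branch requires a coherent encoding of the Gibbs state (cf.\ the footnote at Eq.~(\ref{eq:gibbs-state})), a caveat you may want to add.
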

\begin{proof}
We count the number of components of the gradient to of the order $\mathcal{O}\left(K + K_L^2(L-1)\right)$. Using one of the subroutines in Theorem \ref{thm:gibbs-state-preparation}, we can prepare a purified Gibbs state of the clamped model Hamiltonian at inverse temperature $\beta = 1$ in time $T_\text{Gibbs}$. From there, we can take two approaches:
\begin{itemize}[leftmargin=3mm, topsep=1mm]
	\item Measuring the hidden units of the Gibbs state in the computational basis gives us a vector $\bm{h}$ sampled from the distribution $P(\bm{h}|\bm{s,a})$. For the QBM, we also need to estimate (in a separate round of measurements) the expected values of Pauli $\sigma^x_k$ for all hidden units $k$, which can be done by applying a Hadamard gate on each hidden unit before measuring them in the computational basis. By repeating these measurements $\mathcal{O}(1/\varepsilon^2\log((K + K_L^2(L-1))/\delta)$ times and averaging the resulting vector samples, we can estimate each component of the gradient to precision $\varepsilon$ with probability $1-\delta$ using Chebyshev's inequality and the powering lemma.
	\item We can use amplitude estimation to compute the expected value of each hidden unit, Hadamard-transformed hidden unit and pair of hidden units. But every single term has to be estimated individually, which amounts to a total number of calls to the Gibbs state preparation unitary in $\mathcal{O}((K + K_L^2(L-1))/\varepsilon\log(\delta^{-1}))$ to get all estimates to precision $\varepsilon$ with probability $1-\delta$. Note that this approach only applies to Gibbs state preparation subroutines that create a coherent encoding of the Gibbs sate (see Eq.\ (\ref{eq:gibbs-state}) and associated footnote) since we need a coherent preparation of our quantum state to apply amplitude estimation on it \cite{wocjan08b}. It is hence compatible with the quantum simulated annealing and variational approaches but not the Hamiltonian simulation method.
\end{itemize}
\end{proof}

Note that the clamped model Hamiltonians now act on a subspace of size $2^K$ for a model with $K$ hidden units. This, as well as a bound $\norm{H^{\bm{\theta}}}$ on the model's clamped energy that is potentially smaller (since action units are fixed), leads to different complexities $T_\text{Gibbs}$ in Theorem \ref{thm:gibbs-state-preparation} and Table \ref{table}. Moreover, while the preparation of the clamped Gibbs state $\rho_{\bm{s,a}}$ is straightforward for the Hamiltonian simulation and quantum simulated annealing approaches, we point out that one needs to train again a new Ansatz $\ket{\psi_{\bm{\theta}}}$ in the variational approach. This is because, in general, one has no way of efficiently clamping state and action units in the Ansatz trained on the full Hamiltonian (post-selection on measurements or amplitude amplification are possible but resource-consuming). We consider this to be an interesting question that could be explored in future work.

\section{Conclusion and Outlook\label{sec:conclusion}}

Recent years have seen a massive surge of interest in leveraging results from quantum information theory to enhance machine learning methods and models. Nonetheless, deep reinforcement learning, as one of the emerging fields in machine learning, remains largely unaffected by this trend. This is in part due to an effort made by the machine learning community to avoid bottlenecks resulting from application-specific algorithms in favor of general applicability.\\
In this paper, we consider deep energy-based models specifically designed to cope with large state and action spaces in complex reinforcement learning (RL) environments. We demonstrate that these models can indeed outperform standard RL architectures such as deep-Q networks due to their capacity to represent complex correlations between state-action pairs.\\ 
The enhanced representational power of deep energy-based models comes at the cost of sampling from an associated probability distribution. Even approximate sampling is a hard problem and therefore, gives rise to a computational bottleneck. In this paper, we address this bottleneck by leveraging results from quantum approximate sampling algorithms in order to enable various models for deep energy-based RL.\\ 
Our main result is a hybrid deep energy-based RL algorithm that enables quadratic quantum speed-ups with full-scale quantum computers and is applicable to various deep energy-based models. In the interest of deploying near-term quantum devices, we further prove that the same algorithm can be combined with deep Boltzmann machines and quantum Boltzmann machines to provide heuristic quantum speed-ups with near-term quantum computers.\\

The presented framework for quantum energy-based RL does not limit itself to the methods and models presented here. Following current advances in quantum machine learning, we expect genuinely quantum function approximators such as parametrized quantum circuits \cite{schuld20,farhi18,chen19,schuld20b}, quantum generative models \cite{gao18} or quantum kernels \cite{havlivcek19,schuld19,lloyd20} to present interesting features (e.g., the ability to represent quantum correlations) that could provide advantages in complex RL environments. Moreover, our study of sampling algorithms has been considering the restrictive setting of isolated (or context-free) sampling distributions. The RL setting however offers an online structure that could be exploited by the sampling algorithms. For instance, inspired by existing quantum algorithms that speed-up sampling from sequences of slowly evolving Markov chains \cite{orsucci18}, one could adapt the Markov chain-based approach of quantum simulated annealing to exploit the property that the agent's policy is slowly evolving as well during training. In sum, our framework opens up a new avenue of connecting a large class of quantum models and methods to be applied in the context of quantum RL.

\section*{Acknowledgments}
SJ, HPN, LMT, and HJB acknowledge support from the Austrian Science Fund (FWF) through the projects DK-ALM:W1259-N27 and SFB BeyondC F71. SJ also acknowledges the Austrian Academy of Sciences as a recipient of the DOC Fellowship. HJB was also supported by the Ministerium für Wissenschaft, Forschung, und Kunst BadenWürttemberg (AZ:33-7533.-30-10/41/1). This work was in part supported by the Dutch Research Council (NWO/OCW), as part of the Quantum Software Consortium program (project number 024.003.037). The computational results presented here have been achieved in part using the LEO HPC infrastructure of the University of Innsbruck.

\bibliographystyle{ieeetr}
\bibliography{references}

\clearpage
\appendix

\section{Introduction to reinforcement learning\label{sec:intro-RL}}
A reinforcement learning scenario consists in a cyclic agent-environment interaction, where \emph{states} (or, equivalently in the present context, \emph{percepts}) $\bm{s}$, \emph{actions} $\bm{a}$ and \emph{rewards} $r$ are exchanged between the two parties. Every cycle of interaction is timed by a \emph{timestep} $t$, such that a given interaction is described by a \emph{history} $\bm{h} = (\bm{s^{(0)}}, \bm{a^{(0)}}, r^{(1)}, \bm{s^{(1)}}, \bm{a^{(1)}}, r^{(2)}, \ldots)$.\\

\textbf{Markov decision processes. }We choose a common description of the environment in terms of a Markov decision process (MDP), defined by a set of \emph{discrete} states $\mathcal{S}$, a set of \emph{discrete} actions $\mathcal{A}$, a transition probability distribution $\mathcal{P}: \mathcal{S}\times\mathcal{A}\times\mathcal{S} \rightarrow \mathbb{R}_+$, a (stochastic) reward function $\mathcal{R} : \mathcal{S}\times\mathcal{A} \rightarrow \mathbb{R}$ and a discount factor of future rewards $\gamma \in [0,1]$. Given that the state (action) space is discrete, we can assume that each state $\bm{s}$ (action $\bm{a}$) is given a binary representation in $\{ 0,1 \}^{n_S}$ ($\{ 0,1 \}^{n_A}$), with $n_S = \lceil \log|\mathcal{S}| \rceil$ ($n_A = \lceil \log\abs{\mathcal{A}} \rceil$).\\
For every interaction history $\bm{h} = (\bm{s^{(0)}}, \bm{a^{(0)}}, r^{(1)}, \bm{s^{(1)}}, \bm{a^{(1)}}, r^{(2)}, \ldots)$, we call \emph{discounted return} the discounted sum of rewards\footnote{Note that for a discount factor $\gamma < 1$, this sum is $\varepsilon$-close to its first $\left\lceil \log_\gamma{\frac{\varepsilon(1-\gamma)}{|R|_{\textrm{max}}}} \right\rceil$ terms, where $|R|_{\max} = \max_{\bm{s},\bm{a}}{\mathcal{R}(\bm{s},\bm{a})}$.} $R = \sum_{t=1}^{\infty} \gamma^{t-1} r^{(t)}$, a figure of merit describing the performance of the agent on a given task. The goal of RL is to design agents that learn a policy $\pi: \mathcal{S}\times\mathcal{A} \rightarrow \mathbb{R}_+$ such that for any given state $\bm{s} \in \mathcal{S}$, the sampled action $\bm{a} \sim \pi(\bm{a}|\bm{s})$ maximizes the discounted return $R$ for the remainder of the interaction. Even though this policy may not be unique, we call it the optimal policy $\pi^*$.\\

\textbf{Value-based RL. }So-called value-based methods in RL choose to resort to an additional function that explicitly links policy and return. The action-value function $Q(\bm{s},\bm{a}) = \mathbb{E}_\pi[R|\bm{s^{(0)}}=\bm{s},\bm{a^{(0)}}=\bm{a}]$, for instance, is the expected return given a state $\bm{s}$, in which the agent takes the action $\bm{a}$ and follows a policy $\pi$ thereafter. The Bellman equation \cite{bellman57} allows us to express $Q$ recursively:
\begin{equation}\label{eq:bellman}
    Q(\bm{s},\bm{a}) = \mathcal{R}(\bm{s},\bm{a}) + \gamma \sum_{\bm{s'},\bm{a'}} \mathcal{P}(\bm{s'}|\bm{s},\bm{a})\pi(\bm{a'}|\bm{s'}) Q(\bm{s'},\bm{a'})
\end{equation}
This recursive definition allows us to estimate the $Q$-function (that is, store \emph{approximate} values $\Hat{Q}(\bm{s},\bm{a})$ of the action-value function for all \emph{experienced} state-action pairs $(\bm{s},\bm{a})$) from sample interactions with the environment while following a policy $\pi$. Through accumulated learning experience, the approximate values $\Hat{Q}(\bm{s},\bm{a})$ get closer and closer to those of the true $Q$-function for that policy.\\
In order to derive a policy from the stored $\Hat{Q}$-values, a common transformation takes the form of a Boltzmann (or Gibbs) policy:
\begin{equation}\label{eq:boltzmann-policy}
    \pi(\bm{a}|\bm{s}) = \frac{e^{\beta \Hat{Q}(\bm{s},\bm{a})}}{\sum_{\bm{a'}}e^{\beta \Hat{Q}(\bm{s},\bm{a'})}}
\end{equation}
where $\beta > 0$ is an inverse temperature parameter that allows us to adjust the degree of exploration (v.s. exploitation) of the policy.\\
It can be shown \cite{sutton98} that a general procedure which alternates between updating the stored values $\Hat{Q}(\bm{s},\bm{a})$ according to (one of several) update rule(s) derived from the Bellman equation and using these values to update the policy $\pi$ of the agent according to Eq.\ (\ref{eq:boltzmann-policy}) leads to simultaneous convergence of $\Hat{Q}$ to the true $Q$-function and of $\pi$ to the optimal policy $\pi^*$. Two common update rules, belonging to a larger family of \emph{temporal difference (TD) learning} methods, are SARSA:
\begin{align}
	\Hat{Q}(\bm{s^{(t)}},\bm{a^{(t)}}) \leftarrow& (1-\alpha) \Hat{Q}(\bm{s^{(t)}},\bm{a^{(t)}})\\ &+ \alpha [r^{(t+1)} + \gamma \Hat{Q}(\bm{s^{(t+1)}},\bm{a^{(t+1)}})]\nonumber
\end{align}
and $Q$-learning:
\begin{align}
	\Hat{Q}(\bm{s^{(t)}},\bm{a^{(t)}}) \leftarrow& (1-\alpha) \Hat{Q}(\bm{s^{(t)}},\bm{a^{(t)}})\\ &+ \alpha [r^{(t+1)} + \gamma\max_{\bm{a}} \Hat{Q}(\bm{s^{(t+1)}},\bm{a})]\nonumber
\end{align}
where $\alpha$ is a learning rate in $[0,1]$. It can be noticed that the target estimates for $\Hat{Q}$-values in $Q$-learning differ from SARSA by that they are sampled from a $0$-temperature (or greedy) policy $ \pi(\bm{a}|\bm{s}) = \mathbbm{1}_{\bm{a}}\left(\textrm{argmax}_{\bm{a'}} \Hat{Q}(\bm{s},\bm{a'})\right)$.\\

\begin{figure}[h]
	\centering
	\includegraphics[width=1\columnwidth]{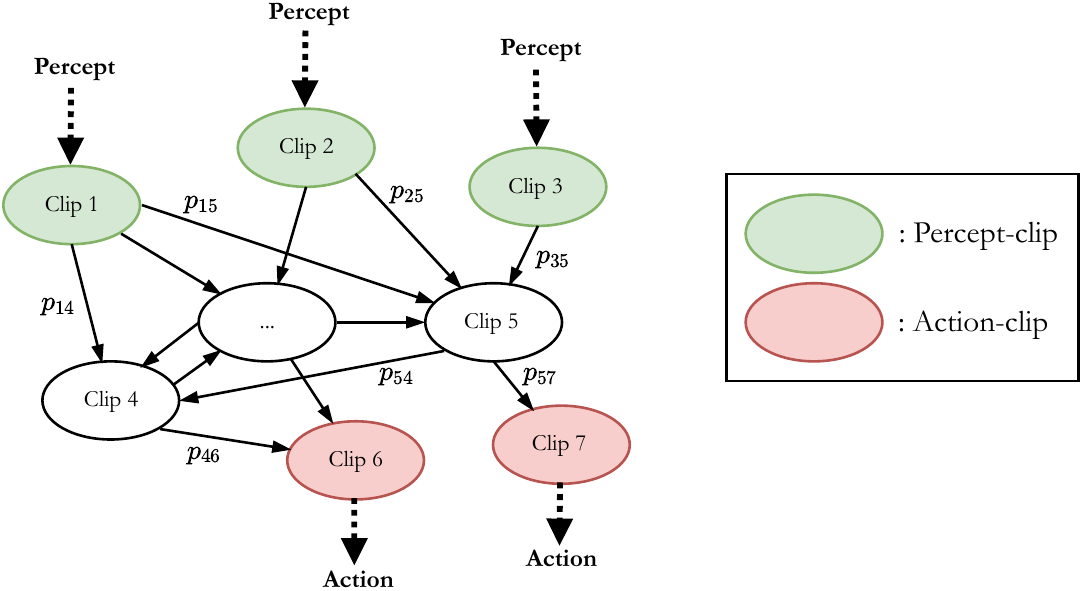}
	\caption{\textbf{A generic Episodic and Compositional Memory}}
	\label{fig:ECM}
\end{figure}

\textbf{Projective simulation. }Projective simulation (PS) \cite{briegel12} is a physics-inspired approach for the design of autonomous learning agents which is different from the value-based methods presented above. The central component of a PS agent is its so-called Episodic and Compositional Memory (ECM), formally represented as a stochastic network of \emph{clips} (see Fig.\ \ref{fig:ECM}). Clips (the nodes of the stochastic network), represent short episodes, i.e., remembered sequences of percepts, rewards and actions, built upon experience of the agent through its interactions with its environment and through certain compositional methods applied on more basic clips: \emph{percept-clips}, associated to single percepts perceived by the agent and \emph{action-clips}, associated to single actions it can perform. In PS, the deliberation mechanism (that is, the selection of an action given a percept) is based on a random walk process, initiated at the percept-clip associated to the percept received at a certain timestep. There are several possibilities in which the random walk process can be realized, but in the basic PS, this consists in hopping between clips until an action-clip is hit. The action associated with that action-clip is then executed on the environment and its associated reward is used to update the weights (or transitions probabilities) of the ECM to increase the probability of the path that led to that action.\\

\begin{figure}[h]
	\centering
	\includegraphics[width=1\columnwidth]{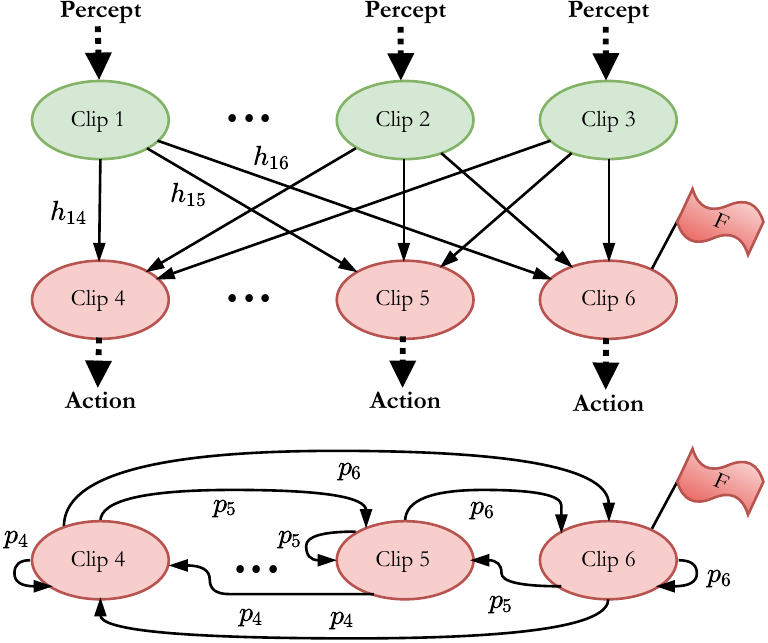}
	\caption{\textbf{A two-layered ECM with flagged actions and its associated action-clip Markov chain}}
	\label{fig:two-layered-ECM}
\end{figure}

\textbf{Two-layered PS. }The simplest architecture of an ECM is given by the so-called \emph{two-layered} PS (see Fig.\ \ref{fig:two-layered-ECM}): it is restricted to one layer of percept-clips linked to another layer of action-clips through weighted connections that we refer to as \emph{h}-values. These are real-valued weights constrained to be greater than (and initialized at) $1$. When normalized according to, e.g., a Boltzmann distribution:
\begin{equation}\label{eq:boltzmann-PS}
    \pi(\bm{a}=j|\bm{s}=i) = \frac{e^{\beta h_{ij}}}{\sum_{j'}e^{\beta h_{ij'}}}
\end{equation}
they give rise to the policy of the PS agent.\\
Learning takes place by updates of (all) h-values according to the following update rule:
\begin{equation}\label{eq:h-values}
    h_{ij} \leftarrow  h_{ij} - \gamma_\textrm{ps}(h_{ij} - 1) + g_{ij}\lambda
\end{equation}
where $\gamma_\textrm{ps} \in [0,1]$ is an internal damping (or forgetfulness) parameter allowing the agent to adapt to changing environments, $\lambda$ is the last \emph{positive} reward experienced by the agent, and the edge-glow $g_{ij}$ is a varying rescaling factor accounting for delayed rewards.\\
Even though these $h$-values do not derive from the Bellman equation (Eq.\ (\ref{eq:bellman})) and hence cannot be described as a value function, we can see a clear connection between the policies in Eqs.\ (\ref{eq:boltzmann-policy}) and (\ref{eq:boltzmann-PS}): both methods rely on storing one real value per state-action pair and normalizing these values for each state/percept to derive a policy. Elaborating on this simple connection, it was shown that these methods have a comparable performance on benchmarking tasks \cite{melnikov18}.\\

\textbf{Reflecting PS. }The PS framework includes many additions to the basic architecture presented above. One of such additions consists in adding a short-term memory structure to the ECM network, in the form of flags assigned to the action clips. A flag indicates if the corresponding action, for a given percept, recently lead to a reward -- a relevant information for agents in changing environments (not MDPs). The agent can use this additional information during his random walk on the ECM to deliberate on its next action: if the encountered action-clip isn't flagged, the random walk process can be reiterated, until a flagged action is hit or the process has been reiterated a maximum number of times $R_{\textrm{max}}$. This new deliberation process, called \emph{reflecting}, can be equivalently described by a Markov chain (or random walk) on the space of action clips (see Fig.\ \ref{fig:two-layered-ECM}), where the transition probabilities are defined according to the last percept $\bm{s_i}$ perceived by the agent. In the case of a two-layered PS agent, these transitions probabilities are:
\begin{equation}\label{eq:rPS-probs}
    p_j = \pi(\bm{a}=j|\bm{s}=i) = \frac{e^{\beta h_{ij}}}{\sum_{j'}e^{\beta h_{ij'}}}
\end{equation}
The selection of a flagged action has then a time complexity governed by the \emph{hitting time} of this Markov chain, which can benefit from a quadratic speed-up using quantum walks \cite{paparo14}. We define these notions in the next section.

\section{Random and quantum walks\label{sec:walks}}

\begin{figure}[h]
	\centering
	\includegraphics[width=1\columnwidth]{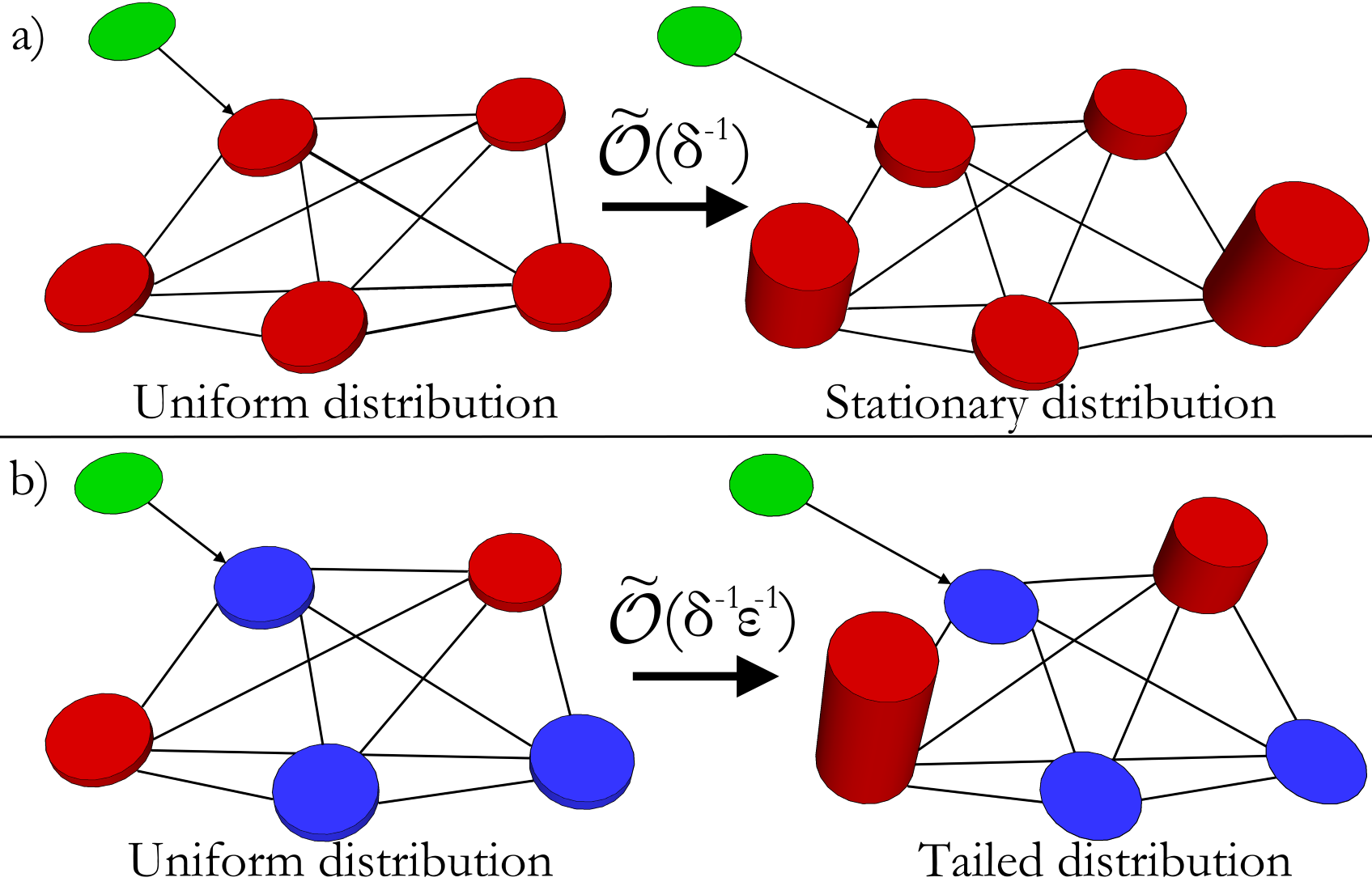}
	\caption{\textbf{Mixing-time v.s.\ Hitting-time random walks} Note that the graphs are generally non-symmetric (that is, the connections are directed), but were represented symmetric for simplicity. Marked elements are colored in red.}
	\label{fig:walk}
\end{figure}

\textbf{Mixing time. }A discrete-time random walk on a graph is described by a Markov chain (MC) that is specified by a transition matrix $P$ gathering the transition probabilities $[P]_{i,j}=P_{ij}$ from node $i$ to node $j$ of the graph. When the MC is ergodic (i.e., irreducible and aperiodic), it has a unique stationary distribution $\bm{\pi}=P\bm{\pi}$ over the nodes of the graph. $\bm{\pi}$ can be approximated by repeated applications $P^t\bm{\pi_0}$ of $P$ to any initial distribution $\bm{\pi_0}$, or equivalently, by applying $t$ steps of the random walk $P$ to any initial distribution $\bm{\pi_0}$ over the nodes of the graph. The minimal number of applications of $P$ needed to obtain an $\epsilon'$-close approximation of $\bm{\pi}$ (in $l_p$-norm) starting from \emph{any} $\bm{\pi_0}$ is called the \emph{mixing time} of $P$:
\begin{equation}
t_{\epsilon'}^{mix} = \mathrm{min}\{t : \forall \bm{\pi_0}, \ ||P^t \bm{\pi_0} - \bm{\pi}||_p \ \leq \epsilon'\}
\end{equation}
When the MC is moreover reversible, its mixing time can be related to its spectral properties. More especially, we have \cite{aldous82}:
\begin{equation}
\left\lfloor\frac{1-\delta}{2\log(2\epsilon')}\frac{1}{\delta}\right\rfloor\ \leq\ t_{\epsilon'}^{mix} \leq \ \left\lceil\frac{1}{\delta}\ \mathrm{log}\left(\frac{1}{\epsilon'\pi^*}\right)\right\rceil
\end{equation}
where $\delta = 1 - \max_{i>0}|\lambda_i| \in (0,1]$ (i.e., the difference between the largest and second largest eigenvalues in absolute value of the matrix $P$) is the \emph{spectral gap} of $P$ and $\pi^* = \min_i\{\pi_i\}$. We say that the mixing time of a reversible MC $P$ is of the order $\tilde{\mathcal{O}}(\delta^{-1})$\footnote{Intuitively, this has to do with the fact that all eigenvalues modulus of $P$ are strictly less than $1$ (except for the eigenvalue 1 associated to the stationary distribution), and repeated applications of $P$ make its (non-eigenvalue-1) eigenspaces fade away with a geometric rate corresponding to their eigenvalue modulus. The slowest fading rate is given by the second largest eigenvalue modulus.}.\\

\textbf{Hitting time. }Now suppose that the graph contains an (unknown) subset of marked nodes and that we are interested in finding one of these through a random walk on the graph. Let's call $\varepsilon \in [0,1]$ the proportion of marked elements in the stationary distribution $\bm{\pi}$ of $P$. Then note that a sample from $\bm{\pi}$ has probability $\varepsilon$ of being marked, which implies that on average $1/\varepsilon$ samples from $\bm{\pi}$ are needed to find a marked element. More generally, we are interested in preparing the \emph{tailed distribution} of marked elements at stationarity $\bm{\pi}'$, which is the renormalized stationary distribution with support only on the marked nodes. The average number of random walk steps needed to prepare $\bm{\pi}'$ starting from any distribution $\bm{\pi_0}$ is called the \emph{hitting time} of $P$ and satisfies:
\begin{equation}
t_{\epsilon'}^{hit} = t_{\epsilon'}^{mix}/\varepsilon
\end{equation}
In the case of a reversible MC, the hitting time is of the order $\tilde{\mathcal{O}}(\delta^{-1}\varepsilon^{-1})$.\\

\textbf{Metropolis-Hastings (MH). }The MH algorithm is a general Markov chain Monte Carlo method that allows one to sample from a probability distribution over a configuration space $\mathcal{X}$ given only access to its corresponding density function $f(\bm{x})$ without having to evaluate its value for all $\bm{x}$. When dealing with the energy-based models listed in Sec.\ \ref{sec:hybrid-EBRL}, $\mathcal{X}$ can either be the space $\mathcal{A}$ of all actions and $f(\bm{x}) = -F^{\bm{\theta}}(\bm{s,a})$ for a fixed state $\bm{s}$, or $\mathcal{X}$ is the space $\mathcal{A}\times\mathcal{H}$ of all configurations of action and hidden units and $f(x) = - E^{\bm{\theta}}(\bm{s,a,h})$. MH constructs a reversible MC $P$ having the desired distribution as stationary distribution by transforming a candidate transition matrix $K$ into $P$. This construction starts by introducing an acceptance ratio $R_{\bm{x}\bm{x'}}=\min\left\{ 1,\frac{\exp(\beta f(\bm{x'}))K_{\bm{x'}\bm{x}}}{\exp(\beta f(\bm{x}))K_{\bm{x}\bm{x'}}} \right\}$ for each pair $(\bm{x},\bm{x'}) \in \mathcal{X}$ which is used to define $P$ as follows:
\begin{equation}
	P_{\bm{x}\bm{x'}} = \left\{
    \begin{array}{cl}
        R_{\bm{x}\bm{x'}}K_{\bm{x}\bm{x'}} & \mbox{if }\bm{x'}\neq\bm{x}\\
        1 - \sum_{\bm{x''}\neq\bm{x}} R_{\bm{x}\bm{x''}}K_{\bm{x}\bm{x''}} & \mbox{if }\bm{x'}=\bm{x}
    \end{array}
\right.
\end{equation}
In practice, $K$ is chosen such that the sets $N(\bm{x})=\{ \bm{x'}: K_{\bm{x}\bm{x'}} > 0 \}$ define a neighborhood structure on $\mathcal{X}$ and $\max_{\bm{x}\in\mathcal{X}} N(\bm{x}) \ll \abs{\mathcal{X}}$. A possible choice for $N(\bm{x})$ and $K(\bm{x},\cdot)$ is, e.g., a Hamming ball around $\bm{x}$ and the uniform distribution over it, respectively.\\
If $K_{\bm{x}\bm{x}} > 0 \ \ \forall \bm{x}\in\mathcal{X}$ and the graph defined by $N$ is connected, then MH \cite{hastings70} ensures that $P$ is reversible and has as stationary distribution:
\begin{equation}\label{eq:policy-MH}
	\pi_\beta(\bm{x}) = \frac{e^{\beta f(\bm{x})}}{\sum_{\bm{a'}}e^{\beta f(\bm{x'})}}
\end{equation}
\\
\textbf{Gibbs sampling (GS). }Gibbs sampling \cite{geman84} is a particular instance of the Metropolis-Hastings algorithm where the neighborhood $N(\bm{x})$ is a Hamming ball of radius $1$ around $\bm{x}$ (including $\bm{x}$ itself) and the candidate transition matrix $K$ is given by:
\begin{equation}
	K_{\bm{x}\bm{x'}} = \left\{
    \begin{array}{cl}
       \frac{1}{\log{\abs{\mathcal{X}}}}\frac{e^{\beta f(\bm{x'})}}{e^{\beta f(\bm{x})}+e^{\beta f(\bm{x'})}} & \mbox{if }\bm{x'}\in N(\bm{x})\\
        0 & \mbox{if }\bm{x'}\notin N(\bm{x})
    \end{array}
\right. 
\end{equation}
This $K$ has the special property that its corresponding acceptance ratio $R_{\bm{x}\bm{x'}}=1$ for all $\bm{x'}\in N(\bm{x})$. In particular, the conditions for reversibility of $P$ are always met.\\


\textbf{Szegedy walks. }Szegedy-type quantum walks \cite{szegedy04,magniez11} are a quantum analogue of classical random walks. They generalize an MC $P$ to a unitary $W(P)$ called the \emph{Szegedy walk operator}. Here we describe how this unitary can be constructed and explain its interesting properties.\\
Let $\mathcal{H}$ be the Hilbert space that shares the same basis vectors as the MC $P=[P_{ij}]$. So-called \emph{quantum diffusion operators} $U_P$ and $V_P$ act on $\mathcal{H}\otimes\mathcal{H}$ as:
\begin{equation*}
    \begin{cases}
		U_P \ket{i}\ket{0} = \ket{i}\sum_j \sqrt{P_{ij}}\ket{j}\\
		V_P \ket{0}\ket{i} = \sum_j \sqrt{P^*_{ij}}\ket{j}\ket{i}
    \end{cases}
\end{equation*}
where $P^*$ is the time-reversed MC\footnote{Note that the asterisk on $P^*$ does not indicate complex conjugation.} associated to $P$. Its elements are given by $P^*_{ij}=P_{ji}\pi_i/\pi_j$, where $\bm{\pi}=(\pi_i)$ is the stationary distribution of $P$. When $P^*=P$, we say that the MC is \emph{reversible}. Although Szegedy walks can be defined if this property is not fulfilled, one would require additional access to $P^*$ to be able to implement $V_P$. Because this is not usually the case (the expression of $P^*$ involves the stationary distribution of $P$), the reversibility of $P$ is then crucial for the implementation of $V_P$.\\
With the quantum diffusion operators $U_P$ and $V_P$ at hand, we can construct the Szegedy walk operator $W(P)$, which is the reflection over the spaces $A$ and $B$:
\begin{equation*}
    \begin{cases}
		A =\textrm{span}\{ U_P \ket{i}\ket{0} : i=1,\ \ldots,\ N \}\\
		B =\textrm{span}\{ V_P \ket{0}\ket{i} : i=1,\ \ldots,\ N \}
    \end{cases}
\end{equation*}
\begin{equation*}
    W(P) = \textrm{ref}(B)\ \textrm{ref}(A)
\end{equation*}
where
\begin{equation*}
    \begin{cases}
		\textrm{ref}(A) = U_P\ \left[ \mathbbm{1}_N \otimes \textrm{ref}(0) \right]\ U_P^\dagger\\
		\textrm{ref}(B) = V_P\ \left[ \textrm{ref}(0) \otimes \mathbbm{1}_N \right]\ V_P^\dagger
    \end{cases}
\end{equation*}
and
\begin{equation*}
    \textrm{ref}(0) = 2{|{0}\rangle}{\langle{0}|} - \mathbbm{1}_N
\end{equation*}
We refer to $A+B$ as the busy subspace and to its orthogonal complement $A^\perp \cap B^\perp$ as the idle subspace. $W(P)$ acts as the identity on the idle subspace, but, on the busy subspace, its unique eigenvector with eigenvalue $1$ (or, equivalently, eigenphase $0$) is
\begin{equation*}
    \ket{\pi}\ket{0} = \sum_{i} \sqrt{\pi_i}\ket{i}\ket{0}
\end{equation*}
This follows from Refs. \cite{szegedy04,magniez11}.\\
As we were interested in the spectral gap $\delta$ of $P$ classically, we are interested in the \emph{phase gap} $\Delta = 2\min_{i>0}\abs{\theta_i}$, i.e., the smallest non-zero eigenphase of $W(P)$, in the quantum case. Given that the eigenvalues of $P$ and the eigenphases of $W(P)$ are related by $\abs{\lambda_i} = \cos(\theta_i)$ $\forall i$, it can be shown \cite{szegedy04,magniez11} that:
\begin{equation*}
    \Delta \geq \sqrt{2\delta}
\end{equation*}
i.e., the phase gap of $W(P)$ is quadratically larger that the spectral gap of $P$. This is a crucial property for Szegedy quantum walks as a ``quantum search'' \cite{brassard02,yoder14} for $\ket{\pi}\ket{0}$ on the busy subspace of $W(P)$ is governed by its phase gap as $\tilde{\mathcal{O}}(\Delta^{-1})$. This is quadratically faster than sampling from $\bm{\pi}$ classically.\\

Note that, in the case of a two-layered reflecting PS agent, the quantum state $\ket{\pi}$ can be easily prepared, as the MC defined by the transition probabilities of Eq. (\ref{eq:rPS-probs}) has a trivial mixing time $\delta=1$. This makes it a purely hitting-type random walk, for which one is interested in preparing $\ket{\pi_F} = \frac{1}{Z_F}\sum_{i \in F} \sqrt{\pi_i}\ket{i}$, where $F$ is the set of flagged actions associated to a given percept. To this end, the authors of Ref.\ \cite{paparo14} use Szegedy walks and amplitude amplification \cite{brassard02,yoder14} to prepare $\ket{\pi_F}$ \emph{from} $\ket{\pi}$. The complexity of this preparation is in $\tilde{\mathcal{O}}(\sqrt{\varepsilon^{-1}})$, where $\varepsilon = \abs{\braket{\pi}{\pi_F}}^2$. This is quadratically faster than the complexity of the classical random walk, in $\mathcal{O}(\varepsilon^{-1})$.\\
The quantum deliberation relying on quantum simulated annealing (Sec.\ \ref{sec:quantum-gibbs-preparation}) does not rely on flagged actions, and hence is a mixing-type random walk.

\section{Boltzmann machines and energy-based models\label{sec:bm_energy-based}}

\textbf{Restricted Boltzmann machines. }Boltzmann machines \cite{ackley85} are energy-based generative models, meaning that they can learn probability distributions $\hat{p}(\bm{x})$ over a set $\mathcal{X} \subset \{0,1\}^{n}$, by encoding a parametrized distribution $p^{\bm{\theta}}(\bm{x})$. This probability distribution is specified by an energy function $E^{\bm{\theta}}(\bm{x})$ defined over the entire set $\mathcal{X}$ and normalized by a Boltzmann/Gibbs distribution:
\begin{equation}
	P^{\bm{\theta}}(\bm{x}) = \frac{e^{-E^{\bm{\theta}}(\bm{x})}}{\sum_{\bm{x'}}e^{-E^{\bm{\theta}}(\bm{x'})}}
\end{equation}
This energy function takes the form of an Ising model spin Hamiltonian which accounts for pair interactions $w_{ik}$ and individual external magnetic fields $b_i$ ($(\{w_{ik}\}_{i,k},\{b_i\}_i)=\bm{\theta} \in \mathbb{R}^d$) acting between and over spins $x_i$ and $x_k$. Different configurations of these spins correspond to different binary vectors $\bm{x} \in \mathcal{X}$. A commonly used type of BMs are the so-called restricted Boltzmann machines \cite{fischer12,hinton12}, characterized by their bipartite structure (see Fig.\ \ref{fig:RBM-RL}), which restricts the interactions $w_{ik}$ in the Ising model Hamiltonian/energy function. RBMs divide the spins $\{x_i\}_i$ between two subsets $\bm{v}=\{v_i\}_i$ and $\bm{h}=\{h_k\}_k$, respectively called visible and hidden units. The visible units represent the data that we want to encode in the model, while the hidden (or latent) units allow to have more degrees of freedom in the specification of the energy function and allow to represent more complex distributions over the visible units. The RBM model effectively encodes the probability distribution:
\begin{equation}\label{eq:prob-RBM}
	P^{\bm{\theta}}(\bm{v}) = \frac{\sum_{\bm{h}}e^{-E^{\bm{\theta}}(\bm{v},\bm{h})}}{\sum_{\bm{v'},\bm{h'}}e^{-E^{\bm{\theta}}(\bm{v'},\bm{h'})}} = \frac{e^{-F^{\bm{\theta}}(\bm{v})}}{\sum_{\bm{v'}}e^{-F^{\bm{\theta}}(\bm{v'})}}
\end{equation}
where
\begin{equation}
	-E^{\bm{\theta}}(\bm{v},\bm{h})  = \sum_{i,k} w_{ik}v_ih_k + \sum_{i} b_iv_i + \sum_{k} b_kh_k 
\end{equation}
and
\begin{align*}
	F^{\bm{\theta}}(\bm{v}) &= -\mathrm{log}\left(\sum_{\bm{h'}}{e^{-E^{\bm{\theta}}(\bm{v},\bm{h'})}}\right)\\
    &=-\mathrm{log}\left(\frac{e^{-E^{\bm{\theta}}(\bm{v},\bm{h})}}{P(\bm{h} | \bm{v})}\right) \quad \forall \bm{h}\\
    &=E^{\bm{\theta}}(\bm{v},\bm{h}) + \mathrm{log}\left(P(\bm{h} | \bm{v})\right) \quad \forall \bm{h}\\
    &=\sum_{\bm{h}}{P(\bm{h} | \bm{v})\left[E^{\bm{\theta}}(\bm{v},\bm{h}) + \mathrm{log}\left(P(\bm{h} | \bm{v})\right)\right]}\\
    &=E^{\bm{\theta}}(\bm{v},\langle \bm{h}\rangle_{P(\bm{h} | \bm{v})}) - H(\bm{h} | \bm{v}) 
\end{align*}
is the so-called \emph{free energy} of the RBM (the average energy after tracing out the hidden units).\\

\textbf{Derivation of the free-energy neural network. }We derive below the expression of an RBM free energy as a feedforward neural network. This derivation is inspired by Ref.\ \cite{martens13}.\\
Starting from the expression of an RBM probabiliy distribution:
\begin{widetext}
\begin{align*}
P^{\bm{\theta}}(\bm{v}) &= \frac{1}{Z^{\bm{\theta}}} \sum_{\bm{h}} \exp\left(\sum_{i,k} w_{ik}v_ih_k +\sum_i b_iv_i + \sum_k b_kh_k\right)\\
&= \frac{1}{Z^{\bm{\theta}}} \exp\left(\sum_i b_iv_i\right) \prod_k \sum_{h_k \in \{0,1\}}\exp\left(\sum_{i} w_{ik}v_ih_k + b_kh_k\right)\ \# \sum_{\bm{x}\in\{0,1\}^2}e^{ax_1 + bx_2} = (1+e^a)(1+e^b)\\
&= \frac{1}{Z^{\bm{\theta}}} \exp\left(\sum_i b_iv_i\right) \prod_k  \left(1+\exp\left(\sum_{i} w_{ik}v_i+ b_k\right)\right)\ \# \sum_{\bm{x}\in\{0,1\}^2}e^{ax_1 + bx_2} = (1+e^a)(1+e^b)\\
&= \frac{1}{Z^{\bm{\theta}}} \exp\left(\sum_i b_iv_i\right) \exp \left( \sum_k \log\left(1+\exp\left(\sum_{i} w_{ik}v_i+ b_k\right)\right) \right)\ \#\ \prod_k a_k = \exp\left(\sum_k \log(a_k)\right)\\
&= \frac{1}{Z^{\bm{\theta}}} \exp\left(\sum_i b_iv_i + \sum_k \textrm{softplus}\left(\sum_{i} w_{ik}v_i+ b_k\right)\right)\ \#\ \textrm{softplus}(x) = \log (1 + \exp(x))\\
&= \frac{1}{Z^{\bm{\theta}}} \exp\left(-F^{\bm{\theta}}(\bm{v})\right)
\end{align*}
\end{widetext}
we end up with:
\begin{equation}
-F^{\bm{\theta}}(\bm{v}) = \sum_i b_iv_i + \sum_k \textrm{softplus}\left(\sum_{i} w_{ik}v_i+ b_k\right)
\end{equation}
which takes the form of the feedforward neural network depicted in Fig.\ \ref{fig:RBM-FF}.\\

\begin{figure}
	\centering
	\includegraphics[width=1\columnwidth]{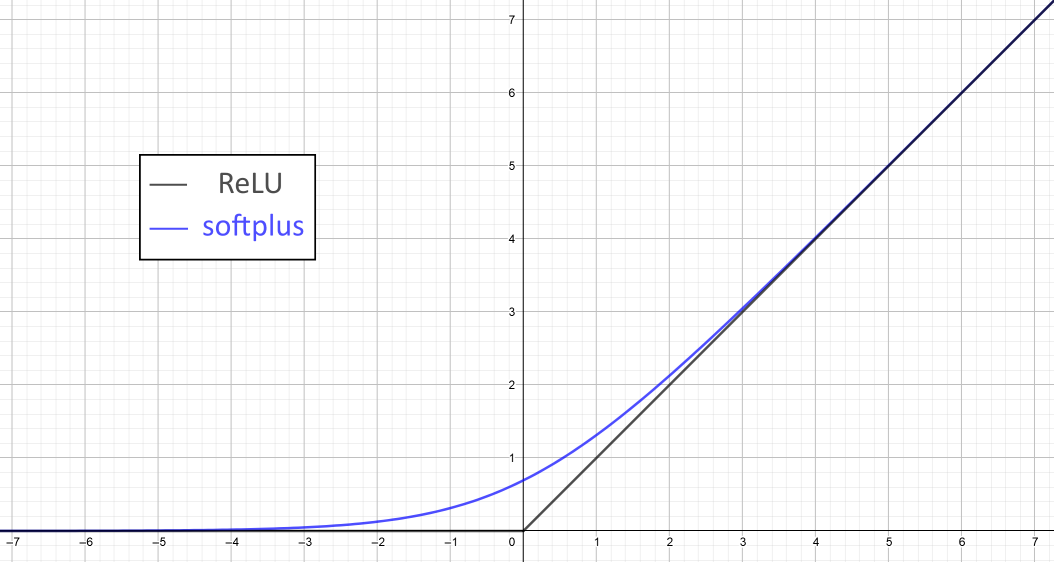}
	\caption{\textbf{A comparison between the commonly used ReLU activation function $\max(0,x)$ and the softplus function $x+\log(1+e^{-x}) =\log(1+e^x)$ appearing in the expression of the free energy of an RBM}}
	\label{fig:ReLU}
\end{figure}

\textbf{Eigendecomposition of $H_\text{QBM}$. }We write the QBM Hamiltonian in the computational basis $(\bm{v},\bm{h})$, where we divided the Hilbert space on which it acts into visible and hidden units $\mathcal{V}\otimes\mathcal{H}$. Separating the $\sigma^z$ and $\sigma^{z}\sigma^{z}$ operators from the $\sigma^{x}$ operators in the Hamiltonian of Eq.\ (\ref{eq:QBM-Hamiltonian}) allows to express the latter as:
\begin{multline*}
H_\text{QBM} = 
\begin{pmatrix}
H^{z,zz}_1 & 0 & \cdots & 0 \\
0 & H^{z,zz}_2 & \cdots & 0 \\
\vdots  & \vdots  & \ddots & \vdots  \\
0 & 0 & \cdots & H^{z,zz}_{\abs{\mathcal{V}}\times\abs{\mathcal{H}}} 
\end{pmatrix}\\
\hfill + I_{\bm{v}} \otimes
\begin{pmatrix}
H^{x}_{1,1} & \cdots & H^{x}_{1,\abs{\mathcal{H}}} \\
\vdots  & \ddots & \vdots  \\
H^{x}_{\abs{\mathcal{H}},1} & \cdots & H^{x}_{\abs{\mathcal{H}},\abs{\mathcal{H}}} 
\end{pmatrix}\\
 =
\begin{pmatrix}
M_1 & 0 & \cdots & 0 \\
0 & M_2 & \cdots & 0 \\
\vdots  & \vdots  & \ddots & \vdots  \\
0 & 0 & \cdots & M_{\abs{\mathcal{V}}}
\end{pmatrix}
= \sum_{\bm{v}} \ket{\bm{v}}\bra{\bm{v}} \otimes M_{\bm{v}}
\end{multline*}
where $\left\{M_1, \ldots, M_{\abs{\mathcal{V}}}\right\}$ are non-diagonal matrices acting on the subspaces $\left\{\text{span}\{\bm{v_1}\}\otimes\mathcal{H}, \ldots, \text{span}\{\bm{v}_{\abs{\mathcal{V}}}\}\otimes\mathcal{H}\right\}$ respectively, for $\bm{v_{i}}$ computational basis states of visible units. This block diagonal form implies that the eigenvectors of $H_\text{QBM}$ are block vectors of the form $\ket{\bm{v},\bm{h_v}}$, where $\ket{\bm{v}}$ is a computational basis state and $\ket{\bm{h_v}}$ can be a superposition of hidden units configurations.\\

\textbf{Derivation of the PS error function. }We showed in Sec.\ \ref{sec:EB-RL} how to derive an error function from the $Q$-learning update rule. Due to its shared origin, deriving an error function from the SARSA update rule can be done similarly. In contrast, $h$-values are not linked to the Bellmann equation and hence do not represent estimates of the (bounded) return. Instead, they are allowed to accumulate rewards indefinitely (and hence diverge). Diverging $h$-values can still lead to converging policies after normalization but they constitute a problem for the derivation of an error function, as this error can never be minimized to 0, hence making neural networks harder to train. One mechanism in PS that allows us to avoid this divergence is damping, and we show here how it can be translated into a regularization term in the PS error function.\\

Let's take the update rule of PS (Eq. (\ref{eq:h-values})) with damping $\gamma_\textrm{ps} = 0$ and edge-glow $g_{ij} = 1$:
\begin{equation}
h_{ij} \leftarrow  h_{ij} + \lambda
\end{equation}
This setup cancels the damping mechanism and supposes that edge-glow values $g_{ij} \neq 1$ are absorbed in the reward $\lambda$. This update rule gives the following error function:
\begin{equation}
\mathcal{E}_\textrm{PS}(\bm{s^{(t)}},\bm{a^{(t)}}) = r^{(t+1)}
\end{equation}
which is never 0 as long as the given transition is rewarded (which is the case in a static environment) and leads to diverging $h$-values.\\
To avoid this divergence, we take into account damping by effectively introducing a regularization term:
\begin{equation}\label{eq:error-ps}
\mathcal{E}_\textrm{PS}(\bm{s^{(t)}},\bm{a^{(t)}}) = r^{(t+1)} - \gamma_\textrm{ps}h_{ij}
\end{equation}
such that the error is now 0 when $h_{ij} = r^{(t+1)} / \gamma_\textrm{ps}$, hence effectively bounding the $h$-values. Note that $\mathcal{E}_\textrm{PS}(\bm{s^{(t)}},\bm{a^{(t)}}) = h_{ij}^{(t+1)} - h_{ij}^{(t)}$, so this new error corresponds to the update rule:
\begin{equation}
h_{ij} \leftarrow  (1-\gamma_\textrm{ps})h_{ij} + \lambda
\end{equation}
which is the original PS update rule (Eq.\ (\ref{eq:h-values}), with the edge-glow fixed to 1).

\section{Deferred theorems and derivations for quantum subroutines\label{sec:deferred-thms}}

\begin{theorem}[Harrow and Wei \cite{harrow20} Theorem 10]\label{thm:harrow-wei}
Assume that we are given a prior distribution $\Pi_0(x)$ and a likelihood function $L(x)\geq 0$, so that we can parametrize the partition function $Z(\beta') =\sum_x \Pi_0(x)e^{-\beta'L(x)}$ at each inverse temperature $\beta'\in[0,1]$. Assume that we can generate the state $\ket{\Pi_0}$ corresponding to the coherent encoding of the prior, and assume that for every inverse temperature $\beta'$ we have a Markov chain $P_{\beta'}$ with stationary distribution $\ket{\Pi_{\beta'}}$ and spectral gap lower-bounded by $\delta$. Then, for any $\varepsilon>0$, $\eta >0$, there is a quantum algorithm that, with probability at least $1-\eta$, produces a state $\ket{\widetilde{\Pi_1}}$ so that $\left\lVert\ket{\widetilde{\Pi_1}}-\ket{\Pi_1}\right\rVert \leq \varepsilon$ for $\ket{\Pi_1}$ the coherent encoding of the posterior distribution $\Pi_1(x) = \Pi_0(x)\frac{e^{-L(x)}}{Z(1)}$. The algorithm uses 
\begin{equation*}
\widetilde{\mathcal{O}} \left( \sqrt{\mathbb{E}_{\Pi_0} [L(x)] / \delta} \right) 
\end{equation*}
total steps of the quantum walk operators corresponding to the Markov chains $M_{\beta'}$.
\end{theorem}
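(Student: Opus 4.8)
The plan is to reduce the statement to the quantum simulated annealing machinery already recalled above (Szegedy walks, phase gap $\Delta \ge 2\sqrt{\delta}$, and the Wocjan et al.\ transformation under the slow-varying condition), and to then bound the \emph{length} of the annealing schedule by a Chebyshev-type cooling argument that is made constructive via amplitude estimation. First I would set up the per-step primitive: for each $\beta' \in [0,1]$ the walk operator $W(P_{\beta'})$ has the coherent encoding $\ket{\Pi_{\beta'}}$ as its unique eigenphase-$0$ eigenvector on the busy subspace with phase gap $\ge 2\sqrt{\delta}$, so given a state with overlap $p = |\braket{\Pi_{\beta_i}}{\Pi_{\beta_{i+1}}}|$ onto $\ket{\Pi_{\beta_{i+1}}}$, phase estimation (realizing an approximate reflection about $\ket{\Pi_{\beta_{i+1}}}$) together with amplitude amplification rotates it onto $\ket{\Pi_{\beta_{i+1}}}$ using $\widetilde{\mathcal{O}}(1/(p\sqrt{\delta}))$ applications of $W(P_{\beta_{i+1}})$. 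Hence, if a schedule $0 = \beta_0 < \beta_1 < \cdots < \beta_r = 1$ can be produced along which consecutive overlaps are bounded below by a fixed constant $\sqrt{c}$, then iterating from $\ket{\Pi_0} = \ket{\Pi_{\beta_0}}$ yields $\ket{\Pi_1}$ in $\widetilde{\mathcal{O}}(r/\sqrt{\delta})$ walk steps.

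The crux is bounding $r$. Writing $Z(\beta') = \sum_x \Pi_0(x) e^{-\beta' L(x)}$, a direct calculation gives $|\braket{\Pi_\beta}{\Pi_{\beta'}}|^2 = Z(\tfrac{\beta+\beta'}{2})^2 / (Z(\beta)Z(\beta'))$, so the slow-varying condition $|\braket{\Pi_{\beta_i}}{\Pi_{\beta_{i+1}}}|^2 \ge c$ is exactly a discrete second-difference bound on $g(\beta') := \log Z(\beta')$, which is convex since $g''(\beta') = \mathrm{Var}_{\Pi_{\beta'}}[L] \ge 0$. A Chebyshev cooling schedule for such a convex $g$ has length $r = \widetilde{\mathcal{O}}(\sqrt{g(0) - g(1)})$; and since $g(0) = \log Z(0) = \log 1 = 0$ while Jensen's inequality applied to $-\log$ gives $-g(1) = -\log \mathbb{E}_{\Pi_0}[e^{-L}] \le \mathbb{E}_{\Pi_0}[L]$, we obtain $r = \widetilde{\mathcal{O}}(\sqrt{\mathbb{E}_{\Pi_0}[L(x)]})$ and therefore the claimed $\widetilde{\mathcal{O}}(\sqrt{\mathbb{E}_{\Pi_0}[L(x)]/\delta})$ total walk-step count.

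Since $Z$ is not known a priori, the schedule must be discovered online. At stage $i$, having prepared $\ket{\Pi_{\beta_i}}$, I would binary-search for the next $\beta_{i+1}$ so that $|\braket{\Pi_{\beta_i}}{\Pi_{\beta_{i+1}}}|^2$ lands near $c$, estimating each candidate overlap with \emph{nondestructive} amplitude estimation so that $\ket{\Pi_{\beta_i}}$ survives for reuse in the next stage; each such estimate costs $\widetilde{\mathcal{O}}(1/(\varepsilon'\sqrt{\delta}))$ walk steps and the binary search adds only a polylogarithmic factor. It remains to track the errors: phase estimation produces only approximate reflections, amplitude amplification accumulates that error roughly linearly in its number of iterations, and each of the $\widetilde{\mathcal{O}}(r)$ amplitude estimations fails with small probability; choosing all internal precisions as suitable inverse polynomials in $\varepsilon$, $\eta$ and $r$ and taking a union bound yields $\|\ket{\widetilde{\Pi_1}} - \ket{\Pi_1}\| \le \varepsilon$ with probability at least $1-\eta$, the extra factors being absorbed into $\widetilde{\mathcal{O}}(\cdot)$.

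The step I expect to be the main obstacle is the adaptive schedule: one must simultaneously prove that a schedule of length $\widetilde{\mathcal{O}}(\sqrt{-\log Z(1)})$ exists (the Chebyshev/convexity argument, together with a handling of the left endpoint $\beta = 0$ where $\ket{\Pi_0}$ is given rather than prepared) and that it can be located efficiently without prior knowledge of $Z$, all while ensuring that the perturbations introduced by the repeated nondestructive amplitude estimations do not derail the carefully prepared intermediate states and that the overall error budget stays coherent across all $\widetilde{\mathcal{O}}(r)$ stages.
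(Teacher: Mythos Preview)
Your proposal is correct and follows essentially the same approach the paper attributes to Harrow and Wei. Note that the paper does not actually prove this theorem itself: it is quoted verbatim from Ref.~\cite{harrow20} in Appendix~\ref{sec:deferred-thms}, and the main text (Sec.~\ref{sec:quantum-gibbs-preparation}) together with the proof of Lemma~1 only describe its ingredients. Those ingredients are precisely what you identify: the Wocjan et al.\ QSA machinery (Szegedy walk, phase gap $\Delta \ge 2\sqrt{\delta}$, phase-estimation-based approximate reflections plus amplitude amplification) giving cost $\widetilde{\mathcal{O}}(l/\sqrt{\delta})$ along a slow-varying schedule of length $l$; the adaptive Chebyshev-type cooling schedule of length $l = \widetilde{\mathcal{O}}\big(\sqrt{\log(1/Z(1))}\big)$; and the online construction of that schedule via nondestructive amplitude estimation and binary search on the next inverse temperature. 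Your use of Jensen's inequality to pass from $-\log Z(1)$ to $\mathbb{E}_{\Pi_0}[L(x)]$ is the right final step, and your ``main obstacle'' paragraph correctly anticipates where the technical work lies in the original proof.
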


In the case of a DEBN, we use Theorem \ref{thm:harrow-wei} with $x = \bm{a}$,  $L(x) = - \beta M^{\bm{\theta}}_-(\bm{s,a})$ and $\Pi_0(x) = \frac{1}{\abs{\mathcal{A}}}$. $M^{\bm{\theta}}_-(\bm{s,a})$ is obtained from $M^{\bm{\theta}}(\bm{s,a})$ by the transformation $M^{\bm{\theta}}_-(\bm{s,a}) = M^{\bm{\theta}}(\bm{s,a}) - \max \left\{ 0,\max_{\bm{a}} M^{\bm{\theta}}(\bm{s,a})\right\}$, such that $- 2 \max_{\bm{a}} \abs{M^{\bm{\theta}}(\bm{s,a})} \leq M^{\bm{\theta}}_-(\bm{s,a}) \leq 0$, i.e., $0 \leq L(x)\leq 2\beta\max_{\bm{a}}\norm{M^{\bm{\theta}}}$. Hence, sampling from the distribution $\pi_\beta(\bm{a}|\bm{s}) = \frac{e^{\beta M^{\bm{\theta}}_-(\bm{s,a})}}{Z_{-,\beta}(\bm{s})} = \frac{e^{\beta M^{\bm{\theta}}(\bm{s,a})}}{Z_\beta(\bm{s})}$ has complexity in $\widetilde{\mathcal{O}} \left( \sqrt{\mathbb{E}_{\Pi_0} [L(x)] / \delta} \right) = \widetilde{\mathcal{O}} \left( \sqrt{\frac{1}{\abs{\mathcal{A}}}\sum_{\bm{a}} L(\bm{a}) / \delta} \right) \leq  \widetilde{\mathcal{O}} \left( \sqrt{\beta \norm{M^{\bm{\theta}}}  / \delta} \right)$.\\

Note: While the value $\norm{M^{\bm{\theta}}} = \max_{\bm{a}}\abs{M^{\bm{\theta}}(\bm{s,a})}$ is not needed to implement the quantum walk operator of $P_\beta'$ associated to $M_-(\bm{s,a})$ (see Appendix \ref{sec:coherent-access}), it is actually needed in a subroutine behind the algorithm of Theorem \ref{thm:harrow-wei} (see Algorithm 1 in Ref. \cite{harrow20}).\\

A similar approach can be taken for DBMs, where now $x = (\bm{a,h})$, $L(x) = \beta E_-(\bm{s,a,h})$ and $\Pi_0(x) = \frac{1}{\abs{\mathcal{A}}2^K}$. The resulting complexity is now in $\widetilde{\mathcal{O}} \left( \sqrt{\frac{1}{\abs{\mathcal{A}}2^K}\sum_{\bm{a,h}} L(\bm{a,h}) / \delta} \right) \leq  \widetilde{\mathcal{O}} \left( \sqrt{\beta \norm{M^{\bm{\theta}}}  / \delta} \right)$ (for a different spectral gap $\delta$).\\

 \begin{theorem}[Gibbs state preparation with sparse access \cite{van20} Lemma 44]
\label{thm:van-et-al}
    Let $H \in \mathbb{C}^{N\times N}$ be a $d$-sparse Hamiltonian. Suppose that $I \preceq H$, we are given $C\in \mathbb{R}_+$ such that $\norm{H}\leq 2C$, and we know a lower bound $z\leq Z=\tr[e^{-H}]$.
    If $\epsilon \in (0,1/3)$, then we can prepare a purified Gibbs state $\ket{\gamma}_{AB}$ such that
    \begin{equation*}
      \norm{\mathrm{Tr}_B \left[\ket{\gamma}\bra{\gamma}_{AB} \right] - \frac{e^{-H}}{Z}} \leq \epsilon
    \end{equation*}
    using
    \begin{equation*}
      \mathcal{O} \left(\sqrt{\frac{N}{z}} Cd \log \left( \frac{C}{\epsilon}\right) \log \left( \frac{1}{\epsilon} \right) \right)
    \end{equation*}
    queries, and
    \begin{equation*}
      \mathcal{O} \left(\sqrt{\frac{N}{z}}  Cd \log \left( \frac{C}{\epsilon}\right) \log \left( \frac{1}{\epsilon} \right)  \left[ \log(N) + \log^{5/2} \left( \frac{C}{\epsilon} \right) \right] \right)
    \end{equation*}
    gates.
\end{theorem}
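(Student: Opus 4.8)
The plan is to prepare the purified Gibbs state by \emph{imaginary-time evolution}: block-encode the subnormalized operator $e^{-H/2}$, apply it to one half of a maximally entangled pair, and amplify the accepted branch. Writing $\ket{\Phi}=N^{-1/2}\sum_i\ket{i}_A\ket{i}_B$ and diagonalizing $H=\sum_i\lambda_i\ket{\phi_i}\bra{\phi_i}$, the transpose trick gives $\ket{\Phi}=N^{-1/2}\sum_i\ket{\phi_i}_A\ket{\overline{\phi_i}}_B$, so $(e^{-H/2}\otimes I)\ket{\Phi}=N^{-1/2}\sum_i e^{-\lambda_i/2}\ket{\phi_i}_A\ket{\overline{\phi_i}}_B$ has squared norm $Z/N$ and, after renormalization, reduces on $A$ to exactly $e^{-H}/Z$. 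So it suffices to implement $e^{-H/2}$ as the top-left block of a unitary and boost the amplitude $\sqrt{Z/N}$ to $\Theta(1)$; note $I\preceq H$ makes $\norm{e^{-H/2}}\le e^{-1/2}<1$, so this operator fits in a block-encoding without extra subnormalization, while $\norm{H}\le 2C$ fixes the spectral window to $[1,2C]$.

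First I would build, from the sparse-access oracles $\hat{O}_H,\hat{O}_F$, a block-encoding of a suitably normalized $H$ at cost $\widetilde{\mathcal O}(d)$ queries and $\widetilde{\mathcal O}(\log N)$ gates per use (standard $d$-sparse Hamiltonian block-encoding, cf.\ \cite{berry09,low19}). Then I would apply the qubitization / eigenvalue-transformation machinery of \cite{low19} with a polynomial $p$ that $\varepsilon'$-approximates $e^{-x/2}$ on $[1,2C]$; by Chebyshev-truncation estimates for the exponential, $p$ can be taken of degree $\widetilde{\mathcal O}(C)$ with $\mathrm{polylog}(C/\varepsilon')$ overhead, yielding a unitary $\tilde U$ that $\varepsilon'$-block-encodes $e^{-H/2}$ at cost $\widetilde{\mathcal O}(Cd)$ queries and $\widetilde{\mathcal O}\!\left(Cd\,[\log N+\mathrm{polylog}(C/\varepsilon')]\right)$ gates, matching the $\widetilde{\mathcal O}(\norm{H}d)$ bound quoted in the main text. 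A preliminary affine shift $H\mapsto H-bI$, which leaves $e^{-H}/Z$ invariant, is used to normalize $e^{-H/2}$ and to trade a naive $1/Z$ factor for the supplied lower bound $z$, as flagged in the main text.

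Next, preparing $\ket{0}_{\mathrm{anc}}\ket{\Phi}_{AB}$ (cost $\mathcal O(\log N)$) and applying $\tilde U$ on the ancilla$+A$ registers produces a state whose $\ket{0}_{\mathrm{anc}}$ component has amplitude $\approx\sqrt{Z/N}\ge\sqrt{z/N}$ and $AB$-part $\approx\ket{\gamma}_{AB}$. I would then run fixed-point amplitude amplification \cite{yoder14} using the guaranteed success-probability lower bound $z/N$: this needs $\widetilde{\mathcal O}\!\left(\sqrt{N/z}\,\log(1/\varepsilon)\right)$ applications of $\tilde U$ and $\tilde U^\dagger$ to reach the flag with amplitude $1-\mathcal O(\varepsilon)$ without overshooting — the $\log(1/\varepsilon)$ here is exactly the one in the claim. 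Multiplying the $\widetilde{\mathcal O}(Cd)$ per-use cost by these $\widetilde{\mathcal O}(\sqrt{N/z})$ rounds yields the stated query and gate counts, with the $\log(C/\varepsilon)$, $\log N$ and $\log^{5/2}(C/\varepsilon)$ factors tracing back to the polynomial degree, the block-encoding gate overhead, and the subroutine precisions.

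Finally, I would close the error budget: the anc$=0$ branch is within $\varepsilon'$ of $(e^{-H/2}\otimes I)\ket{\Phi}$, so after renormalization inside that subspace the output is within $\mathcal O(\varepsilon'\sqrt{N/z})$ of $\ket{\gamma}_{AB}$; choosing $\varepsilon'=\Theta(\varepsilon\sqrt{z/N})$, and absorbing $\log(1/\varepsilon')=\mathcal O(\log(1/\varepsilon)+\log N)$ into the logarithmic factors, makes $\norm{\mathrm{Tr}_B[\ket{\gamma}\bra{\gamma}_{AB}]-e^{-H}/Z}\le\varepsilon$. The main obstacle I anticipate is precisely this polynomial-approximation step and its interplay with the small accepted amplitude: obtaining degree only $\widetilde{\mathcal O}(C)$ for $e^{-x/2}$ on $[1,2C]$ requires a careful truncation and numerical-stability analysis of the signal-processing angles, and one must verify that amplifying the $\Theta(\sqrt{Z/N})$-amplitude branch does not inflate the final trace-norm error beyond $\varepsilon$.
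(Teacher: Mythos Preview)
Your proposal is correct and follows the same approach that the paper attributes to this result. Note, however, that the paper does not actually prove this theorem: it is a verbatim restatement of Lemma~44 from \cite{van20}, and the paper only gives an informal sketch in the main text (Sec.~\ref{sec:quantum-gibbs-preparation}) --- block-encode $e^{-H}$ via Hamiltonian-simulation/qubitization techniques \cite{low19}, apply it to one register of the maximally entangled state $\sqrt{2^{-n}}\sum_i\ket{i}_A\ket{i}_B$, and use $\widetilde{\mathcal O}\bigl(\sqrt{2^n/\mathrm{Tr}[e^{-H}]}\bigr)$ rounds of (fixed-point) amplitude amplification \cite{yoder14}, together with an affine shift of $H$ to replace the unknown $1/\mathrm{Tr}[e^{-H}]$ by the known lower bound $z$ --- which is precisely your outline. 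The paragraph following the theorem in the appendix is not a proof either; it merely explains how to instantiate the hypotheses ($N$, $d$, $C$, $z$, and the shift enforcing $I\preceq H$) for the DEBN/DBM/QBM models.
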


For the DEBN, we have $N=2^{A}$, while $N=2^{AK}$ for the DBM and QBM, where $K$ is the number of hidden units of these models. We enforce the condition $I \preceq H$ by choosing $H = \beta H_\text{model} - (\beta\tilde{\lambda}_\text{min} - 3/2)I$, where $\tilde{\lambda}_\text{min}$ is either bounded by the weights of the model or estimated via a minimum finding subroutine. This choice also gives us natural bounds $z = e^{-2}$ and $C = \beta\norm{H_\text{model}}$. Since the DEBN and DBM Hamiltonians are diagonal, their sparsity is $d=1$. As for the QBM Hamiltonian, the $\sigma^{x}$ terms each add at most one non-zero entry to each row/column, which leads to a sparsity $d = 1 + K$. Hence the complexity of preparing the purified Gibbs states of the DEBN and DBM/QBM is in $\widetilde{\mathcal{O}} \left(\sqrt{2^{A}} \beta\norm{H_\text{model}}\right)$ and $\widetilde{\mathcal{O}} \left(\sqrt{2^{AK}} \beta\norm{H_\text{model}}\right)$ respectively.

\begin{theorem}[Average energy estimation \cite{chowdhury20} Theorem 2]
\label{thm:chowdhury-et-al_est-avg-energy}
Let $H$ be a Hamiltonian that can be represented as a linear combination of unitaries $H = \sum_{k=1}^{K} \alpha_k U_k$ with $\abs{\alpha}_1 = \sum_k \abs{\alpha_k}$ for $K\in O(\log N)$, and let $U_{\psi}$ be a unitary that prepares a state $\ket{\psi}$. The number of applications of the unitaries $U_P$, $U_S$ and the unitary $U_{\psi}$ to estimate the average energy $\bra{\psi}H\ket{\psi}$ within error $\epsilon$ with probability at least $2/3$ is in
$\mathcal{O}\left(\left\|\alpha\right\|_1/\epsilon\right)$.
\end{theorem}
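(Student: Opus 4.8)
The plan is to reduce the estimation to a single amplitude-estimation call on a Hadamard-test circuit built from the linear-combination-of-unitaries (LCU) access to $H$.

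First I would fix the LCU oracles. Absorbing the signs of the coefficients into the unitaries (replacing $U_k$ by $\mathrm{sign}(\alpha_k)U_k$ so that all $\alpha_k\ge 0$), define the PREPARE oracle $U_P$ on an $\lceil\log K\rceil$-qubit ancilla by $U_P\ket{0} = \sum_{k=1}^{K}\sqrt{\alpha_k/\|\alpha\|_1}\,\ket{k}$, and the SELECT oracle $U_S = \sum_{k}\ket{k}\bra{k}\otimes U_k$. A one-line computation gives $(\bra{0}\otimes I)\,U_P^\dagger U_S U_P\,(\ket{0}\otimes I) = H/\|\alpha\|_1$, so, writing $W := U_P^\dagger U_S U_P$ and $\ket{\Phi} := \ket{0}\otimes U_{\psi}\ket{0} = \ket{0}\ket{\psi}$,
\begin{equation*}
\bra{\psi}H\ket{\psi} = \|\alpha\|_1\,\bra{\Phi}W\ket{\Phi},
\end{equation*}
and because $H$ is Hermitian and the ancilla is fixed to $\ket{0}$ on both sides, this expectation value is real. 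Since $K\in O(\log N)$, $U_P$ costs only $\mathrm{polylog}(N)$ gates, so the asymptotic cost is governed by the number of queries to $U_P$, $U_S$ and $U_{\psi}$, which is precisely what the theorem counts.

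Next I would estimate $\bra{\Phi}W\ket{\Phi}$ via a Hadamard test promoted to Heisenberg scaling. Adjoin a control qubit prepared in $\ket{+}$, apply controlled-$W$, and note that measuring the control in the $X$ basis yields outcome ``$+$'' with probability $p=\tfrac12\big(1+\bra{\Phi}W\ket{\Phi}\big)$. Rather than sampling this outcome (which would cost $O(1/\epsilon'^{2})$ shots), run amplitude estimation \cite{brassard02} on the projector onto ``$+$'': with $M$ rounds of the associated Grover-type iterate one obtains $\tilde p$ with $|\tilde p - p| = O(1/M)$ with probability at least $8/\pi^{2}>2/3$, each round using $O(1)$ calls to $W$, $W^{\dagger}$, $U_{\psi}$, $U_{\psi}^{\dagger}$ (hence to $U_P$, $U_S$) and to the reflections about $\ket{\Phi}$ and the ``$+$'' subspace. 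Taking $M=\Theta(\|\alpha\|_1/\epsilon)$ forces $|\tilde p - p|\le \epsilon/(2\|\alpha\|_1)$, so $2\|\alpha\|_1\tilde p-\|\alpha\|_1$ estimates $\bra{\psi}H\ket{\psi}$ to additive error $\epsilon$, using $O(\|\alpha\|_1/\epsilon)$ applications of $U_P$, $U_S$ and $U_{\psi}$ in total. Boosting to success probability $1-\eta$ costs an extra $O(\log\eta^{-1})$ repetitions with a median.

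The step I would be most careful with is the precision bookkeeping in the amplitude-estimation stage: one must track whether the $O(1/M)$ guarantee quoted for amplitude estimation is stated for the amplitude $\sqrt p$ or for $p$ itself, and convert it consistently into an additive error on $\bra{\Phi}W\ket{\Phi}$ while keeping the factor $\|\alpha\|_1$ in the correct place — this is exactly where the $\|\alpha\|_1/\epsilon$ scaling enters, and where one beats the naive $1/\epsilon^{2}$ Hadamard test. The remaining ingredients (the LCU identity, correctness of the Hadamard test, the standard amplitude-estimation error bound) are routine.
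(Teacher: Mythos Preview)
The paper does not supply its own proof of this theorem: it is quoted verbatim as Theorem~2 of Chowdhury et al.\ \cite{chowdhury20} in Appendix~\ref{sec:deferred-thms}, with no accompanying argument. So there is nothing in the paper to compare against line by line.

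That said, your reconstruction is correct and is the standard route (and, to the best of my knowledge, essentially the one taken in \cite{chowdhury20}): build a block-encoding of $H/\|\alpha\|_1$ from the LCU oracles via $U_P^\dagger U_S U_P$, feed it into a Hadamard test to encode $\mathrm{Re}\,\bra{\Phi}W\ket{\Phi}$ in a single-qubit probability, and replace naive sampling by amplitude estimation to get $O(1/M)$ additive error in $M$ queries. The rescaling by $\|\alpha\|_1$ then forces $M=\Theta(\|\alpha\|_1/\epsilon)$, exactly the claimed complexity. Your flagged concern about whether the amplitude-estimation guarantee is on $\sqrt{p}$ or on $p$ is the right place to be careful, but either formulation yields the same $O(\|\alpha\|_1/\epsilon)$ scaling after propagating through the affine map $p\mapsto 2\|\alpha\|_1 p - \|\alpha\|_1$.
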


\begin{theorem}[von Neumann entropy estimation \cite{chowdhury20} Theorem 1]
\label{thm:chowdhury-et-al_est-entropy}
Given access to an oracle $ U_{\rho}$ that prepares a purification $ \sum_j \sqrt{p_j} \ket{j}\ket{\psi_j} $ of a density matrix $ \rho = \sum_j p_j \ketbra{\psi_j}{\psi_j} $ where $p_j \ge p_{\min}$, the von Neumann entropy of the state $\rho$ can be estimated within error $\epsilon$ with probability greater than $2/3$ with $\widetilde{\mathcal{O}}(1/(p_{\min}^2 \epsilon))$ preparations of the purified density operator.
\end{theorem}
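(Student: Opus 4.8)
The plan is to reduce the entropy estimation to evaluating the trace functional $S(\rho)=\operatorname{Tr}[\rho(-\log\rho)]=-\sum_j p_j\log p_j$ and to extract it with a \emph{single} outer round of amplitude estimation, using the oracle $U_\rho$ both to access $\rho$ spectrally and to resolve its small eigenvalues. First I would observe that measuring the reference register of $U_\rho\ket{0}=\sum_j\sqrt{p_j}\ket{j}\ket{\psi_j}$ in the computational basis returns $j$ with probability $p_j$: since the $\ket{\psi_j}$ are orthonormal, the reduced state on the reference register is $\sum_j p_j\ketbra{j}{j}$, which is isospectral to $\rho$. Hence the target is the expectation $S(\rho)=\mathbb{E}_{j\sim p}[-\log p_j]$, and the whole problem splits into (i) converting the purification into coherent access to the individual weights $p_j$ and (ii) averaging $-\log p_j$ to additive error $\epsilon$.

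For step (i) I would run amplitude estimation on the indicator of the event ``the reference register is in state $\ket{j}$,'' whose amplitude is exactly $\sqrt{p_j}$. Because the nonzero eigenvalues are bounded below by $p_{\min}$, I need \emph{relative} rather than absolute accuracy on $p_j$ to control the logarithm: an additive error $\eta$ in $\hat p_j$ propagates as $|\log\hat p_j-\log p_j|\lesssim\eta/p_j\le\eta/p_{\min}$, so I would take $\eta\sim\epsilon\,p_{\min}$, costing $\widetilde{O}(1/(\epsilon\,p_{\min}))$ applications of $U_\rho$. Equivalently, one can approximate $-\log$ by a polynomial of degree $\widetilde{O}(p_{\min}^{-1})$ that is uniformly accurate on $[p_{\min},1]$ and implement it by quantum singular value transformation of a block-encoding of $\rho$ extracted from $U_\rho$; the condition-number-type dependence of the log-polynomial on $p_{\min}$ supplies one of the two factors of $1/p_{\min}$.

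For step (ii) I would use a single outer amplitude-estimation round that reads off the expectation of the (rescaled) log-estimate over $j\sim p_j$. Since $-\log p_j$ lies in $[0,\log(1/p_{\min})]$, rescaling by this range and estimating the resulting acceptance probability to additive precision $\epsilon/\log(1/p_{\min})$ costs $\widetilde{O}(1/\epsilon)$ outer applications (the $\log(1/p_{\min})$ being absorbed into $\widetilde{O}$); the subnormalization forced by this rescaling contributes the second factor of $1/p_{\min}$. Multiplying the outer count $\widetilde{O}(1/\epsilon)$ by the per-call relative-precision cost $\widetilde{O}(1/p_{\min})$ and the log-polynomial degree $\widetilde{O}(1/p_{\min})$ yields the claimed $\widetilde{O}(1/(p_{\min}^2\epsilon))$. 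Finally I would boost the success probability from $2/3$ to any $1-\delta$ by the powering lemma, at only an $O(\log(1/\delta))$ overhead hidden in $\widetilde{O}$.

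The hard part is the complexity bookkeeping, specifically pinning down where the \emph{two} factors of $p_{\min}^{-1}$ originate and ensuring they do not compound into an extra factor of $\epsilon^{-1}$: naively nesting amplitude estimation---estimating each $-\log p_j$ to precision $\epsilon$ and then averaging with a second precision-$\epsilon$ layer---produces $\widetilde{O}(p_{\min}^{-1}\epsilon^{-2})$, so the key is to retain only one precision-$\epsilon$ amplitude-estimation layer while folding the log nonlinearity into a polynomial/QSVT step whose cost is governed by $p_{\min}$ alone. Controlling the uniform error of the log-polynomial near the lower spectral edge $x=p_{\min}$, and verifying that the cutoff $p_j\ge p_{\min}$ genuinely holds for the prepared purification, are the remaining technical points.
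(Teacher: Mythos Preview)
The paper does not prove this statement: it is quoted verbatim from Chowdhury et al.\ and invoked as a black box (together with the average-energy estimate) inside the proof of Theorem~\ref{thm:merit-function-evaluation}. So there is no in-paper argument to compare against; the only question is whether your sketch actually recovers the cited bound.

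Your high-level plan---approximate $-\log$ on $[p_{\min},1]$ and wrap a single amplitude-estimation layer around it---is indeed the mechanism Chowdhury et al.\ use, via the truncated Taylor series $-\log x=\sum_{k\ge 1}(1-x)^k/k$ cut at degree $\widetilde{O}(1/p_{\min})$. The gap is in your final accounting. You assert that the rescaling by the range $\log(1/p_{\min})$ ``contributes the second factor of $1/p_{\min}$,'' but $\log(1/p_{\min})$ is polylogarithmic in $1/p_{\min}$ and is exactly the factor you just absorbed into $\widetilde{O}$; it cannot simultaneously supply a full $1/p_{\min}$. You then recover the missing factor by multiplying the ``per-call relative-precision cost $\widetilde{O}(1/p_{\min})$'' and the ``log-polynomial degree $\widetilde{O}(1/p_{\min})$'' together---yet these are the two \emph{alternative} implementations you offered for step~(i), not two costs that compound. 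A single QSVT layer of degree $\widetilde{O}(1/p_{\min})$ already replaces any inner amplitude estimation of the individual $p_j$. Run cleanly, your outline yields only one factor of $1/p_{\min}$ times the outer $1/\epsilon$; the second $1/p_{\min}$ in the cited bound comes from specifics of how Chowdhury et al.\ aggregate the Taylor terms through amplitude estimation, not from any of the mechanisms you name. Either trace their construction to locate that second factor honestly, or state clearly that your route (if it goes through) would improve on the quoted complexity---but as written, the justification for $1/p_{\min}^{2}$ is a double count.
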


\textbf{Derivation of the gradient expression for DBMs. }This derivation is relatively easy because we can use the property $\partial_\theta e^{-H} = -\partial_\theta H e^{-H}$ with a diagonal Hamiltonian $H$. Define $Z_{\bm{s,a}} = \sum_{\bm{h}} e^{-H(\bm{s,a,h})}$, then
\begin{align*}
\frac{\partial F}{\partial \theta}(\bm{s,a}) &= - \frac{\partial}{\partial \theta} \log\left( Z_{\bm{s,a}} \right)\\
&= -\frac{1}{Z_{\bm{s,a}}} \frac{\partial Z_{\bm{s,a}}}{\partial \theta}\\
&= -\frac{1}{Z_{\bm{s,a}}} \sum_{\bm{h}} \frac{\partial e^{-H(\bm{s,a,h})}}{\partial \theta}\\
&= \frac{1}{Z_{\bm{s,a}}} \sum_{\bm{h}} \frac{\partial H(\bm{s,a,h})}{\partial \theta} e^{-H(\bm{s,a,h})}\\
&= \sum_{\bm{h}} P(\bm{h}|\bm{s,a}) \frac{\partial H(\bm{s,a,h})}{\partial \theta}\\
&= \langle \partial_\theta H \rangle_{P(\bm{h}|\bm{s},\bm{a})}
\end{align*}
where $\partial_\theta H$ is the $\sigma^{z}_{l}$ or $\sigma^{z}_{l}\sigma^{z}_{l'}$ operator associated to the parameter $\theta$ (or equivalently the random variables associated to units $l$ and $l'$ of the DBM), depending if the latter is a bias or weight respectively. $P(\bm{h}|\bm{s,a})$ is the Gibbs distribution obtained by fixing the operators $\sigma^{z}$ associated to state and action units to the configuration $(\bm{s,a})$.\\

\textbf{Derivation of the gradient expression for QBMs. }Since the QBM Hamiltonian involves both $\sigma^x$ and $\sigma^z$ terms, we have that $H$ and $\partial_\theta H$ do not commute, and hence $\partial_\theta e^{-H} \neq -\partial_\theta H e^{-H}$. This makes the derivation of the gradient a bit more involved.\\ Define $Z_{\bm{s,a}} = \text{Tr}\left[\Lambda_{\bm{s,a}}e^{-H}\right]$, then
\begin{align*}
\frac{\partial F}{\partial \theta} &= - \frac{\partial}{\partial \theta} \log\left( Z_{\bm{s,a}} \right)\\
&= -\frac{1}{Z_{\bm{s,a}}} \frac{\partial Z_{\bm{s,a}}}{\partial \theta}\\
&= - \frac{\text{Tr}\left[\Lambda_{\bm{s,a}}\partial_\theta e^{-H}\right]}{\text{Tr}\left[\Lambda_{\bm{s,a}}e^{-H}\right]}
\end{align*}
Since $\partial_\theta e^{-H} \neq -\partial_\theta H e^{-H}$, we now rely on Duhamel's formula to express:
\begin{align*}
\partial_\theta e^{-H} &= - \int_0^1 e^{-\tau H} \partial_\theta H e^{(\tau-1) H} d\tau \\
\Rightarrow \Lambda_{\bm{s,a}}\partial_\theta e^{-H} &=  - \int_0^1 \Lambda_{\bm{s,a}} e^{-\tau H} \partial_\theta H e^{(\tau-1) H} d\tau
\end{align*}
We then use the circular permutation property of the trace to get:
\begin{align*}
\text{Tr}\left[\Lambda_{\bm{s,a}}\partial_\theta e^{-H}\right] = - \int_0^1 \text{Tr}\left[\partial_\theta H e^{(\tau-1) H} \Lambda_{\bm{s,a}} e^{-\tau H}\right] d\tau
\end{align*}
Using the eigendecompostion of $H_{QBM} = \sum_{\bm{v},\bm{h_{v}}} \lambda_{\bm{v},\bm{h_{v}}} \ket{\bm{v,h_v}}\bra{\bm{v,h_v}}$ (Appendix \ref{sec:bm_energy-based}), we find:
\begin{align*}
e^{(\tau-1) H} \Lambda_{\bm{v}} e^{-\tau H} &= \sum_{\bm{v'},\bm{h_{v'}}}e^{(\tau-1) \lambda_{\bm{v',h_{v'}}}} \ket{\bm{v',h_{v'}}}\bra{\bm{v',h_{v'}}}\\ &\quad\ \Lambda_{\bm{v}} \sum_{\bm{v'},\bm{h_{v'}}}e^{-\tau \lambda_{\bm{v',h_{v'}}}} \ket{\bm{v',h_{v'}}}\bra{\bm{v',h_{v'}}}\\
&= \sum_{\bm{h_{v}}} e^{-\lambda_{\bm{v,h_v}}}\ket{\bm{v,h_{v}}}\bra{\bm{v,h_{v}}}\\
&= \Lambda_{\bm{v}} e^{-H} \Lambda_{\bm{v}}
\end{align*}
and hence
\begin{align*}
\text{Tr}\left[\Lambda_{\bm{s,a}}\partial_\theta e^{-H}\right] &= - \int_0^1 \text{Tr}\left[\partial_\theta H \Lambda_{\bm{s,a}} e^{-H} \Lambda_{\bm{s,a}}\right] d\tau\\
&= - \text{Tr}\left[\partial_\theta H \Lambda_{\bm{s,a}} e^{-H} \Lambda_{\bm{s,a}}\right]
\end{align*}
and
\begin{align*}
\frac{\partial F}{\partial \theta} &=\frac{\text{Tr}\left[\partial_\theta H \Lambda_{\bm{s,a}} e^{-H} \Lambda_{\bm{s,a}}\right]}{\text{Tr}\left[ \Lambda_{\bm{s,a}} e^{-H}\right]}\\
&= \text{Tr}[(\partial_\theta H) \rho_{\bm{s,a}}]\\ 
&= \langle \partial_\theta H \rangle_{\rho_{\bm{s,a}}}
\end{align*}
where now $\partial_\theta H$ is the $\sigma^{x}_{k}$, $\sigma^{z}_{l}$ or $\sigma^{z}_{l}\sigma^{z}_{l'}$ operator associated to the parameter $\theta$.

\section{Coherent access to the transition matrix\label{sec:coherent-access}}

We want to implement the operation $\ket{i}_1 \ket{0}_2 \ket{0}_{aux} \rightarrow \ket{i}_1 \sum_{j \in N(i)} \sqrt{P_{ij}} \ket{j}_2 \ket{0}_{aux}$ coherently for all actions $i$ and for the transition matrix $P$ and neighborhood structure $N$ given by Metropolis-Hastings or Gibbs sampling (Appendix \ref{sec:walks}). In order to achieve this, we assume access to the description of the classical circuits that compute the operations $\ket{i}_1 \ket{j}_2 \ket{0} \rightarrow \ket{i}_1 \ket{j}_2 \ket{P_{ij}}$ and $\ket{i}\ket{0} \rightarrow \ket{i} \bigotimes_{j \in N(i)} \ket{j}$, having complexity $C(P)$ and $\mathcal{O}(\textrm{poly}(M))$ respectively. Using either Fredkin or Toffoli gates, it is possible to design quantum circuits that implement coherently the operation $\ket{i} \ket{0}^{\otimes M} \ket{0}_{aux} \rightarrow \ket{i} \ket{P_{i1}} \ldots \ket{P_{iM}} \ket{0}_{aux}$ with comparable complexities by \emph{reversibilization} (Sec.\ 3.2.5.\ of \cite{nielsen10}), where $M=\max_{a\in\mathcal{A}}N(a)$ and the $j\in N(i)$ have been renamed $1, \ldots, M$ for each $i$ for simplicity. Using this circuit and a procedure called \emph{coherent controlization} \cite{dunjko15,grover02}, it is possible to implement the coherent access to the columns of the transition matrix $P$ using $\mathcal{O}(M\times C(P)+\textrm{poly}(M))$ operations, same as classically. This implementation goes as follows:
\begin{widetext}
\begin{align*}
&\ket{i}_{1} \ket{0}^{\otimes M}_{aux} \ket{0}^{\otimes M-2}_{aux} \ket{0}^{\otimes\log(M)}_2\\
&\rightarrow \ket{i}_{1} \ket{P_{i1}} \ldots \ket{P_{iM}} \ket{0}^{\otimes M-2}_{aux} \ket{0}^{\otimes\log(M)}_2\ \#\ \textrm{cost } M\times C(P) + \mathcal{O}(\textrm{poly}(M))\\
&\rightarrow \ket{i}_{1} \ket{P_{i1}} \ldots \ket{P_{iM}} \ket{P_{i1} + \ldots + P_{i\frac{M}{2}}} \ldots \ket{P_{i,\frac{M}{2}-1} + P_{i\frac{M}{2}}} \ket{0}^{\otimes\log(M)}_2\ \#\ \textrm{cost } M-2 \textrm{ additions}\\
&\rightarrow \ket{i}_{1} \ket{P_{i1}} \ldots \ket{P_{iM}} \ket{P_{i1} + \ldots + P_{i\frac{M}{2}}} \ldots \ket{P_{i,\frac{M}{2}-1} + P_{i\frac{M}{2}}} \sum_{j=1}^{M}\sqrt{P_{ij}}\ket{j}_{2}\ \#\ \textrm{cost of CC: } \mathcal{O}(\textrm{poly}(M))\\
&\rightarrow \ket{i}_{1} \ket{0}^{\otimes M}_{aux} \ket{0}^{\otimes M-2}_{aux} \sum_{j=1}^{M}\sqrt{P_{ij}}\ket{j}_{2}\ \#\ \textrm{cost } M\times C(P) + \mathcal{O}(\textrm{poly}(M))
\end{align*}
\end{widetext}

\section{Details of the numerical simulations\label{sec:app_num_sim}}

\textbf{Hyperparameters. }For the simulations of GridWorld, we chose ADAM as an optimizer with a learning rate of $\alpha=0.0001$. 
The training batches of the main network contain $100$ samples and the replay memory has a capacity of $5000$ interactions. 
The target network is updated every $100$ trials while the experience replay memory is updated every $100$ timesteps (in a First-In-First-Out manner).
We chose an exponential annealing schedule for the glow parameters of the PS update rule that starts with a value of $0.9$ and ends with a value of $0.99$.
The agent's policy is a softmax policy and for the $\beta$-parameter we choose a hyperbolic tangent schedule that starts with a value of $0.001$ and ends with a value of $0.8$. 
The neural networks have $64$ units per layer.

For the simulations of CartPole, in order to have a fair comparison of the performance of neural networks with different depths, we set their number of units per hidden layer such that all neural networks have approximately the same total number of weights and biases. More precisely, we set the number of units per layer to be $10$, $19$, $73$ for neural networks with respectively $5$, $2$ and $1$ hidden layer(s), leading to a total number of weights and biases of $525$, $528$, $526$ respectively.\\
For neural networks with $5$ hidden layers, we chose ADAM as an optimizer with a learning rate of $\alpha=0.001$.  
The training batches of the main network contain $100$ samples and the replay memory has a capacity of $2000$ interactions. 
The target network is updated every $10000$ trials while the experience replay memory is updated every $5$ timesteps (in a First In First Out manner).
For neural networks with $1$ and $2$ hidden layer(s), we chose ADAM as an optimizer with a learning rate of $\alpha=0.01$.  
The training batches of the main network contain $200$ samples and the replay memory has a capacity of $2000$ interactions. 
The target network is updated every 5000 trials while the experience replay memory is updated every $5$ timesteps (in a First In First Out manner).
For all simulations, we chose an exponential annealing schedule for the glow parameters of the PS update rule that starts with a value of $0.5$ and ends with a value of $0.99$.
The agent's policy is a softmax policy and for the $\beta$-parameter we choose a hyperbolic tangent schedule that starts with a value of $0.001$ and ends with a value of $1$.\\

For the simulations of the policy-sampling experiment, in order to have a fair comparison of performance between DEBNs and DQNs, we chose to keep their total number of weights fixed to $10440$, i.e., such that the mimimal number of hidden units par layer of the DQNs are $10$ in all instances of the experiment. For a given number of actions, we adapt the size of the two hidden layers of the networks such that both layers have the same number of hidden units.
We chose ADAM as an optimizer with a learning rate of $\alpha=0.001$.  
The training batches of the neural network contain $10$ samples. 
The agent's policy is a softmax policy with a $\beta$-parameter set to $1$.\\

For the simulations of the reward-discounting experiment, in order to have a fair comparison of performance between DEBNs and DQNs, we chose to keep their total number of weights fixed to $1900$, i.e., such that the mimimal number of hidden units par layer of the DQNs are $10$ in all instances of the experiment. For a given number of actions, we adapt the size of the two hidden layers of the networks such that both layers have the same number of hidden units.
We chose ADAM as an optimizer with a learning rate of $\alpha=0.01$.  
The training batches of the main network contain $10$ samples and the replay memory has a capacity of $2000$ interactions. 
The target network is updated every $100$ timesteps while the experience replay memory is updated (along with the main network) every $10$ timesteps (in a First In First Out manner).\\

For the simulations of the circular GridWorld experiment, in order to have a fair comparison of performance between DEBNs and DQNs, we chose to keep their total number of weights fixed to $2460$, i.e., such that the mimimal number of hidden units par layer of the DQNs are $10$ in all instances of the experiment. For a given number of actions, we adapt the size of the two hidden layers of the networks such that both layers have the same number of hidden units.
We chose ADAM as an optimizer with a learning rate of $\alpha=0.01$.  
The training batches of the main network contain $100$ samples and the replay memory has a capacity of $5000$ interactions. 
The target network is updated every $500$ timesteps while the experience replay memory is updated (along with the main network) every $10$ timesteps (in a First In First Out manner).
The agent's policy is a softmax policy and for the $\beta$-parameter we choose a linear schedule that starts with a value of $1$ and ends with a value of $100$. 
\end{document}